\documentclass[sigconf]{acmart}
\usepackage{popets}

\setcopyright{popets}
\copyrightyear{YYYY}

\acmYear{YYYY}
\acmVolume{YYYY}
\acmNumber{X}
\acmDOI{XXXXXXX.XXXXXXX}
\acmISBN{}
\acmConference{Proceedings on Privacy Enhancing Technologies}
\usepackage{amsmath,amsfonts,bm}
\usepackage{amsthm}

\def\eqref#1{equation~\ref{#1}}
\def\1{\bm{1}}

\def\eps{{\epsilon}}

\DeclareMathAlphabet{\mathsfit}{\encodingdefault}{\sfdefault}{m}{sl}
\SetMathAlphabet{\mathsfit}{bold}{\encodingdefault}{\sfdefault}{bx}{n}

\usepackage{multirow}
\usepackage[shortlabels]{enumitem}
\usepackage{booktabs}
\usepackage{nicefrac}
\usepackage{makecell}
\usepackage{subfig}

\usepackage{amssymb}
\usepackage{amsmath,amsthm}
\usepackage{graphicx}
\usepackage{adjustbox}

\theoremstyle{plain}
\newtheorem{theorem}{Theorem}
\newtheorem{lem}[theorem]{Lemma}

\newtheorem{defn}{Definition}

\usepackage[linesnumbered,ruled,vlined]{algorithm2e}
\usepackage{pgfplots}
\usepgfplotslibrary{groupplots}
\pgfplotsset{compat=1.18}
\usepackage{amsmath,amsfonts,bm}
\usepackage{placeins}
\usepackage{array}

\usepackage{tikz}
\usepackage{comment}
\usetikzlibrary{positioning}
\usepackage{float}

\usepackage{xspace}

\providecommand{\name}{{Fre-E2T}\xspace}

\usepackage{pifont}

\providecommand{\todo}[1]{}

\usepackage{colortbl}
\usepackage{tcolorbox}

\newcounter{takeaway}

\definecolor{myblue}{RGB}{31,120,180}
\definecolor{mybblue}{RGB}{166,206,227}
\definecolor{mygreen}{RGB}{251,154,153}
\definecolor{myred}{RGB}{51,160,44}
\definecolor{mybred}{RGB}{178,223,138}

\pgfplotsset{
    colormap={mygreen}{rgb255(0cm)=(254,254,254);rgb255(1cm)=(4,101,53)},
    colormap={myblue}{rgb255(0cm)=(160,30,50); rgb255(1cm)=(39,59,129)},
    colormap={myred}{rgb255(0cm)=(254,254,254); rgb255(1cm)=(160,30,50)},
    colormap={mybred}{rgb255(0cm)=(254,254,254); rgb255(1cm)=(217,10,100)},
    colormap={mybblue}{rgb255(0cm)=(217,10,100); rgb255(1cm)=(10,157,217)},
    colormap={mybgreen}{rgb255(0cm)=(254,254,254); rgb255(1cm)=(150,200,120)},
    colormap={mygreen_r}{rgb255(0cm)=(4,101,53); rgb255(1cm)=(254,254,254)}
}
\definecolor{blue}{RGB}{31,120,180}
\definecolor{bblue}{RGB}{166,206,227}
\definecolor{darkred}{RGB}{139,26,26}
\definecolor{bred}{RGB}{217,10,100}
\definecolor{green}{RGB}{51,160,44}
\definecolor{bgreen}{RGB}{178,223,138}
\usetikzlibrary{patterns}

\usepackage{xparse}
\newcounter{promptcounter}
\NewDocumentCommand{\prompt}{o m}{%
  \refstepcounter{promptcounter}%
  \begin{tcolorbox}[left=5pt, right=5pt, top=3pt, bottom=3pt]
    \textbf{Prompt \thepromptcounter:}%
    \IfNoValueF{#1}{\label{#1}}%
    ~#2
  \end{tcolorbox}%
}

\usepackage{listings}
\lstdefinestyle{queryStyle}{
    basicstyle=\ttfamily\footnotesize,
    breaklines=true,
    breakatwhitespace=true,
    frame=single,
    numberstyle=\tiny,
    stepnumber=1,
    showstringspaces=false,
    tabsize=2,
    captionpos=b,
    backgroundcolor=\color{white},
    rulecolor=\color{black!100},
}

\begin{document}
%==============================================================================

\title{Distributed and Private Textual Data Synthesis from Embeddings}

% Empty location fields are retained as required by the camera-ready template.
\author{Ergute Bao}
\orcid{0000-0002-4438-8065}
\affiliation{%
  \institution{Inria}
  \city{}
  \state{}
  \country{}
 }
\email{baoergute8@gmail.com}

\author{Hongyan Chang}
\affiliation{%
  \institution{MBZUAI}
  \city{}
  \state{}
  \country{}
  }
\email{changhongyan0530@gmail.com}

\author{Ali Shahin Shamsabadi}
\affiliation{%
  \institution{Brave Software}
  \city{}
  \state{}
  \country{}
}
\email{ashahinshamsabadi@brave.com}

\author{Ting Yu}
\affiliation{%
 \institution{MBZUAI}
 \city{}
 \state{}
 \country{}
 }
\email{Ting.Yu@mbzuai.ac.ae}

\author{Xiaokui Xiao}
\affiliation{%
 \institution{National University of Singapore}
 \city{}
 \state{}
 \country{}
 }
\email{xkxiao@nus.edu.sg}

\renewcommand{\shortauthors}{Bao et al.}

\keywords{differential privacy, text synthesis, embedding}

\begin{abstract}
We revisit differentially private (DP) text synthesis in the realistic setting of distributed users, where privacy concerns preclude a trusted curator with access to raw user texts. Existing DP text synthesis pipelines are designed for a trusted, centralized curator and often cannot be deployed in distributed settings due to unrealistic trust and access assumptions; when adapted naively, they require repeated, tightly synchronized user participation and incur significant overhead. To address this gap, we propose a DP--cryptography co-design for textual data synthesis that requires no trusted curator and requires only lightweight user participation. Our approach has two optimized components. First, we design a distributed-friendly DP synthesis algorithm that releases a one-time DP summary in an embedding space: it identifies frequent semantic regions and releases their DP centroids, enabling training-free, non-iterative offline text synthesis. We further introduce semantic support protection, which ensures the released summary avoids semantic neighborhoods of infrequent texts, reducing the risk of exposing rare user data. Second, we develop a custom secure protocol that implements this algorithm over distributed user data, enforcing end-to-end DP guarantees without requiring a trusted curator. On four benchmarks, we achieve utility comparable to the state-of-the-art centralized DP synthesis method.
\end{abstract}

\maketitle

% ----- Main body -----------------------------------------------------------
% The camera-ready main-body limit is 13 pages.
\section{Introduction}\label{sec:intro}

Users generate massive volumes of text on their devices~\cite{apple2025intelli,openai2023gpt4systemcard,casper2023openaiusage,anthropicclaude37sonnetsystemcard2025,braveleohelp2025,geminiteam2023gemini,cheu2025provablyprivateanalyticsinsights,joachims2002optimizing,zhang2020privacysearch}. These texts contain valuable insights into user needs and preferences as well as failure modes of deployed systems. However, collecting and analyzing raw user text poses significant privacy risks~\cite{hitaj2017deep,carlini2021extracting,ding2017collecting,appledpcms2017,Simonite2016,arghire2022samsung,openai2023outage} as texts inherently contain sensitive information.
\looseness=-1

Differentially private (DP) text synthesis~\cite{doi:10.1126/sciadv.abk3283,Ridgeway2021DPChallenge,Freiman2018FormalPrivacySyntheticACS} is a promising approach to obtain population-level insights from user texts. DP~\cite{dmns} offers a robust, worst-case privacy guarantee for individual data against strong adversaries. In the ideal workflow, one releases a \emph{synthetic textual dataset} generated under DP; the synthetic data can then be used for diverse downstream tasks such as data mining or model training without incurring additional privacy risks~\cite{lin2024differentially,Li2021LargeLM,yueetal2023synthetic}.

Most existing DP text synthesis methods are proposed in the centralized setting and typically adopt the following paradigms: (i) \textbf{DP fine-tuning}: Fine-tuning a generative language model (LM) with DP-SGD~\cite{abadi} or its variants on private texts, then sample synthetic texts from the fine-tuned model~\cite{yueetal2023synthetic,flemingsannavaram2024differentially,bommasani2019towards}; (ii) \textbf{API-based private evolution}: Iteratively refining synthetic text candidates generated from public LLM APIs using feedbacks from private texts~\cite{xie2024differentially,chien2026maplemetadataaugmentedprivate}. Such paradigms suffer from four main drawbacks. \looseness=-1

\vspace{0pt}
\noindent\textbf{(1) Trusted curator assumption.}
These approaches are designed for \emph{settings with a trusted curator} that holds the full dataset~\cite{yueetal2023synthetic,chien2026maplemetadataaugmentedprivate,flemingsannavaram2024differentially,bommasani2019towards,xie2024differentially}. In practice, user texts are often distributed and users may not trust a centralized curator with their sensitive data, as evidenced by recent data misuse and technical failures~\cite{arghire2022samsung,openai2023outage}.\looseness=-1

\vspace{0pt}
\noindent\textbf{(2) Repeated user participation in distributed deployments.}
When private texts originate from cross-device populations, straightforward deployments of iterative synthesis pipelines become \emph{multi-round} protocols that require repeated user participation and synchronization~\cite{flemingsetal2024differentially,pretext,hou2025private,xie2024differentially}. Such interactions are often misaligned with cross-device reality, where device availability is intermittent and communication is expensive~\cite{DBLP:journals/corr/KonecnyMR15,pmlrv54mcmahan17a,9084352,bonawitz2019towards}. These constraints are worsened under DP due to the need for sufficient participation per round to offset noise and maintain utility~\cite{pmlrv139kairouz21b,unlocking}.

\vspace{0pt}
\noindent\textbf{(3) Expensive computation and restrictive assumptions.}
DP fine-tuning methods require access to gradients (and often to model parameters)~\cite{yueetal2023synthetic,flemingsannavaram2024differentially,bommasani2019towards,hou2025private,flemingsetal2024differentially}, whereas many capable models are available only through APIs. Even when gradients are available, DP fine-tuning can be prohibitively costly due to gradient computation and communication~\cite{pmlrv139kairouz21b,Li2021LargeLM,liliang2021prefix,hu2022lora,yu2022differentially,jianhaoetal2024promoting,sun2024privatefederateddiscoveryoutofvocabulary}. On the other hand, even though API-based private evolution methods do not compute gradients, they can still be costly---e.g., Aug-PE~\cite{xie2024differentially} generates thousands of candidate
texts per iteration and broadcasts them to every participating user, who must compute nearest-neighbour distances in an embedding space and return a DP vote, blocking the next iteration, which can take hours across tens of iterations. Moreover, existing approaches often rely on \emph{publicly available labels}~\cite{yueetal2023synthetic,xie2024differentially} or \emph{metadata information}  for private texts, which are frequently unavailable or too expensive/sensitive to obtain.

\vspace{0pt}
\noindent\textbf{(4) Limited protection.}
Finally, standard DP ensures on the plausible deniability of membership information~\cite{dpbook,wood2018primer}, but does not explicitly prevent outputs from lying near a particular \emph{infrequent} user data in the embedding space. This can be undesirable, motivating an additional protection tailored to text synthesis.\looseness=-1

\subsection{Our Contributions}
\label{sec:intro-our}

We address the aforementioned gaps through a
holistic approach.We design a DP text synthesis algorithm tailored to the distributed setting, alongside a customized secure computation protocol.

Our algorithm constructs a DP summary via two steps without any costly model fine-tuning: (i) identifying \emph{frequent semantic regions} in the embedding space, and (ii) summarising embeddings within those regions into a privacy-preserving aggregated embedding called \emph{DP centroids}. This DP summary can be post-processed to produce synthetic texts without further privacy loss. 

For step (i), we use the Johnson--Lindenstrauss projection
followed by randomized grid bucketing as a one-shot, training-free
\emph{distributed} clustering, replacing the existing iterative pipelines. For step (ii), we derive a \emph{probabilistic} sensitivity
bound from data-independent randomness, which lets us calibrate DP noise to a high-probability radius
inside each frequent bucket, with the failure event absorbed into the
$\delta$ term of the overall DP guarantee.

Beyond standard DP, we formalize \emph{semantic support protection}. This privacy notion strengthens protection for semantically infrequent user texts by ensuring that, with a high probability, the algorithm output avoids regions near \emph{infrequent} texts in the embedding space (see Section~\ref{sec:semantic-support} for details). We provide theoretical analysis and empirical validation of this property. We emphasize that DP does not directly lead to/violate \emph{semantic support protection}.

We develop a secure distributed implementation of our algorithm in Section~\ref{sec:sol}. Our inherently \emph{distributed-friendly} DP algorithm supports local computation and requires only two phases of user participation. At a high level, each user locally maps their text to an index via sequential embedding, projection, and partitioning, and then shares the encrypted index, with some probability. After users estimate the frequency of their texts, they perform secure, perturbed aggregation of text embeddings associated within the same heavy-hitting group of semantically similar embeddings. The resulting DP centroids enable subsequent text synthesis entirely offline while avoiding model fine-tuning, additional user participation, and further privacy cost. We summarize our contributions as follows.\looseness=-1

\noindent\textbf{(1) New DP algorithm.}
We propose a new training-free DP synthesis algorithm based on a DP summary in an embedding space. 

\noindent\textbf{(2) New privacy notion.}
We introduce \textit{semantic support protection}, which complements DP by limiting proximity to \emph{semantically infrequent} user texts in the embedding space, and provide theoretical and empirical evidence for this property.

\noindent\textbf{(3) Distributed-friendly algorithm design.}
We show that the core computations of the algorithm are simple aggregations admitting an efficient distributed implementation with only \emph{two phases} of user participation. \emph{We do not claim new cryptography protocols.} \looseness=-1

\noindent\textbf{(4) High utility.}
We comprehensively evaluate our approach using four datasets, most of which are not associated with publicly available labels or metadata information. Our algorithm achieves utility comparable to the state-of-the-art \emph{centralized} DP baseline, while offering superior flexibility for deployment in distributed environments and stronger protection for infrequent texts.

\section{Problem Definition}
\label{sec:prob}

\providecommand{\Stag}{\ensuremath{S_{\mathsf{tag}}}\xspace}   
\providecommand{\Ssyn}{\ensuremath{S_{\mathsf{syn}}}\xspace}  

We study differentially private text synthesis. Let $V = \{v_i\}_{i=1}^{N}$ be a dataset of $N$ texts in which $v_i$ is
the single text held by user~$i$ on their own device. Our goal is to release a high-utility
synthetic textual dataset $\widetilde{V}$ while protecting user-level privacy. 

We target settings in which no single party may hold the raw texts:
there is no trusted curator that collects user texts off-device. Instead, we
assume there are two servers
that do not collude. We name the
two servers based on their roles: (i) The \emph{tagging server} \Stag{} assists users in encoding their reports into a random identifier, so that equal inputs receive equal identifiers, without itself learning either the inputs or the identifier. (ii) The \emph{synthesis server} \Ssyn{} receives the encoded user reports, which is then used to produce a DP summary of users, and derives the synthetic texts from it. Section~\ref{sec:agg} explains the two roles concretely. 

\vspace{0pt}
\noindent\textbf{Threat model.} All parties, including both servers and all users, are honest-but-curious: they follow the protocol but may try to infer
information about other users' texts from their views. Neither server is trusted with any user's text or embedding vector. A user
communicates with servers in at most two short phases and never communicate with another user. We discuss potential extensions to malicious settings in Appendix~\ref{app:ext}.

The servers \Stag{} and \Ssyn{} do not collude. It is a common assumption in two-server deployments~\cite{securedpmedian,star,sparsetwoserver,DiscreteGaussian,nebula,hashpruneinvert,SecureML} which can be supported by separate operators or collusion cost due to legal and regulatory constraints~\cite{10.1145/1060590.1060671,10.1007/978354085174528,10.1007/978303122365517}. Every message a user/server sends to a server/user is through an anonymizing proxy, which was also used in prior work~\cite{rappor,prochlo,nebula,sparsetwoserver,DiscreteGaussian,secagg}. 

We assume that both users and servers have black-box access to an embedding model $E$ that maps a text to a $d$-dimensional embedding vector, with $d$ fixed.
In particular, $E$ maps semantically similar texts to nearby vectors.
Access to such a semantics-preserving embedding model was also assumed in prior work~\cite{xie2024differentially}.

We treat the raw private text $v_i$ as a whole piece of user information to protect.
We \textit{do not assume any publicly available label information} associated with $v_i$ that can be used for free, such as domain/category labels for user reviews~\cite{yueetal2023synthetic,xie2024differentially,pretext,hou2025private}. We also do not assume that there is a trained model that extracts metadata information of the user data~\cite{chien2026maplemetadataaugmentedprivate}, since such data may be sensitive and are not used for any model training.

\subsection{Differential Privacy Requirement}\label{sec:privacy-req-DP}

Differential privacy (DP)~\cite{dmns} is commonly regarded as the gold privacy standard for data synthesis~\cite{doi:10.1126/sciadv.abk3283,Ridgeway2021DPChallenge,Freiman2018FormalPrivacySyntheticACS,10.1257/aer.20170627}.

\begin{defn}[Differential Privacy~\cite{dmns}]\label{def:dp}
Let $\mathcal{M}$ be a randomized algorithm with output domain $\text{range}(\mathcal{M})$.
Then $\mathcal{M}$ satisfies DP if, for any neighboring datasets $V$ and $V'$, the probability distributions of $\mathcal{M}(V)$ and $\mathcal{M}(V')$ are similar as quantified by $\varepsilon$ and $\delta$, i.e.,
\begin{align}\label{eq:dp}
\hspace{-0.5em}   \max_{\mathcal{O}\subseteq \text{range}(\mathcal{M})} \Pr[\mathcal{M}(V)\in \mathcal{O}] \le e^{\varepsilon}\cdot\Pr[\mathcal{M}(V')\in \mathcal{O}]+\delta.
\end{align}
Here, $V$ and $V'$ are neighboring datasets if one can be obtained from the other by adding or deleting a single record (i.e., a single text).
\end{defn}

DP provides plausible deniability for the membership of any user text: with or without it, the output distribution is similar.
Smaller $\varepsilon$ and $\delta$ correspond to stronger privacy protection; and vice versa. Every parameter of the DP mechanism we propose, including the randomness it used, is public unless otherwise stated. Such randomness is fixed before any user input and is known to the DP adversary. The probability of an unfavorable draw is absorbed into $\delta$.\looseness=-1

\subsection{Semantic Support Protection (SSP)}\label{sec:semantic-support}
DP does not preclude the mechanism from emitting outputs that are close to a unique or rare user text in the embedding space. The output of a DP mechanism can appear close to an infrequent user text, which  may violate user expectations and can be undesirable~\cite{wood2018primer,acquisti2015privacy}. This motivates us to consider an additional protection tailored to text synthesis, named \textit{semantic support protection}. The aim is to prevent a DP mechanism from emitting outputs near semantically infrequent texts \textit{except with small probability}.
We first define semantically infrequent texts.

\begin{defn}[$(r,t)$-Semantically Infrequent]\label{def:euclidean-secret}
Given $r>0$ and positive integer $t$, textual dataset $V$, and embedding model $E$, we say a data point $v^*\in V$ is $(r,t)$-semantically infrequent (for $V$ and $E$) if fewer than $t$ data points from $V$ are within distance $r$ of $v^*$ in the embedding space:
\begin{align}
    |\{j\in[N]\textrm{ such that } \|E(v_j) - E(v^*)\|_2 \le r \}| < t.\label{eq:frequent}
\end{align}
\end{defn}

We denote the set of $(r,t)$-semantically infrequent and frequent texts in $V$ as $V_{\text{infre}}(r,t)$ and $V\setminus V_{\text{infre}}(r,t)$, respectively.

\begin{defn}[$(r,t,r^*,\delta^*)$-Semantic Support Protection]\label{def:semantic-protect}
Let $\mathcal{M}$ be an $(\varepsilon,\delta)$-DP mechanism that outputs a set of texts $\widetilde V$.
We say that $\mathcal{M}$ satisfies $(r,t,r^*,\delta^*)$-semantic support protection (SSP) if for any $(r,t)$-semantically infrequent $v^*\in V_{\text{infre}}(r,t)$,
\begin{align}
    \Pr\Big[\exists\, \widetilde v \in \widetilde V \textrm{ such that } \| E(\widetilde v) - E(v^*)\|_2 \le r^*\Big] \le \delta^*,
\end{align}
where $r^*>0$, and $\delta^*\in(0,1)$ is the failure probability.
% The probability is over the randomness of $\mathcal{M}$.
\end{defn}

\noindent\textbf{Difference from DP, LDP, and metric-DP.} This definition provides an additional privacy layer beyond DP~\cite{dmns} and local DP~\cite{privately} by explicitly limiting the probability that released outputs lie within radius $r^*$ of any semantically infrequent input in the embedding space. As we will see in Figure~\ref{fig:semantic}, a standard DP method for text synthesis such as Aug-PE~\cite{xie2024differentially} does not automatically achieve SSP. Indeed, DP or LDP permits events such as ``the output lies close to an infrequent input text'' to occur with non-negligible probability as long as the probability ratio under neighboring datasets is bounded.  SSP explicitly complements this limitation. Metric-LDP~\cite{Geoind,cgdp,tokenperturb,wordleveldp} guarantees that outputs cannot distinguish embeddings that are close in metric, such as the $\mathcal{L}_2$ distance measured on the embedding space. Similar to DP and LDP, metric-LDP does \emph{not} explicitly bound the probability of emitting a
value near a rare user input. \looseness=-1

At first sight, requiring the released output to \emph{avoid} a neighbourhood
of an infrequent input $v^*$ may look like it could distinguish the
dataset $V$ that contains $v^*$ from the neighbour $V'=V\setminus\{v^*\}$
that does not. The key observation is that: if $v^*$ is
$(r,t)$-infrequent in $V$ then it is at least $(r,t-1)$-infrequent in $V'$,
and conversely a point that is infrequent under $V'$ remains infrequent
under $V$. Definition~\ref{def:semantic-protect} therefore applies to the
\emph{same} $v^*$ under both $V$ and $V'$; and the neighborhood location $E(v^*)$ is never released. 

\noindent\textbf{Simplified definition.} Setting $r^*=r$, we obtain the following simplified SSP.

\begin{defn}[$(r,t,\delta^*)$-Semantic Support Protection]\label{def:semantic-protect-simplified}
Mechanism $\mathcal{M}$ satisfies $(r,t,\delta^*)$-SSP if for any $v^*\in V_{\text{infre}}(r,t)$,
\begin{align}
    \Pr\Big[\exists\, \widetilde v \in \widetilde V \textrm{ such that } \|E(\widetilde v)  - E(v^*)\|_2 \le r\Big] \le \delta^*.
\end{align}
\end{defn}

\section{Related Work}\label{sec:rel}
As we have mentioned in Section~\ref{sec:intro}, DP text synthesis has been studied primarily in the \emph{centralized} setting with a trusted curator holding the full private dataset~\cite{yueetal2023synthetic,chien2026maplemetadataaugmentedprivate,flemingsannavaram2024differentially,bommasani2019towards,xie2024differentially,liu2025uraniadifferentiallyprivateinsights}. Adapting the existing paradigms to distributed settings has been explored in recent work, but existing solutions typically retain at least one of the following requirements: gradient/parameter access to an LM for training, publicly available labels/categories for private texts, or \emph{multi-round} synchronized user and server participation.
For example, methods that update or optimize models using user feedback, such as POPri~\cite{hou2025private} and federated DP fine-tuning variants~\cite{flemingsetal2024differentially}, require gradient access and repeated rounds with tight synchronization between users and the server. Similarly, distributed adaptations of private evolution~\cite{pretext} require users to repeatedly process server-provided public candidates and return DP-perturbed signals in each round.
Such designs can be difficult to scale in cross-device environments, where user availability is intermittent and communication is costly~\cite{DBLP:journals/corr/KonecnyMR15,pmlrv54mcmahan17a,9084352,bonawitz2019towards}, and these constraints are further amplified under DP due to the need for sufficient participation per round to offset noise~\cite{pmlrv139kairouz21b,unlocking}.

A complementary line of work directly perturbs token, sentence, or document
embeddings under \emph{metric-LDP} (also called
$d$-privacy), so that the indistinguishability guarantee scales with a
distance in embedding space rather than being uniform across all
pairs~\cite{wordleveldp,tokenperturb,klymenkoetal2022differential,matternetal2022differentially}. We target the standard
\emph{text-level} $(\varepsilon,\delta)$-DP rather than a distance-scaled
guarantee. 

Beyond formal DP mechanisms, recent works have explored privacy-preserving analysis of user text corpora via layered, non-DP protections. For example, Clio~\cite{tamkin2412clio} uses LLM-based summarization and embedding-based clustering to surface aggregated usage patterns, and then applies multiple \emph{privacy layers} such as PII-avoiding prompts and automated auditing to reduce exposure of private information. Clio focuses on the output utility rather than worst-case privacy guarantees such as DP, making it prone to attacks~\cite{annamalai2026cliopatraextractingprivateinformation}. There are also some other work in the literature, e.g., secret protection~\cite{wang2025secretprotectedevolutiondifferentiallyprivate,ganesh2025hushprotectingsecretsmodel}, privacy with respect to sensitive references~\cite{vinod2025invisibleinkhighutilitylowcosttext}, and control codes~\cite{zhao2025controlledgenerationprivatesynthetic}. These methods often rely on assumptions about sensitive information and prior knowledge regarding the data distribution, and sometimes the computation power of an adversary.

% \vspace{0pt}
% \noindent\textbf{Our positioning.}
% Compared with prior DP fine-tuning and private evolution approaches, our main contribution is a new \emph{non-iterative}, training-free DP text synthesis mechanism based on releasing a one-time DP summary in an embedding space, together with the new privacy notion of \emph{semantic support protection}.
% The mechanism is inherently \emph{distributed-friendly}: its core computations are simple aggregations that admit an efficient distributed implementation with only two phases of user participation.

\begin{figure*}[t!]
\centering
\includegraphics[width=0.95\textwidth]{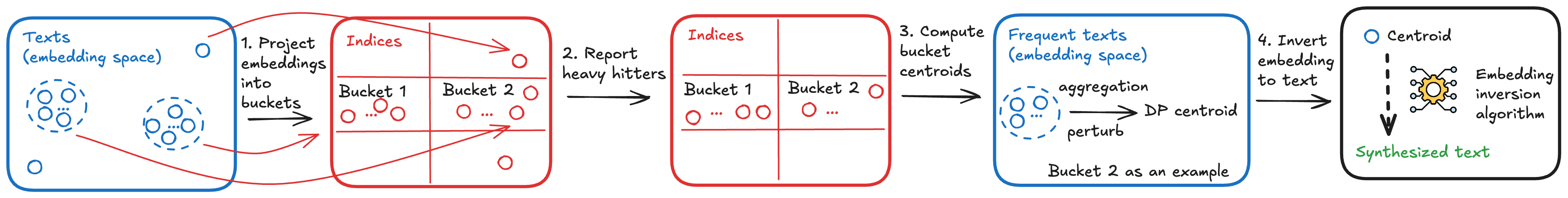}
\caption{Overall idea of \name, illustrated in the centralized setting.}
\Description{Text embeddings are projected into grid buckets. Frequent buckets are selected, their embedding centroids are aggregated and perturbed, and an embedding inversion model converts the private centroids into synthetic texts.}
\label{fig:centralized-illustration}
\end{figure*}

\section{Synthesizing Texts via Frequent Embeddings}\label{sec:idea}
We present our DP algorithm for synthesizing texts, named \name. \name achieves formal differential privacy guarantees for individual texts (Definition~\ref{def:dp}) and also provide semantic support protection for infrequent texts (Definition~\ref{def:semantic-protect}). Section~\ref{sec:sol} gives the two-server
protocol that realizes the functionality of Fre-E2T under the threat model introduced in Section~\ref{sec:prob}.

To reduce the number of traversals on input and therefore enable a lightweight distributed implementation, we first sketch the user texts as a summary, and then synthesize texts from the summary.

\vspace{0pt}
\noindent\textbf{Clustering embedding vectors.} A canonical method for summarizing data is clustering~\citep{Lloydcluster,MacQueen1967KMeans,Xu2005SurveyClustering}. In clustering, similar data points are grouped together based on some measure. Such a group of data points is referred to as a cluster. The goal of clustering is to find representative structures to summarize the input dataset. A common approach for accomplishing that is to ensure that each input data point lies within a ball of some radius centered at a centroid, the Euclidean average of all points within the same cluster. We could apply this approach to text embeddings and obtain cluster centroids for input texts. 

\vspace{0pt}
\noindent\textbf{Inverting embeddings to texts.} For each centroid (a vector in $\mathbb{R}^d$), we can then use a function/model that inverts embedding vectors into textual data~\cite{morris-etal-2023-text,luo2025promptinferenceattackdistributed} to generate high-utility synthetic texts. Intuitively, if each centroid is near some data points, then the texts synthesized from the centroid should also bear similarity to the original texts corresponding to those data points.\looseness=-1 

\vspace{0pt}
\noindent\textbf{Technical Challenges.} Existing DP clustering methods generally rely on an iterative approach to refine centroids, based on their distances from each data point $x_i$: each iteration traverses the whole dataset in order to evaluate the distances from each data point to all the centroids, followed by updating the centroids to maximize the overall utility~\citep{dpem,sulq,DPLloyd,maxcoverdp}. To build a more distributed-friendly algorithm, we avoid such \emph{iterative updates} and computing \emph{pairwise distances} between data points, which would require user-user communication in the distributed setting. Furthermore, when computing the centroid for each cluster, standard worst-case DP analysis~\cite{abadi} requires an additive random noise whose scale is proportional to the \emph{maximum norm} of any data point, which is not ideal. Intuitively, we should be able to utilize the structural information of clusters to reduce the amount of noise needed for DP.

\subsection{The Complete Algorithm}\label{sec:idea-algo}
Our algorithm named \name (\textsf{Frequent Embedding to Text}) consists of four sequential steps, presented as Algorithm~\ref{alg:centralized} and Figure~\ref{fig:centralized-illustration}. \looseness=-1

\vspace{0pt}
\noindent\textbf{Step 1: Projection and Partitioning.} 
We first obtain the embedding vector \(x_i\in \mathbb{R}^d\) for each text. Each embedding vector $x_i = E(v_i)\in\mathbb{R}^d$ is obtained via an embedding function $E$. The entire matrix of embedding vectors is written as $X \in \mathbb{R}^{N \times d}$. Next, we project each vector to $\mathbb{R}^k$, by computing $y_i=\frac{1}{\sqrt{k}} x_iZ\in \mathbb{R}^k$. Here, $Z$ is a random $\mathbb{R}^{d\times k}$ matrix where each entry is sampled independently from the standard Gaussian distribution. Pairwise \(\mathcal{L}_2\) distances are preserved up to a small multiplicative distortion with high probability for Gaussian projectors, when $k$ is large enough, by the Johnson Lindenstrauss Lemma~\cite{DasguptaGupta2003,JohnsonLindenstrauss1984}. In other words, two records are close in \(\mathbb{R}^k\) if and only if their originals are close in \(\mathbb{R}^d\). As a result, data points with a high support in their neighborhood in $\mathbb{R}^d$ will still have a high support in $\mathbb{R}^k$. The Euclidean average of data points in the same cluster can thus serve as a good representation of them and frequent texts will be placed in a cluster with many close neighbors. 

To group nearby projected points without computing all pairwise distances, we partition the $k$-dimensional space into ``buckets''. To do that, we first discretize \(\mathbb{R}^k\) with a randomly shifted grid of edge length \(L>0\). In particular, for each coordinate \(j\in\{1,\dots,k\}\), a public random offset \(
o_j\) is sampled from \(\mathrm{Unif}[0,L).\)
The grid boundaries are placed at \(o_j + mL\) for all \(m\in\mathbb{Z}\).  The \(k\)-dimensional bucket index for user \(i\) is the integer vector
\(\textstyle
b_i \;=\; \bigl(b_{i,1},\dots,b_{i,k}\bigr),\) with \(\textstyle
b_{i,j} \;=\; \left\lfloor \frac{y_{i,j}-o_j}{L} \right\rfloor \in \mathbb{Z}.\) We refer to Figure~\ref{fig:grid-illustration} for an illustration. $b_i$ is then randomly permuted to obtain $h_i$. Vectors that are close to each other in $\mathbb{R}^k$ are likely to end up with the same index before and after the permutation; and vice versa~\cite{lsh1,lsh2,lsh3}. 

The edge length $L$ is fixed a priori as a public design parameter of the protocol. Choosing $L$ controls the granularity of the partitioning and clustering: a smaller 
$L$ yields finer buckets, making it less likely that two nearby points fall into the same bucket (thus producing more, smaller clusters), while a larger $L$ coarsens the grid and increases the probability that points which are potentially farther apart in $\mathbb{R}^d$ collide in the same $k$-dimensional bucket. As we will see later, $L$ also influences the privacy guarantees. Consequently, the appropriate choice of $L$ is application-dependent and should be guided by the desired trade-off between resolution and the privacy protection for user texts. 

\vspace{0pt}
\noindent\textbf{Step 2: DP heavy hitters.}  With the buckets assigned, we then count the number of data points in each bucket. For each data point, we flip a coin to decide whether it contributes to the corresponding bucket and filter out buckets with counts $<\tau$. This sample-and-threshold process separates heavy-hitting bucket indices from those that have not accumulated enough counts while introducing enough randomness to ensure DP~\cite{sandtdp} and also has an efficient distributed implementation~\cite{nebula}. Our design has avoided injecting random noises to each user's embedding vector directly as in local DP~\cite{privatemining,privately}, which would introduce a large perturbation to the results, compromising utility.

\vspace{0pt}
\noindent\textbf{Step 3: DP Centroid computation.} Now that we have identified the ``heavy-hitting buckets'', we can then compute the average of the original data points within each of these buckets in $\mathbb{R}^d$. In particular, each text that has contributed to a heavy-hitting bucket index (with $\text{coin}_i=1$) contributes to the vector sum; those that are not sampled during the last step are not counted. Additive random Gaussian noise is injected to the sum. For DP analysis, we avoid considering the worst-case sensitivity of the vector sum, which is the radius of the whole $d$-dimensional space. Instead, we rely on the crucial property that only data points that are close to each other can reside in the same cluster, except with a small error probability. Therefore, we only need to inject DP noise that is proportional to the radius of the cluster rather than the whole space. 

\vspace{0pt}
\noindent\textbf{Step 4: Embedding to text conversions.} Finally, from the centroids of each heavy-hitting bucket, we invert it to obtain synthesized texts. DP is preserved under post-processing~\cite{dmns}. 

\begin{algorithm}[t]
\DontPrintSemicolon
\KwIn{Text entries $\{v_1, \ldots, v_N\}$; embedding function $E$; random projection matrix $\mathbf{Z}$; bucket edge length $L$; random offsets $\{o_j\}_{j=1}^k$; random hash function $H$; subsampling rate $p_s$; threshold $\tau$; noise scale $\sigma$.}
% \KwOut{Set of synthetic texts $\mathcal{O}$.}
Initialize $\mathcal{B} \leftarrow \emptyset$, $\mathcal{S}\leftarrow \emptyset$, $\mathcal{O}\leftarrow \emptyset$ \;
\tcp{\textbf{1. Projection and Partitioning}}
\For{$i = 1, \ldots, N$}{
    Compute embedding $x_i \gets E(v_i) \in \mathbb{R}^d$ \;
    Compute projection $y_i \gets \frac{1}{\sqrt{k}} x_i\mathbf{Z} \in \mathbb{R}^k$ \;
    Identify bucket index $b_i$ based on $y_i$, $\{o_j\}_{j=1}^k$, $L$ \;
    Randomly permute $h_i=H(b_i)$.
}

\tcp{\textbf{2. DP Heavy Hitters}}
\For{$i = 1, \ldots, N$}{ 
    \If{$\text{coin}_i\sim \text{Bernoulli}(p_s)$ is $1$}{
        Add $x_i$ to bucket $\mathcal{B}[h_i]$\;
    }
}
\For{each bucket index $h$ in $\mathcal{B}$}{
    Count $n_h = |\mathcal{B}[h]|$ \;
    \If{$n_h < \tau$}{
        Remove bucket index $h$ from $\mathcal{B}$\;
    }
}

\tcp{\textbf{3. DP Centroid Computation}}
\For{each bucket index $h$ in $\mathcal{B}$}{
    Compute sum $S_h \gets \sum_{x_i \in \mathcal{B}[h]} \big(\text{coin}_i\cdot x_i\big)$ \;
    Compute  $\widehat{\mu}_h \gets \frac{1}{n_h} (S_h + \mathcal{N}(0, \sigma^2 \mathbf{I}_d))$\;
    Add $\widehat{\mu}_h$ to $\mathcal{S}$ \;
}
\tcp{\textbf{4. Embedding to Text Conversion}}

\For{each centroid $\widehat{\mu}_h$ in $\mathcal{S}$}{
    Synthesize text $\widetilde{v}_h \gets \text{Invert}(\widehat{\mu}_h)$\;
    \textbf{Output} $\widetilde{v}_h$
}
% \KwRet $\mathcal{O}$ \;
\caption{Fre-E2T
% : Algorithm for Centralized Differentially Private Text Synthesis
}
\label{alg:centralized}
\end{algorithm}

\section{Privacy Analysis}\label{sec:privacy}
\subsection{Differential Privacy Guarantees}\label{sec:privacy-dp}
Privacy loss comes from two components, identifying heavy hitting buckets and computing the centroid within each heavy hitting bucket. Without loss of generality, we work on the matrix of embedding vectors, written as $X$. The $i$-th row of $X$ is associated with text $i$. We consider a neighboring input $X'$ such that $X$ has an additional embedding vector $x$. We condition on the event that inputs $X$ and $X'$ work with the same projection matrix $Z$ and the same hash function $H$. We let the permuted bucket index of $x$ be $h$.

\noindent\textbf{Heavy Hitters.}
For the truncated heavy hitter histogram, it suffices to analyze the count of index $h$, when it reaches threshold $\tau$. 

\begin{lem}[Sample-and-threshold privacy~\cite{sandtdp}]\label{lem:s-and-t-dp} With $q:=1-(1-p_s)\,\exp(-\varepsilon_{fre})$ and \(\textsc{D}\big(q\,\|\,p\big):= q\log\frac{q}{p} + (1-q)\log\frac{1-q}{1-p}\), the heavy hitter histogram $\{(h,n_h)\}$ in Alg.~\ref{alg:centralized} satisfies $(\varepsilon_{fre},\delta_{fre})$-DP with
\begin{align}
\varepsilon_{fre}\leq \log\frac{1}{1-p_s}\,, \textrm{ and } \quad\delta_{fre}\leq \exp\left(-\frac{\tau}{q}\cdot\textsc{D}\big(q\,\|\,p_s\big)\right). 
\label{eq:dp-fre-delta}   
\end{align}
\end{lem}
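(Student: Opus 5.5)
We condition, as the section setup already does, on a common projection $Z$, common offsets and a common hash $H$ for $X$ and $X'$. Under this conditioning the two runs differ only in the count of the single bucket $h$ that receives the extra vector $x$. The coins of all other users are independent of $\text{coin}_x$, so every other bucket's released pair $(h',n_{h'})$ has the same distribution under $X$ and $X'$. It therefore suffices to compare the released value of bucket $h$. Let $m$ be the number of other users mapped to $h$, and let $C\sim\mathrm{Bin}(m,p_s)$ be their sampled count. Under $X'$ the release is $C$ if $C\ge\tau$ and ``absent'' otherwise. Under $X$ it is $C+B$, with $B\sim\mathrm{Bernoulli}(p_s)$ independent of $C$, again truncated below $\tau$.

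\textbf{Step 1 (pure $\varepsilon$ part).} We compare the point probabilities for each released value $c\ge\tau$. Under $X$ we have $\Pr[C+B=c]=(1-p_s)\Pr[C=c]+p_s\Pr[C=c-1]$. In the direction $X'\to X$ this gives $\Pr_X[c]\ge(1-p_s)\Pr_{X'}[c]$, that is, a ratio of at most $\log\frac{1}{1-p_s}$. For the event ``absent'', $\Pr_X[\text{absent}]\le\Pr_{X'}[\text{absent}]$, and $\Pr_X[\text{absent}]\ge(1-p_s)\Pr_{X'}[\text{absent}]$. This matches the claimed bound $\varepsilon_{fre}\le\log\frac1{1-p_s}$.

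The other direction requires bounding $\Pr_X[c]/\Pr_{X'}[c]$. Using $\Pr[C=c-1]/\Pr[C=c]=\frac{c(1-p_s)}{(m-c+1)p_s}$, we get
\[
\frac{\Pr_X[c]}{\Pr_{X'}[c]}=(1-p_s)\Bigl(1+\frac{c}{m-c+1}\Bigr).
\]
This ratio is at most $e^{\varepsilon_{fre}}$ exactly when $c/(m+1)$ is at most $q=1-(1-p_s)e^{-\varepsilon_{fre}}$, possibly up to a $\pm1$ in the indexing. The ratio grows when the empirical fraction $c/m$ exceeds $p_s$ by a margin; the definition of $q$ is precisely the threshold fraction at which the likelihood ratio reaches $e^{\varepsilon_{fre}}$.

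\textbf{Step 2 ($\delta$ part).} We take the bad set to be the values $c\ge\tau$ with $c>q(m+1)$. On its complement the ratio is bounded in both directions, so $\delta_{fre}\le\Pr_X[C+B\ge\tau,\ C+B>q(m+1)]$, which is a binomial upper tail. We then split into cases on $m$.
\begin{itemize}
\item If $q(m+1)<\tau$, the bad event is $C+B\ge\tau$. The sum $C+B$ is distributed as $\mathrm{Bin}(m+1,p_s)$, and $m+1<\tau/q$. The Chernoff--KL bound gives $\Pr[\mathrm{Bin}(n,p_s)\ge \tau]\le\exp(-n\,\textsc{D}(\tau/n\,\|\,p_s))$.
\item Otherwise the tail is at fraction $q$, and the bound is $\exp(-(m+1)\textsc{D}(q\|p_s))\le\exp(-\frac{\tau}{q}\textsc{D}(q\|p_s))$.
\end{itemize}
To merge the cases, we show that $n\mapsto n\,\textsc{D}(\tau/n\|p_s)$ is decreasing in $n$ for $\tau/n>p_s$. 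It follows that over $n\le\tau/q$ the minimum is attained at $n=\tau/q$, where it equals $\frac{\tau}{q}\textsc{D}(q\|p_s)$. Both cases therefore yield the stated $\delta_{fre}$.

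\textbf{Main obstacle.} The delicate step is the monotonicity and case-merging argument in Step 2, together with the careful bookkeeping of $m$ versus $m+1$ and of $c$ versus $c-1$. This bookkeeping is what makes the threshold come out exactly at $q$ and the exponent exactly at $\frac{\tau}{q}\textsc{D}(q\|p_s)$. If a slack of one unit appears, I would absorb it by noting that $\tau$ is an integer and $q\le1$. Alternatively, I would simply invoke the cited sample-and-threshold analysis~\cite{sandtdp}, which this lemma restates, once the reduction to the single bucket $h$ has been established.
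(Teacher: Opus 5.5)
Your proposal is correct and follows the same route the paper relies on. The lemma is imported from \cite{sandtdp}, and the paper's appendix (the group-privacy analysis specialised to $l=1$) re-derives it exactly as you do: the binomial likelihood ratio $(1-p_s)\frac{m+1}{m+1-c}$ crosses $e^{\varepsilon_{fre}}$ exactly at $c=q(m+1)$ (there is no off-by-one), $\delta_{fre}$ is the tail $\Pr[\mathrm{Bin}(m+1,p_s)\ge\max(\tau,q(m+1))]$, and the case split at $m+1=\tau/q$ is closed by a Chernoff--KL bound at the boundary. Your monotonicity of $n\,\textsc{D}(\tau/n\,\|\,p_s)$ is a slightly cleaner substitute for the paper's ``push $m$ to the boundary'' step.

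One correction: Step~1 shows that the loss in the $X'$-versus-$X$ direction is at most $\log\frac{1}{1-p_s}$. What you actually prove, and what the paper uses with $\varepsilon_{fre}=v\log\frac{1}{1-p_s}$ and $v\ge1$, is the lemma under the hypothesis $\varepsilon_{fre}\ge\log\frac{1}{1-p_s}$, not the printed ``$\le$''. This hypothesis is exactly what makes your complement-of-the-bad-set step two-sided, so state it as a requirement rather than calling it a match.
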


The sum of all random contributions of $X$, denoted as $n_h$, therefore follows the Bernoulli distribution $Bin(N_h,p_s)$, where $N_h$ is the overall number of users with index $h$. For its neighboring dataset, this distribution becomes $Bin(N_h-1,p_s)$. Lemma~\ref{lem:s-and-t-dp} bounds this divergence, with a high probability.

\noindent\textbf{Choosing $\tau$.} The threshold $\tau$ is a public
parameter that controls which buckets are declared heavy hitters.
We set $\tau=p_s\cdot t$, so that in expectation a bucket with strictly
fewer than $t$ neighbors within distance $r$ contributes at most $p_s t$
sampled submissions and is therefore filtered out. Larger $\tau$ leads to fewer heavy hitting buckets and smaller $\delta_{fre}$ via Lemma~\ref{lem:s-and-t-dp}.\looseness=-1

\noindent\textbf{Heavy hitting Indices.} The indices of heavy hitters correspond to frequent embedding vectors. These indices are conceptually close to labels of user texts (e.g., ``Texas barbecue joints''), which were utilitized in prior works~\cite{yueetal2023synthetic,xie2024differentially}. In our setting, such information is not publicly available before hand. Here, the index of a heavy-hitting bucket index is conceptually similar and obtained in a privacy-preserving manner, albeit being less informative (since we cannot interpret the indices themselves). 

\noindent\textbf{Computing Centroids.} Again, we consider any bucket $h$ with count $n_h\geq\tau$. There are two cases. If the additional data point in $X$, denoted as $x$, is not added to the bucket (due to the zero outcome of the coin flip), then the privacy cost is $0$. Otherwise, we need to analyze the privacy cost for releasing the sum of embedding vectors perturbed by the Gaussian noise. To do that, we need to first analyze the sensitivity of the sum of embedding vectors.

The worst-case $\mathcal{L}_2$ sensitivity of a bucket's embedding sum is the largest
embedding norm, which is unbounded; adding noise
scaled to it would destroy all utility. We instead use a \emph{probabilistic}
bound whose failure probability is paid in the final $\delta$.

\begin{lem}[High-probability Sensitivity]\label{lem:high-prob-sens}
Consider any bucket in the $k$-dimensional projection space, with bucket edge length $L$. It holds that the sensitivity for the sum of embedding vectors within the bucket is bounded by $\Delta_{dist}$, with probability 
\begin{align}
    \Pr[succeed]&\ge 1-\Bigg(f\Big(\frac{\sqrt{k} L}{\Delta_{\textit{dist}}}\Big)\Bigg)^k,\\
    \text{ with }\textstyle f(\nu)&:=\,\text{erf}\big(\frac{\nu}{\sqrt{2}}\big) - \sqrt{\frac{2}{\pi}} \frac{1}{\nu}(1-e^{-\nu^2/2})\label{eq:def-f-centralized}.
\end{align}
\end{lem}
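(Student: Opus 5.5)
The plan is to reduce the event ``the sensitivity exceeds $\Delta_{dist}$'' to a \emph{collision} event: two embeddings at $\mathcal{L}_2$ distance more than $\Delta_{dist}$ in $\mathbb{R}^d$ land in the same $k$-dimensional bucket. If no such collision occurs, every pair of vectors sharing a bucket with the differing point $x$ is within $\Delta_{dist}$ of it. Adding or removing $x$ then changes the bucket's sum by an amount that is controlled at scale $\Delta_{dist}$. Concretely, the sum is measured relative to a bucket-dependent reference point, and this recentering is absorbed into post-processing, since the centroid is only released after adding noise. So it suffices to show that, for any fixed pair $x,x'$ with $\|x-x'\|_2=\Delta\ge\Delta_{dist}$,
\[
\Pr\big[h(x)=h(x')\big]\le f\Big(\tfrac{\sqrt{k}L}{\Delta_{dist}}\Big)^k,
\]
where the probability is over the public randomness $Z$ and $\{o_j\}$. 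The permutation $H$ is a bijection, so it does not affect collisions, and I will ignore it.

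\textbf{Step 1: reduce to a single coordinate.} Write $g=\frac{1}{\sqrt{k}}(x-x')Z\in\mathbb{R}^k$. Because the columns of $Z$ are independent standard Gaussian vectors, the coordinates $g_1,\dots,g_k$ are i.i.d.\ $\mathcal{N}(0,s^2)$ with $s=\Delta/\sqrt{k}$. The offsets $o_j$ are independent of each other and of $Z$. The two points share a bucket if and only if they share a cell in every coordinate. Hence the collision probability factorizes as $\prod_j \Pr[\text{same cell in coordinate } j]$, which equals $p_1^k$ for a common one-dimensional probability $p_1$.

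\textbf{Step 2: compute the one-dimensional probability.} Fix $g_j$ and draw $o_j\sim\mathrm{Unif}[0,L)$. Two reals at distance $|g_j|$ fall in the same cell of a uniformly shifted grid of width $L$ with probability exactly $(1-|g_j|/L)_+$. Taking expectation over $g_j\sim\mathcal{N}(0,s^2)$ and writing $\nu=L/s=\sqrt{k}L/\Delta$, I split into two terms:
\[
p_1=\Pr[|g_j|\le L]-\tfrac1L\mathbb{E}\big[|g_j|\mathbf 1\{|g_j|\le L\}\big].
\]
The first term is $\mathrm{erf}(\nu/\sqrt2)$. For the second, the truncated half-normal moment is $\mathbb{E}[|g_j|\mathbf 1\{|g_j|\le L\}]=s\sqrt{2/\pi}\,(1-e^{-\nu^2/2})$. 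Combining the two gives $p_1=f(\nu)$, matching \eqref{eq:def-f-centralized}.

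\textbf{Step 3: monotonicity and conclusion.} The quantity $p_1=\mathbb{E}[(1-|g_j|/L)_+]$ is decreasing in the noise scale $s$, hence decreasing in $\Delta$. Equivalently, $f$ is increasing in $\nu$; this also follows from the derivative computation $f'(\nu)=\sqrt{2/\pi}\,(1-e^{-\nu^2/2})/\nu^2>0$. So every $\Delta\ge\Delta_{dist}$ gives collision probability at most $f(\sqrt{k}L/\Delta_{dist})^k$, and taking the complement yields the stated success probability.

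\textbf{Main obstacle.} I expect the delicate step to be the translation from ``no far pair collides'' to an actual $\mathcal{L}_2$ sensitivity bound of $\Delta_{dist}$ for the released sum. Two issues arise. First, the sum's raw sensitivity is $\|x\|$, not a pairwise distance, so I need the recentering/post-processing argument above. Second, the event should involve the differing point $x$ against the fixed, data-independent cluster geometry rather than a union over all pairs. I would first try to phrase the guarantee per pair (the added point versus the bucket's reference point), which gives exactly the stated bound with no union factor. Any union over other users would then be deferred to the overall $\delta$ accounting.
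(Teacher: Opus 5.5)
Your Steps 1--3 are correct and coincide with the paper's calculation. You compute the per-coordinate collision probability $\max(0,1-|g_j|/L)$ under the uniform offset, average it over $g_j\sim\mathcal{N}(0,D^2/k)$, and use independence over the $k$ coordinates. Your derivative $f'(\nu)=\sqrt{2/\pi}\,(1-e^{-\nu^2/2})/\nu^2>0$ extends the bound from $D=\Delta_{dist}$ to all $D\ge\Delta_{dist}$, a step the paper's sketch leaves implicit.

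The gap is the reduction from sensitivity to collisions. You flag it as the obstacle, but the recentering fix cannot work. For recentering at $c_h$ to be post-processing, $c_h$ must be computable from public quantities ($h$, $\mathbf{Z}$, $\{o_j\}$, released counts). However, the set of $x\in\mathbb{R}^d$ landing in a given cell is an unbounded cylinder containing the $(d-k)$-dimensional kernel of $x\mapsto x\mathbf{Z}/\sqrt{k}$. A public $c_h$ therefore pins down only $k$ of the $d$ directions of $x$, and for worst-case inputs of large norm $\|x-c_h\|_2$ cannot be kept below $\Delta_{dist}$. Nor does Step 2 give a per-pair bound for $(x,c_{h(x)})$: your collision probability holds only for two points fixed before $\mathbf{Z}$ and $\{o_j\}$ are drawn, whereas $c_{h(x)}$ depends on them through $h(x)$. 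If $c_h$ is instead a data point, subtracting it is not post-processing. More fundamentally, a public shift never changes the Gaussian mechanism's privacy loss. The two noisy sums are $\mathcal{N}(S_h,\sigma^2\mathbf{I}_d)$ and $\mathcal{N}(S_h+x,\sigma^2\mathbf{I}_d)$, whose means differ by $x$ however you recenter.

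What is missing is the coupling the paper uses: the sampled contributions to bucket $h$ under the two datasets are coupled so that they differ by replacing one point $x'$ of the same bucket with $x$. Concretely, the sampled count is already released and paid for by Lemma~\ref{lem:s-and-t-dp}, so you may condition on it. Given count $n$, the sampled set under $X$ is a uniform $n$-subset of the bucket. Under $X\cup\{x\}$ it either has the same law (when $x$ is not selected) or is $x$ together with a uniform $(n-1)$-subset. You can couple the latter with the former by swapping out a single point $x'$. Hence the two bucket sums differ by $x-x'$, where $x'$ is a genuine data point in the same bucket, and the sensitivity is $\|x-x'\|_2$. This is a distance between two points fixed independently of the public randomness, so your Steps 1--3 apply verbatim; the componentwise bound passes to the coupled mixtures by joint convexity.

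Finally, with this coupling the honest failure event is that \emph{some} $x'$ in $x$'s bucket is farther than $\Delta_{dist}$ from $x$. This in general needs a union bound over those points; compare the $\binom{h}{2}$ factor in Lemma~\ref{lem:few-in-bucket}. The stated bound, like the paper's sketch, is the single-pair probability, and your reference-point device would not remove that union either.
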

We \emph{choose} the sensitivity
bound $\Delta_{\mathrm{dist}}$ so that the coupled pair
violates it has a small, negligible probability. Conditioned on its complement, Theorem~\ref{thm:main-gaussian} applies with sensitivity
$\Delta_{\mathrm{dist}}$. Our default experiment setting with $k=20, L=2r/\sqrt{k}$, and $\Delta_{\mathrm{dist}} = 2r$ (for any $r>0$) gives
$8.9\times10^{-11}$, a negligible share of the overall $\delta=10^{-6}$.

\begin{proof}[Proof sketch of Lemma~\ref{lem:high-prob-sens}]
Consider neighboring inputs $X' = X \cup \{x\}$ and condition on the added point
being assigned to bucket $h$. The sampled contributions to $h$
under $X$ and $X'$ can be coupled so that they differ in a single replaced point
$x'$ that is assigned to the \emph{same bucket} $h$. The sum over bucket $h$
therefore changes by at most $\lVert x - x' \rVert_2$ for two points $x, x'$ that
collide in all $k$ projected coordinates, and it suffices to bound the distance
of same-bucket collisions. Let $D = \lVert x - x' \rVert_2$. Each projection coordinate is
$y_j = \langle x, Z_j \rangle / \sqrt{k}$ with independent
$Z_j \sim \mathcal{N}(0, \mathbf{I}_d)$, so the coordinate-wise gap
$g_j = y_j - y'_j$ is i.i.d.\ $\mathcal{N}(0, D^2/k)$. Buckets have edge length
$L$ per coordinate with an independent uniform random offset, so two points with
gap $g_j$ share a cell in coordinate $j$ with probability
$\max(0,\, 1 - |g_j|/L)$. Taking the expectation over $g_j$ gives the
per-coordinate collision probability and
independence across the $k$ coordinates yields the statement. 
\end{proof}

\begin{theorem}[Gaussian mechanism]\label{thm:main-gaussian}
Let $\Delta_{\textit{dist}}$ be the $\mathcal{L}_2$ sensitivity for the vector sum and $\sigma$ be the parameter of the Gaussian noise. The centroids $\{\mu_h\}$ in Alg.~\ref{alg:centralized} satisfies $\rho_{agg}$-CDP, and $(\varepsilon_{agg},\delta_{agg})$-DP with \begin{align}
\rho_{agg}=\frac{\Delta_{\textit{dist}}^2}{2\sigma^2}, \textrm{ and }  
 \delta_{agg} = \inf_{\alpha \in (1,\infty)} \frac{e^{(\alpha-1)(\alpha\rho_{agg}-\varepsilon_{agg})}}{\alpha - 1} \cdot \left( 1 - \frac{1}{\alpha} \right)^{\alpha}.\end{align}
\end{theorem}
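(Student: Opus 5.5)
This follows the standard route of zCDP for the Gaussian mechanism, followed by the Rényi-to-approximate-DP conversion of Canonne, Kamath and Steinke. Throughout, we condition on the complement of the failure event in Lemma~\ref{lem:high-prob-sens}. On that event, removing or adding a single text changes the sum $S_h$ of any heavy-hitting bucket $h$ by at most $\Delta_{\textit{dist}}$ in $\mathcal{L}_2$.

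\emph{Which coordinates of the output can change.} Fix the public randomness: $Z$, the offsets $\{o_j\}$, $H$, and the heavy-hitter set, which is accounted for separately by Lemma~\ref{lem:s-and-t-dp}. The added point $x$ lands in exactly one bucket $h$, and only if $\text{coin}=1$. So only the single noisy sum $S_h+\mathcal{N}(0,\sigma^2\mathbf{I}_d)$ differs between $X$ and $X'$; the noisy sums of all other buckets have identical distributions. The division by $n_h$ is post-processing. It is public, since the histogram $\{(h,n_h)\}$ has already been released.

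\emph{The $\rho_{agg}$-CDP claim.} I compute the Rényi divergence of order $\alpha$ between $\mathcal{N}(\mu,\sigma^2\mathbf{I})$ and $\mathcal{N}(\mu',\sigma^2\mathbf{I})$. It equals $\alpha\|\mu-\mu'\|_2^2/(2\sigma^2)\le \alpha\Delta_{\textit{dist}}^2/(2\sigma^2)$, which gives $\rho_{agg}$-zCDP with $\rho_{agg}=\Delta_{\textit{dist}}^2/(2\sigma^2)$. The case where the coin is $0$ gives identical distributions, i.e.\ divergence $0$. A mixture over the coin keeps the bound, by quasi-convexity of Rényi divergence, or simply by conditioning on the public-to-the-adversary worst case.

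\emph{The $(\varepsilon_{agg},\delta_{agg})$ conversion.} Let $P$ and $Q$ be the two output distributions and $L=\log(P/Q)$ the privacy loss. Then $\delta=\mathbb{E}_P[(1-e^{\varepsilon-L})_+]$. The bound $(1-e^{\varepsilon-\ell})_+\le e^{(\alpha-1)(\ell-\varepsilon)}\,\frac{1}{\alpha}\left(1-\frac{1}{\alpha}\right)^{\alpha-1}$ holds for every $\ell$; it follows by maximizing $(1-e^{-z})e^{-(\alpha-1)z}$ over $z$. Taking expectations and using $\mathbb{E}_P[e^{(\alpha-1)L}]=e^{(\alpha-1)D_\alpha(P\|Q)}\le e^{(\alpha-1)\alpha\rho_{agg}}$ gives
\[
\delta_{agg}\le \frac{e^{(\alpha-1)(\alpha\rho_{agg}-\varepsilon_{agg})}}{\alpha-1}\left(1-\frac{1}{\alpha}\right)^{\alpha}.
\]
Here I rewrote $\frac{1}{\alpha}\left(1-\frac{1}{\alpha}\right)^{\alpha-1}=\frac{1}{\alpha-1}\left(1-\frac{1}{\alpha}\right)^{\alpha}$. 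Taking the infimum over $\alpha>1$ gives the stated formula.

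\emph{Main obstacle.} The delicate step is not the Gaussian calculus but justifying that the sensitivity bound may be used as if it were deterministic. The coupling in Lemma~\ref{lem:high-prob-sens} must be set up so that its failure event depends only on public, data-independent randomness, namely $Z$ and the offsets. Its probability can then be added to $\delta_{agg}$ in the final composition rather than inside this theorem, which is stated conditionally on sensitivity $\Delta_{\textit{dist}}$. If $n_h$ were not already public, one would also need to argue that the normalization leaks nothing beyond the released histogram.
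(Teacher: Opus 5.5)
Your proposal is correct and follows the standard route the paper invokes without proof: the Gaussian R\'enyi divergence $\alpha\Delta_{\textit{dist}}^2/(2\sigma^2)$ gives $\rho_{agg}$-zCDP, and the Canonne--Kamath--Steinke conversion is exactly where the stated $\delta_{agg}$ comes from (your derivation of that conversion constant checks out). One small precision: once you treat $n_h$ as released, the added point's coin is no longer an independent Bernoulli$(p_s)$, so the mixture you bound should be the paper's coupling conditional on $n_h$, whose nontrivial branch replaces $x$ by a same-bucket $x'$ (a change of $\|x-x'\|_2\le\Delta_{\textit{dist}}$, which is what Lemma~\ref{lem:high-prob-sens} controls, rather than $\|x\|_2$), and joint quasi-convexity then gives the same bound.
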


The overall DP guarantee comes from linear composition~\cite{dpbook}, and accounting for failure probability of sensitivity bound into $\delta$. 

\begin{theorem}[Overall DP guarantee]\label{thm:dp-final-fre-e2t} With $\mathcal{L}_2$ sensitivity $\Delta_{\textit{dist}}$, $f$ defined as in Eq.~\ref{eq:def-f-centralized}, and the parameters $\varepsilon_{fre},\varepsilon_{agg},\delta_{fre},\delta_{agg}$ computed as in Lemma~\ref{lem:s-and-t-dp} and Theorem~\ref{thm:main-gaussian}, Algorithm~\ref{alg:all} satisfies $(\varepsilon,\delta)$-DP for
\begin{align}
    \varepsilon = \varepsilon_{fre}+\varepsilon_{agg}\,, \textrm{and } 
\delta=\delta_{fre}+\delta_{agg}+\Big(f\big(\frac{\sqrt{k} L}{\Delta_{\textit{dist}}}\big)\Big)^k.
\end{align}
\end{theorem}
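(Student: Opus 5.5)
The plan is to prove Theorem~\ref{thm:dp-final-fre-e2t} by conditioning on a "good" event for the public randomness and then composing the two released components. Fix neighboring inputs $X'=X\cup\{x\}$. As in Section~\ref{sec:privacy-dp}, both runs share the public randomness: the projection $Z$, the offsets $\{o_j\}$ and the hash $H$. We then couple the coin flips of all common users.

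Let $G$ be the event that every point sharing the (permuted) bucket of $x$ lies within $\mathcal{L}_2$ distance $\Delta_{\textit{dist}}$ of $x$. More precisely, $G$ is the event that the coupled replaced pair in the sketch of Lemma~\ref{lem:high-prob-sens} satisfies $\|x-x'\|_2\le\Delta_{\textit{dist}}$. By Lemma~\ref{lem:high-prob-sens}, $\Pr[G^c]\le \big(f(\sqrt{k}L/\Delta_{\textit{dist}})\big)^k$. This quantity depends only on data-independent randomness, so it is the same under $X$ and $X'$.

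On $G$ we view the mechanism as an adaptive composition of two steps.

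\textbf{(i) Heavy-hitter histogram.} The sample-and-threshold release $\{(h,n_h)\}$ is $(\varepsilon_{fre},\delta_{fre})$-DP by Lemma~\ref{lem:s-and-t-dp}. Only the bucket $h$ of $x$ changes, with $n_h\sim Bin(N_h,p_s)$ versus $Bin(N_h-1,p_s)$ before thresholding.

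\textbf{(ii) Noisy centroids.} Condition on the released set of heavy buckets. If $\text{coin}=0$ for $x$, or bucket $h$ is not released, the sums are identical and there is no loss. Otherwise only $S_h$ changes, and on $G$ its $\mathcal{L}_2$ sensitivity is at most $\Delta_{\textit{dist}}$. Theorem~\ref{thm:main-gaussian} then gives $(\varepsilon_{agg},\delta_{agg})$-DP. The division by the already-released $n_h$ is post-processing.

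Basic (linear) composition~\cite{dpbook} yields $(\varepsilon_{fre}+\varepsilon_{agg},\delta_{fre}+\delta_{agg})$-DP conditioned on $G$. Inversion to text is post-processing as well.

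Finally, remove the conditioning. For any output set $\mathcal{O}$,
\[
\Pr[\mathcal{M}(X)\in\mathcal{O}]\le \Pr[\mathcal{M}(X)\in\mathcal{O}\wedge G]+\Pr[G^c].
\]
Apply the conditional guarantee to the first term, and use $\Pr[\mathcal{M}(X')\in\mathcal{O}\wedge G]\le\Pr[\mathcal{M}(X')\in\mathcal{O}]$. This adds $\big(f(\sqrt{k}L/\Delta_{\textit{dist}})\big)^k$ to $\delta$. The same argument works with the roles of $X$ and $X'$ swapped.

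\textbf{Main obstacle.} The delicate step is making the conditioning legitimate. $G$ is defined via the coupling of $X$ and $X'$, not purely through public randomness, so conditioning on it could in principle distort the output distributions. I would handle this by defining $G$ as an event on $(Z,\{o_j\})$ together with the identity of the coupled point, which has the same probability under both inputs. I would then check that, given $G$, the two conditional mechanisms still satisfy the hypotheses of Lemma~\ref{lem:s-and-t-dp} and Theorem~\ref{thm:main-gaussian}. The point is that $G$ does not bias the coin flips, which are independent of the projection randomness.
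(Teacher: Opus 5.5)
Your proposal is essentially the paper's own argument. The paper proves this theorem in one line: it linearly composes Lemma~\ref{lem:s-and-t-dp} and Theorem~\ref{thm:main-gaussian} on the event that the sensitivity bound of Lemma~\ref{lem:high-prob-sens} holds, and charges the failure probability $\big(f(\sqrt{k}L/\Delta_{\textit{dist}})\big)^k$ to $\delta$; your conditioning and deconditioning steps just spell this out.

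The obstacle you single out is real, and the paper does not address it either, since it uses Lemma~\ref{lem:high-prob-sens} as a black box. A version of $G$ defined only on the shared public randomness (every bucket-mate of $x$ within $\Delta_{\textit{dist}}$) would need a union bound over all points farther than $\Delta_{\textit{dist}}$ from $x$. So the single-pair term in the stated $\delta$ holds only if you accept that lemma's formulation, which is phrased in terms of the coupled replaced pair.
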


\subsection{Semantic Support Protection}\label{sec:semantic-protection}
For any infrequent user embedding $x^*$, there will not be any DP centroid $\widehat \mu$ such that $\widehat \mu$ is close to $x^*$ in terms of $\mathcal{L}_2$ distance. 

\begin{theorem}[SSP by Alg.~\ref{alg:centralized}]\label{thm:proof-semantic-all}
For any $x^*\in X$ such that there are at most $t$ data points from $X$ that has distance at most $r$ to $x^*$ in $X$ and any DP centroid $\widehat \mu$ with $h$ items subsampled into the cluster, we have $\|\widehat \mu - x^*\|_2 < \psi r$ with probability 
\begin{align}
\delta^* \le
\binom{h}{2}\Big(f\big(\frac{\sqrt{k}L}{r}\big)\Big)^k + e^{-0.38p_s t} + F_{\chi^2_d(\Lambda)}\Big(\frac{\psi^2 r^2h^2}{\sigma^2}\Big),
\end{align}
where $\Lambda=\frac{\tau-6p_st}{2\tau}r$, and $F$ is the cumulative density function (cdf) of noncentral chi-squared distribution
\begin{align*}
F_{\chi^2_d(\Lambda)}(a)
= e^{-\Lambda/2}\sum_{j=0}^{\infty}
\frac{(\Lambda/2)^j}{j!}\,
\frac{\gamma\!\left(\tfrac{d}{2}+j,\ \tfrac{a}{2}\right)}
{\Gamma\!\left(\tfrac{d}{2}+j\right)}\,,\,
a\ge 0, \text{ with }\\
\gamma(s,a) = \int_{0}^{a} t^{\,s-1} e^{-t}\,dt,
\Gamma(s) = \int_{0}^{\infty} t^{\,s-1} e^{-t}\,dt, s>0.
\end{align*}
\end{theorem}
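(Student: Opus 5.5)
We bound the probability of the bad event $\{\|\widehat\mu - x^*\|_2 < \psi r\}$ by a union bound over three failure events, one for each term in the bound on $\delta^*$.

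\textbf{Step 1: a tight cluster.} Fix a released centroid $\widehat\mu$ whose bucket has $h$ sampled members. Let $G_1$ be the event that every pair of these members is within distance $r$ in $\mathbb{R}^d$. For a single pair at distance greater than $r$, the collision computation in the proof sketch of Lemma~\ref{lem:high-prob-sens}, with $\Delta_{\mathrm{dist}}=r$, bounds the probability that they share all $k$ grid coordinates. The bound is $\big(f(\sqrt{k}L/r)\big)^k$, using that $f$ is monotone so that larger distances only help. A union bound over the $\binom{h}{2}$ pairs gives the first term. On $G_1$, every member of the bucket lies within $r$ of every other member.

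\textbf{Step 2: the sample size.} Let $G_2$ be the event that the sampled count of points near $x^*$ stays close to its mean. Since $x^*$ has at most $t$ neighbours within radius $r$, the number of them retained by the coin flips is dominated by $\mathrm{Bin}(t,p_s)$. A multiplicative Chernoff bound, with the deviation constant chosen so that the exponent is $0.38$, bounds the probability that this count exceeds a constant multiple of $p_s t$ (the factor $6p_s t$ appearing in $\Lambda$) by $e^{-0.38p_st}$. On $G_1\cap G_2$ the bucket has $h\ge\tau$ members, but at most about $6p_st$ of them can lie within $r$ of $x^*$. Each remaining member is farther than $r$ from $x^*$. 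Writing the noiseless sum as $S_h=\sum_i x_i$ and using the geometry of $G_1$ (all members within one ball of diameter $r$), a triangle-inequality averaging argument should show that $\|S_h/h - x^*\|$ is at least of order $\frac{h-6p_st}{2h}r \ge \frac{\tau-6p_st}{2\tau}r$. This is the quantity $\Lambda$. The argument runs: the far points pull the mean away from $x^*$, while the near points can pull it back by at most their fraction times $r$.

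\textbf{Step 3: the noise.} Condition on $G_1\cap G_2$. Then $h(\widehat\mu - x^*) = (S_h - hx^*) + \mathcal{N}(0,\sigma^2 \mathbf{I}_d)$. Hence $\|h(\widehat\mu-x^*)\|^2/\sigma^2$ follows a noncentral $\chi^2_d$ distribution whose noncentrality parameter is at least the separation found in Step 2, after the appropriate scaling. The noncentral $\chi^2$ cdf is decreasing in the noncentrality parameter, so the probability that $\|\widehat\mu-x^*\|<\psi r$, which is the same as $\|h(\widehat\mu-x^*)\|^2/\sigma^2<\psi^2r^2h^2/\sigma^2$, is at most $F_{\chi^2_d(\Lambda)}(\psi^2r^2h^2/\sigma^2)$. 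Adding the three failure probabilities gives the stated bound on $\delta^*$. The event is read as the centroid landing within $\psi r$ of $x^*$.

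\textbf{Main obstacle.} The hard part is Step 2: turning "few sampled neighbours within $r$" together with "pairwise-close cluster" into a \emph{lower} bound on the distance from the true centroid to $x^*$. The normalisation of $\Lambda$ must also be matched with the stated parameterisation, whose scaling (distance versus squared distance over $\sigma^2$) looks loose. I would first try the averaging argument directly. If that fails, I would instead show that on $G_1$, if any member lies within $r$ of $x^*$ then the whole cluster is confined near $x^*$, and contrapose. In any case I would treat $\Lambda$ as a monotone lower bound on the true noncentrality parameter.
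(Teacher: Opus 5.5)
\textbf{There is a genuine gap in Step 2.} Your three-event decomposition matches the paper's proof, and Steps 1 and 3 are essentially the paper's. The gap is the deterministic lower bound on $\|\mu-x^*\|_2$ for the noiseless centroid $\mu=S_h/h$: you leave it as ``should show'', and the tools you name do not give it.

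Triangle-inequality averaging is vacuous here. On $G_1$ every member is within $r$ of $\mu$, since $\|\mu-x_i\|_2\le\frac1h\sum_j\|x_j-x_i\|_2\le r$. So all you get is $\|\mu-x^*\|_2\ge\|x_{\mathrm{far}}-x^*\|_2-r$, which can be $0$. Your fallback only confines the cluster from above, so it also gives no lower bound when near members are present.

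The real obstacle is that far points could a priori sit on different sides of $x^*$ and cancel. What rules this out is an \emph{angular} consequence of $G_1$, and this is the missing idea.
\begin{itemize}
\item Fix one far member $x_{\mathrm{far}}$ and set $u=(x_{\mathrm{far}}-x^*)/\|x_{\mathrm{far}}-x^*\|_2$.
\item For any other far member $x'$, the triangle $x^*,x_{\mathrm{far}},x'$ has both sides at $x^*$ of length at least $r$ and the opposite side of length at most $r$. The law of cosines then gives $\cos\theta\ge 1/2$, hence $u^\top(x'-x^*)\ge r/2$.
\item Every near member satisfies $u^\top(x-x^*)\ge -r$ by Cauchy--Schwarz.
\item With at most $m$ near members, averaging gives $\|\mu-x^*\|_2\ge u^\top(\mu-x^*)\ge\frac{(h-m)r/2-mr}{h}=\frac{h-3m}{2h}\,r$.
\end{itemize}

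\textbf{Your constants are misassigned.} The $6p_st$ in $\Lambda$ is not the Chernoff threshold. The exponent $0.38$ is $\ln 4-1$, i.e.\ the multiplicative Chernoff bound $\Pr[X\ge 2\mu]\le(e/4)^{\mu}$ applied to the sampled neighbour count $X$ of mean $\mu\le p_st$. So $G_2$ should read ``at most $2p_st$ sampled members lie within $r$ of $x^*$''.

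Setting $m=2p_st$ in the averaging bound gives $\frac{h-6p_st}{2h}r\ge\frac{\tau-6p_st}{2\tau}r$. Each near point costs $r/2$ for not being far and another $r$ for pulling back, which is where the factor $6$ comes from. If you cap the near count at $6p_st$ instead, the same argument yields $\frac{h-18p_st}{2h}r$, which does not match $\Lambda$.

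Your unease about the normalization in Step 3 is justified. The correctly scaled noncentrality parameter is $h^2\|\mu-x^*\|_2^2/\sigma^2\ge(h\Lambda/\sigma)^2$, not $\Lambda$. The paper's proof substitutes $\Lambda$ without comment, so this looseness is in the statement itself rather than a flaw of your argument.
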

Namely, Algorithm~\ref{alg:centralized} satisfies $(r,t,\psi r,\delta^*)$ semantic support protection for the DP \textit{centroids}. This $\delta^*$ quickly converges to $0$ as $k$, and $t$ increases, when $\sigma$, $\psi$, and $p_s$ are moderate and $L\approx \frac{r}{\sqrt{k}}$ times some constant. We include text-level evaluation in Section~\ref{sec:exp}, where additional error may come from the vector-to-text inversion process.

We sketch the proof (detailed in Appendix~\ref{app:proof-of-semantic-support}). We condition on the high-probability good event ``Within any bucket, there are at most $2p_st$ data points that are $r$ close to $x^*$ \textit{and} every two data points have distance at most $r$''. This can be proved using the fact that we cluster data points based on the $\mathcal{L}_2$ distance together with a concentration argument~\cite{MitzenmacherUpfal2005}. Next, we can argue that for any DP centroid in Alg.~\ref{alg:centralized}, the corresponding cluster has many data points that are far from $x^*$ and these ``far data points'' are close to each other. Based on this, we can argue that the average of all data points' embedding vectors in this cluster is far from $x^*$, since the average is dominated by the majority data points that are far from $x^*$. Finally, injecting random DP noise to the centroid will not decrease its distance to $x^*$ by a significant margin. Note that we do not claim that existing solutions such as~\cite{yueetal2023synthetic,xie2024differentially} do not achieve similar guarantees. Nevertheless, working directly on the embedding space with explicit truncation for infrequent texts has made the proof easier.\looseness=-1 

\section{Distributed Implementation}\label{sec:sol}

Algorithm~\ref{alg:all} implements Algorithm~\ref{alg:centralized} in the distributed environment, with threat model specified as in Section~\ref{sec:prob}. We refer to Figure~\ref{fig:illustration} in Appendix~\ref{app:alg-detail} for an illustration.

The implementation mainly makes use of three existing distributed protocols: (i) Verifiable oblivious pseudorandom function (VOPRF)~\citep{star}; (ii) distributed discrete Gaussian with secure aggregation (SecAgg)~\cite{secagg,DiscreteGaussian} for differential privacy; and (iii) a standard shuffler-style channel that strips per-user identifiers from messages~\cite{nebula,prochlo,
  star,dpshuff}. We treat each of these as a black-box with the 
security guarantees inherited from the original papers.

\begin{algorithm*}[t]
\DontPrintSemicolon
\KwIn{$N$ users, each one of which holds a private textual data point $v_i$; public embedding function $E$; embedding dimension $d$; projected dimension $k$; public Gaussian matrix $\mathbf{Z}\in\mathbb{R}^{d\times k}$; edge length $L$; public offsets $\{o_j\}_{j=1}^k$; tagging server \Stag; synthesis server \Ssyn; subsampling rate $p_s$; truncated shifted Laplace distribution $\text{TSDLap}(\lambda, \gamma)$; heavy-hitter threshold $\tau$; Discrete Gaussian noise std $\sigma$ and quantization granularity $\beta$.}
\KwOut{Synthesized texts.}
\For{User $i = 1, \ldots, N$}{
% Compute embedding vector $x_i\gets E(v_i)$ \tcp*{done  \textit{locally}}
% Compute projected embedding vector $y_i\gets \frac{1}{\sqrt{k}} x_i\mathbf{Z}$ \tcp*{done \textit{locally}}
Compute the $k$-dimensional bucket index $b_i$ for $v_i$ using \textit{Local-Preprocess}($v_i$) \tcp*{done locally}
Obtain $h_i$ using \textit{User-TaggingServer}($b_i$) \tcp*{interacting with the Tagging Server}
Sample $\text{coin}_i\in\{0,1\}$ with $p_s$ probability of $1$ \tcp*{done locally}
\If{$\text{coin}_i=1$}{
Submit index $h_i$ to the \textit{synthesis Server}\;
}
}
$\textsf{Dummy} =$ \textit{Dummy-Data-Creation}($\tau$, $\text{TSDLap}(\lambda, \gamma)$) \;
\textit{Synthesis server} receives
$\textsf{ReceivedData}=\textsf{sampled} \{h_i\} \cup \textsf{Dummy}$\;
\textit{Synthesis server}  filters out indices with counts smaller than $\tau$, obtaining index set $\mathcal{H}$ and corresponding counts $\mathcal{C}$\;
\textit{Synthesis server} broadcasts $\mathcal{H}$ and $\mathcal{C}$ to all users\;
\For{User $i = 1, \ldots, N$}{
\If{$\text{coin}_i=1$}{
\If{$h_i\in \mathcal{H}$}{
Obtain $\widetilde x_i = \textit{Discrete-Gaussian}(x_i,\sigma,\beta)$ \tcp*{Locally perturb the embedding vector}
Compute two secret shares $\widetilde x_i[1]$ and $\widetilde x_i[2]$ for $\widetilde x_i$ and
send to \textit{tagging server} and \textit{synthesis server}, respectively\; 
}
}
}
\For{$h\in \mathcal{H}$}{
\textit{Tagging server} and \textit{synthesis server} reconstruct $\widehat x^{(h)} = \frac{1}{\beta}\sum_{\widetilde x_i \text{ is with index } h} \widetilde x_i$ \;
% \tcp*{DP is guaranteed by the summation of independent local noises}
Synthesis server computes centroid $\widehat x^{(h)} \gets \frac{1}{c^{(h)}}\cdot \widehat x^{(h)} $ \tcp*{$c^{(h)}$ is the count of index $h$}
 Generate text $\widetilde{v}_h \gets \text{Invert}(\widehat{x}_h)$\;
 \textbf{Output} $\widetilde{v}_h$\;
}
\caption{Differentially text synthesis (distributed version of Fre-E2T)}
\label{alg:all}
\end{algorithm*}

\vspace{0pt}
\noindent\textbf{Step 1: Local pre-processing.} The local data pre-processing step is identical to the centralized setting, assuming that all users have access to the same embedding function $E$ and the same public randomness. In the end of this step, each user $i$ obtains a $k$-dimensional bucket index $b_i$ for its embedding vector $x_i$ (line 2 of Algorithm~\ref{alg:all}). 

\looseness=-1
   
\vspace{0pt}
\noindent\textbf{Step 2: Distributed heavy hitter estimation.} There is no trusted curator to directly process the bucket indices for finding heavy hitters. Instead, each user $i$ obliviously communicates with the tagging server to encrypt index $b_i$ into a random string $h_i$ (line 3 of Algorithm~\ref{alg:all}). User $i$ then flips a coin to decide whether to submit the encrypted string $h_i$ to the synthesis server (lines 4--6 of Algorithm~\ref{alg:all}). The synthesis server collects the encrypted indices from the users and computes a heavy-hitter histogram to identify frequent indices (lines 7--9 of Algorithm~\ref{alg:all}). 

\vspace{0pt}
\noindent\textbf{Step 3: Aggregation of frequent text embeddings.} The synthesis server broadcasts the frequent encrypted indices to all users (line 10 of Algorithm~\ref{alg:all}). Every user who shares the same encrypted index and has submitted the index in the last phase proceeds to securely aggregate their original data embedding vectors in the $d$-dimensional space with additive DP noise (lines 11-15 of Algorithm~\ref{alg:all}), obtaining a DP average for embedding vectors (i.e., a DP centroid for the cluster). For this phase, we need the two non-colluding servers, tagging server \Stag and synthesis server \Ssyn. The two servers can then combine their secret shares to reconstruct the noisy private centroid for each heavy hitting index (lines 16--19 of Algorithm~\ref{alg:all}). Note that only users who already submitted the encrypted index in Step~2 respond to
the Step~3 broadcast for that index. The synthesis server therefore
observes that ``some set of $\ge\tau$ anonymized endpoints share encrypted
index $h_i$'', which is exactly the heavy-hitter event whose existence is
already covered by the DP guarantee in
Lemma~\ref{lem:s-and-t-dp}. \looseness=-1

\vspace{0pt}
\noindent\textbf{Step 4: Offline post-processing.} Finally, the synthesis server uses each perturbed centroid to synthesize new textual data by inverting it back to a token sequence using the Vec2Text algorithm~\cite{morris2023language}. This is post-processing and does not cost additional privacy budget. 

\vspace{0pt}
\noindent\textbf{Paraphrasing.} In addition, one can also use open-sourced language models such as GPT-2~\cite{radford2019language} to paraphrase the generated texts to improve the diversity of the generated texts---e.g., $10$ times for each text inverted from a centroid. Inversion and paraphrasing steps require no further involvement from the users and can be done completely offline on the server side. In addition, assuming that the language model used for paraphrasing is not trained on the user texts, this paraphrasing step also incurs no additional privacy cost. 

Step 1 and 4 are done completely on the user and server side, respectively. Next, we discuss the distributed Step 2 and 3 in details. \looseness=-1

\subsection{Distributed Heavy Hitter Estimation}\label{sec:hh}

The idea is similar to the recent Nebula system~\citep{nebula} for distributed heavy hitter estimation. Each user $i$ first communicates obliviously with the tagging server to obtain the encrypted index $h_i$ for its $k$-dimensional bucket index $b_i$ (line 2 in Algorithm~\ref{alg:all}) while ensuring that the tagging server learns nothing about users’ bucket assignments. A Verifiable Oblivious
PseudoRandom Function~\citep{star} is applied to hash the $k$-dimensional bucket indices so that users in the same bucket obtain the same encrypted index without the tagging server noticing and no user-to-user communications (Algorithm~\ref{alg:hash} in Appendix~\ref{app:alg-detail}). 

Next, user $i$ submits $h_i$ with probability $p_s$ to the synthesis server (lines 4-6 in Algorithm~\ref{alg:all}), who then computes the aggregate counts and filters out all the indices that have not accumulated to $\tau$ submissions from the users (lines 8--10 in Algorithm~\ref{alg:all}). This truncated histogram satisfies DP.

In addition to that, the non-frequent (encrypted) indices should also receive proper privacy protection, since they are observed by server $S_{syn}$. To that end, we follow~\citep{nebula} and introduce a dummy user who can send additional dummy indices to the server such that the non-frequent indices will also receive DP protection while preserving the correctness of the aggregation result (line 7 in Algorithm~\ref{alg:all}). Roughly speaking, if the number of dummy indices follows the truncated shifted discrete Laplace distribution (TSDLap), then impact of an encrypted index is negligible compared with the dummy submission, ensuring DP.  We refer to Definition~\ref{def:tsd} and Algorithm~\ref{alg:create-dummy} in Appendix~\ref{app:alg-detail}  for more details.

\subsection{Distributed Centroid Computation}\label{sec:agg}

For each encrypted frequent index $h$, each user $i$ who has submitted its index to the aggregate server in Step 2 examines $h_i$: if it matches $h$, then the user obtain a perturbed version for the embedding vector $x_i$, written as $\widetilde x_i =  \textit{Discrete-Gaussian}(x_i,\sigma,\beta)$, where $\sigma$ is computed based on the target privacy budget $\varepsilon$ and $\delta$ and the quantization factor $\beta$, using the distributed discrete Gaussian mechanism~\cite{DiscreteGaussian}. Practical distributed DP protocols employ discrete noises~\cite{dgcanonne,DiscreteGaussian,sk} to reliably enforce privacy guarantees. However, we note that switching to discrete noises does not change the privacy--utility trade-off as long as the discretization granularity is not too coarse, according to~\cite{DiscreteGaussian,sk}, which is why we simulate with the continuous Gaussian noise in our experiments. We defer more technical details to Section~\ref{sec:privacy-ddg} and Appendix~\ref{app:alg-detail}.

Next, the perturbed data points with the same frequent index $h$ are aggregated securely, then divided by the occurrence $c^{(h)}$ and the quantization parameter $\beta$. Here, the DP guarantee is collectively contributed by the local discrete Gaussian noises generated by users with the same index $h$. To perform the secure aggregation, we reuse the tagging server and synthesis server from heavy hitter estimation: each user computes two secret shares for her/his perturbed embedding vector and sends them to the servers. The secret shares are tagged by the encrypted index $h_i$ so that the servers can later combine them to reconstruct the overall sum, as is done in federated analytics~\cite{corrigan2017prio,samplablefa}. One may also consider using the implementations from~\cite{secagg,fu2024benchmarking}. Note that we cannot use the anonymous channel to replace distributed DP protocol, since the shuffle-model amplification for vector sums attains meaningful guarantees only in restricted $\varepsilon$ regimes~\cite{dpshuff}; while secret-shared aggregation reconstructs the exact noisy sum at any $\varepsilon$. 

\vspace{0pt}
\noindent\textbf{Deployment costs.}
Recall that we write $N$ for the number of users, $d$ for the embedding dimension, $k$ for the
projection dimension, $p_s$ for the subsampling rate, $\tau$ for the heavy-hitter
threshold, $H$ for the set of released buckets, and $S=\sum_{h\in H}c(h)$ for the
number of submissions that participate in the centroid phase.
On the user side, it first embeds its own text once and projects
it in $O(dk)$ time offline. The distributed protocol is \emph{one-shot}: the heavy-hitter
phase costs each user $O(1)$, and a user whose bucket is
released spends a further $O(d\log d)$ for the rotation and $O(d)$ for
quantization, discrete-Gaussian noise and secret sharing. Per-user communication
is $O(1)$ upstream in the heavy-hitter phase and $O(d)$ upstream if
participating in the centroid phase, and $O(|H|)$ downstream to receive the released bucket set. For the server side, the tagging server\ performs $O(N)$ VOPRF
evaluations and $O(Sd)$ share accumulation, while the synthesis server\ performs
$O(p_sN)$ tag counting, $O(Sd)$ accumulation and
$O(|H|\,d\log d)$ reconstruction. 

\vspace{0pt}
\noindent\textbf{Party roles and observations.} User~$i$ performs all data-dependent computation on its own device: it embeds text
$v_i$, projects it, derives the bucket index $b_i$, obtains the tag $h_i$, draws its participation coin and its noise, and forms its two shares for DP centroid computation. It learns about other users only through the public release: the broadcast heavy-hitter histogram and the published centroids, both of which are differentially private results.

The tagging server \Stag issues the bucket tags obliviously, generates the dummy tags that pad the
sub-threshold histogram (Algorithm~\ref{alg:create-dummy}), and contributes one of the two
additive shares of every perturbed embedding, sending \Ssyn{} the
per-bucket aggregate share. \looseness=-1

The synthesis server \Ssyn collects the sampled tag reports, counts them, discards tags below the
threshold~$\tau$, broadcasts heavy-hitting tags with their sampled counts, combines the two secret shares into the noisy per-bucket sum, and inverts the noisy centroids into synthetic text offline. Our privacy guarantee applies to the  whole multiset of tags with their
multiplicities and the DP centroids; therefore applies to every party. 

\subsection{Privacy in Distributed Settings}\label{sec:privacy-ddg}
The privacy guarantees of the distributed implementation are largely similar to Algorithm~\ref{alg:centralized}. We highlight some differences.\looseness=-1

\vspace{0pt}
\noindent\textbf{Non-released indices in distributed heavy hitters.} For each index that has accumulated $<\tau$ user submissions, the synthesis server observes the encrypted version of its index, and the noisy count. To argue the privacy for such statistics, we note that the index itself carries no information, as it looks random to any computationally bounded adversary~\cite{computationaldp}. So, the problem becomes analyzing the privacy cost for releasing an anonymized histogram (sometimes referred to as the count-of-counts~\cite{census2010,hcofc}), which reports the number of indices that have a count of $Q$, for each $Q=1,2,...,\tau-1$. According to~\citep{Desfontaines2022Differentially,sparsetwoserver,nebula}, injecting $\textsf{TSDLap}(\lambda,\gamma)$ into the number of indices of each count satisfies $(\frac{2}{\lambda},\frac{1}{2}\exp(-\frac{\gamma-2}{\lambda}))$-DP, which is exactly why we need to create dummy indices as in Alg.~\ref{alg:create-dummy}. The subsampling ratio of $p_s$ amplifies the privacy guarantee to $(\varepsilon_{unre},\delta_{unre})$-DP with
\begin{align}
\textstyle
    \varepsilon_{unre} = \ln\Big(1+p_s \big(e^{\frac{2}{\lambda}}-1\big)\Big),\textrm{ and }\delta_{unre} = \frac{p_s}{2}e^{-\frac{\gamma-2}{\lambda}}\,.\label{eq:dp-unre-delta}
\end{align}
We can then obtain the following  guarantee for all indices. 

\begin{lem}[DP heavy hitter estimation]\label{lem:hh-all}
For any adversary who observes the broadcast heavy hitter histogram and the synthesis server who observes the counts of all indices (including dummies), Algorithm~\ref{alg:all} satisfies $\big(\max(\varepsilon_{fre},\varepsilon_{unre}),\max(\delta_{fre},\delta_{unre})\big)$-DP, where $\varepsilon_{fre}$ and $\delta_{fre}$ are from Lemma~\ref{lem:s-and-t-dp}; $\varepsilon_{unre}$ and $\delta_{unre}$ are from Eq.~\ref{eq:dp-unre-delta}.
\end{lem}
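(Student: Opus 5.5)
The proof will split the synthesis server's view of the heavy-hitter phase by the single differing user. Take neighbors $X' = X\cup\{x\}$. As in Section~\ref{sec:privacy-dp}, condition on the public randomness $Z$, $\{o_j\}$ and the VOPRF key/hash $H$ being the same for both inputs, and let $h$ be the encrypted index of $x$. The adversary's view is the multiset of received tags with their multiplicities, real and dummy. Because the tags are pseudorandom to a computationally bounded adversary, this view can be replaced, up to a computationally negligible difference, by a \emph{labelled} part and an \emph{unlabelled} part. The labelled part is the released histogram $\{(h',n_{h'}) : n_{h'}\ge\tau\}$. The unlabelled part is the count-of-counts over indices with counts $1,\dots,\tau-1$, with the dummies drawn as in Algorithm~\ref{alg:create-dummy} added in.

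Next I will use that all other users' contributions are identical under $X$ and $X'$. Adding $x$ then changes the view only through bucket $h$: its count goes from $n_h$ to $n_h+\mathrm{coin}$. I will condition on the other users' coins in bucket $h$ and split into cases on their count $m$.

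\textbf{Case $m\ge\tau$.} Bucket $h$ is released either way. The unlabelled part is unchanged, and only the released count differs, $\mathrm{Bin}(N_h,p_s)$ versus $\mathrm{Bin}(N_h-1,p_s)$ in the truncated histogram. Lemma~\ref{lem:s-and-t-dp} then gives $(\varepsilon_{fre},\delta_{fre})$.

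\textbf{Case $m\le\tau-2$.} Bucket $h$ is unreleased under both inputs. The labelled part is identical, and the count-of-counts changes by moving one index between adjacent count levels. This is exactly the setting of the TSDLap argument, so Eq.~\eqref{eq:dp-unre-delta}, with subsampling amplification by the coin of $x$, gives $(\varepsilon_{unre},\delta_{unre})$.

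\textbf{Case $m=\tau-1$.} This is the boundary case, and I expect it to be the main obstacle. With probability $p_s$ the coin of $x$ makes bucket $h$ cross the threshold, so the released histogram gains a bucket and the unlabelled histogram loses one index at level $\tau-1$. This event is precisely the one Lemma~\ref{lem:s-and-t-dp} controls: the sample-and-threshold analysis already bounds the probability that a bucket becomes newly visible through $\delta_{fre}$. I plan to argue that, conditioned on the released part, the unlabelled part differs by at most one index-level move, which is covered by the TSDLap guarantee.

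The sub-cases are disjoint events of the other users' randomness, and the privacy loss is bounded per event; releasing a bucket and leaving it unreleased are mutually exclusive outcomes for the single bucket that $x$ touches, so the losses do not add. The mixture over sub-cases therefore satisfies $(\max(\varepsilon_{fre},\varepsilon_{unre}),\max(\delta_{fre},\delta_{unre}))$-DP rather than their sum. For the boundary case, I will check that the joint change still fits under the max. If it does not, I would fall back to a sequential argument: first release the labelled part, paying $(\varepsilon_{fre},\delta_{fre})$, and then argue the unlabelled part is unaffected on that branch.
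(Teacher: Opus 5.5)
Your overall route is the paper's. Its only justification is the paragraph before the lemma: pseudorandom tags reduce the unreleased part of the view to a count-of-counts, the TSDLap guarantee amplified by $p_s$ covers that part, Lemma~\ref{lem:s-and-t-dp} covers released indices, and the max is taken because each index is either released or not. The paper never examines the threshold crossing.

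The gap is in how you make the combination precise. You condition on the other users' coins in bucket $h$, i.e.\ on $m$, and then invoke Lemma~\ref{lem:s-and-t-dp} inside a case. Once $m$ is fixed, the count of bucket $h$ is $m$ under $X$ and $m+\mathrm{Bern}(p_s)$ under $X'$. It is no longer $\mathrm{Bin}(N_h-1,p_s)$ versus $\mathrm{Bin}(N_h,p_s)$. For $m\ge\tau$, the output $(h,m+1)$ then has probability $p_s$ under $X'$ and $0$ under $X$. For $m=\tau-1$, the same holds for every view in which $h$ is released. So these conditional pairs are no better than $\delta=p_s$, and a mixture of per-case guarantees cannot certify anything smaller.

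Sample-and-threshold is intrinsically unconditional. The mass of $(h,v)$ under $X'$ comes from $m=v$ with coin $0$ \emph{and} from $m=v-1$ with coin $1$, while under $X$ it comes from $m=v$ alone. What is controlled is the ratio $(1-p_s)N_h/(N_h-v)$ of the two binomial masses, and $\delta_{fre}$ bounds the $\mathrm{Bin}(N_h,p_s)$-mass of released values $v>qN_h$. Splitting by $m$ separates exactly the terms that must be matched. That is why your boundary case looks like an obstacle: conditionally on $m=\tau-1$, becoming visible has probability $p_s$, not $\delta_{fre}$. Your case $m\le\tau-2$ is fine, because the randomness it uses (TSDLap and the coin of $x$) survives the conditioning.

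The repair is to partition the output space rather than the randomness. Condition only on what is common to $X$ and $X'$ (the coins in other buckets), and keep the coins in bucket $h$ and the TSDLap draws random. Split a test set $S$ into $S_{\ge}$, the views in which $h$ is released, and $S_{<}$.
\begin{itemize}
\item On $S_{\ge}$ the view is a function of $n_h\ge\tau$ and independent randomness, so Lemma~\ref{lem:s-and-t-dp} applies verbatim.
\item On $S_{<}$, for $v\le\tau-1$, use $\Pr[\mathrm{Bin}(N_h,p_s)=v]=(1-p_s)\Pr[\mathrm{Bin}(N_h-1,p_s)=v]+p_s\Pr[\mathrm{Bin}(N_h-1,p_s)=v-1]$. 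The second term compares levels $v-1$ and $v$, an adjacent-level move covered by TSDLap. This gives the factor $1+p_s(e^{2/\lambda}-1)=e^{\varepsilon_{unre}}$ and additive $\delta_{unre}$.
\item The crossing mass ($v-1=\tau-1$) never enters $S_{<}$, because it lies in $S_{\ge}$, where the binomial already pays for it.
\item In the removal direction only a factor $1/(1-p_s)$ arises, which is within the sample-and-threshold budget.
\end{itemize}

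The result is $\varepsilon=\max(\varepsilon_{fre},\varepsilon_{unre})$ but $\delta\le\delta_{fre}+\delta_{unre}$. Disjointness of $S_{\ge}$ and $S_{<}$ makes the $\varepsilon$'s combine by max, but one test set can collect the bad outcomes of both regions. So your claim that the losses do not add holds for $\varepsilon$ only, and the stated $\max$ for $\delta$ needs more than disjointness. Your sequential fallback would not recover it either: on the branch where $h$ is unreleased, the unlabelled part \emph{is} affected, so sequential composition adds the two costs.
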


\vspace{0pt}
\noindent\textbf{Computing Centroids with Discrete Gaussian.} The data pre-processing is not influenced by the distributed protocol. Hence, we can use the same $\mathcal{L}_2$ sensitivity $\Delta_{\textit{dist}}$, with a high probability. Conditioned on this, we can apply the distributed discrete Gaussian mechanism (DDG)~\cite{DiscreteGaussian}.  More concretely, the user first quantizes the embedding vector $x_i\in\mathbb{R}^d$ with parameter $\beta$ to an integer-valued vector in $\mathbb{Z}^d$ by multiplying the real vector by $1/\beta$ and then randomly rounding each coordinate to the nearest integer. We note that while some biases will appear due to rounding, they can be offset to a negligibly small influence on the final result if $\beta$ is small enough, according to~\cite{DiscreteGaussian}. Then, for each coordinate of the integer-valued vector, the user perturbs it with a discrete Gaussian noise $\mathcal{N}_{\mathbb{Z}}(0,\sigma^2)$ independently, computes two secret shares for the perturbed vector, and then sends them to the synthesis server and the perturbation server along with the encrypted index, using the framework suggested in~\cite{corrigan2017prio,samplablefa}. During secure aggregation, the discrete Gaussian noises contributed by independent users will aggregate into a larger noise (of standard deviation at least $\sqrt{\tau}\sigma$) to amplify the DP guarantee. After obtaining the aggregated result, the synthesis server will then multiply it by $\beta$ to obtain an estimation for the original vector sum. We give the privacy guarantee of DDG as follows for completeness. Details are in Appendix~\ref{app:alg-detail}. \looseness=-1

\begin{theorem}[Distributed discrete Gaussian~\cite{DiscreteGaussian}]\label{thm:main-dgg}
Let $\beta>0$ be the quantization parameter and $\sigma$ be the parameter of the local discrete Gaussian noise. Under the same assumption as Theorem~\ref{thm:main-gaussian}, define
\begin{align}
\rho_{agg} &:= \min \Big(
  \frac{\Delta_2^2}{2\tau \sigma^2} +  \kappa d, \;
  \frac{1}{2}\big(\frac{\Delta_2}{\sqrt{\tau} \sigma} + \kappa\sqrt{d}\big)^2
\Big),
\end{align}
where $\Delta_2^2=O(\Delta_\textit{dist}^2)$, and $\kappa = 10 \cdot \sum_{j=1}^{\tau-1} \exp\big(- \frac{2\pi^2 \sigma^2}{\beta^2} \cdot \frac{j}{j+1}\big)$ is some constant; see Theorem~\ref{thm:main-dgg-formal} for more details. The perturbed sum of embedding vectors from Alg.~\ref{alg:all} satisfies $\rho_{agg}$-CDP, and $(\varepsilon_{agg},\delta_{agg})$-DP with \(
    \delta_{agg} = \inf_{\alpha \in (1,\infty)} \frac{\exp((\alpha-1)(\alpha\rho_{agg}-\varepsilon_{agg}))}{\alpha - 1} \cdot \left( 1 - \frac{1}{\alpha} \right)^{\alpha}.
\)
\end{theorem}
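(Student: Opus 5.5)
The plan is to reduce this result to the main theorem of the distributed discrete Gaussian paper~\cite{DiscreteGaussian}, specialized to the structure of Algorithm~\ref{alg:all}. First, condition on the same good event used in Theorem~\ref{thm:main-gaussian}, namely that the high-probability sensitivity bound of Lemma~\ref{lem:high-prob-sens} holds. Then, for the released bucket $h$, the pre-quantization vector sums under neighboring inputs $X$ and $X'$ differ in $\mathcal{L}_2$ norm by at most $\Delta_{\textit{dist}}$. The failure probability of this event is not charged here; it is already included in the $\delta$ term of Theorem~\ref{thm:dp-final-fre-e2t}.

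Next, bound the sensitivity after quantization. Each user scales $x_i$ by $1/\beta$ (for consistency, the sensitivity in integer units is then $\Delta_{\textit{dist}}/\beta$, or equivalently $\Delta_{\textit{dist}}$ is measured after scaling) and randomly rounds each coordinate. The randomized-rounding step can increase the norm of the difference. Following~\cite{DiscreteGaussian}, use the conditional randomized rounding lemma: with high probability, the rounded difference has squared norm at most $\Delta_2^2 = \Delta_{\textit{dist}}^2 + O(\beta^2 d + \beta\Delta_{\textit{dist}}\sqrt{\log(1/\delta)})$, which is $O(\Delta_{\textit{dist}}^2)$ for small $\beta$. If one wants a strictly clean guarantee, the flattening rotation (the $O(d\log d)$ step mentioned in the deployment costs) can be applied first to control the $\ell_\infty$ norm as well.

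The core step is the privacy of a sum of discrete Gaussians. A released bucket has at least $\tau$ sampled contributors, since it passed the threshold. Each contributor adds independent noise $\mathcal{N}_{\mathbb{Z}}(0,\sigma^2)$ per coordinate, so the aggregate noise is a sum of at least $\tau$ discrete Gaussians. This sum is not itself a discrete Gaussian, but it is close to $\mathcal{N}_{\mathbb{Z}}(0,\tau\sigma^2)$ in Rényi divergence. Invoke the corresponding lemma of~\cite{DiscreteGaussian}, which shows the Rényi divergence between the shifted and unshifted sums is at most $\min\big(\frac{\Delta_2^2}{2\tau\sigma^2}+\kappa d,\ \frac12(\frac{\Delta_2}{\sqrt\tau\sigma}+\kappa\sqrt d)^2\big)$ with the stated $\kappa$. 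Here $\kappa$ comes from the Poisson-summation bound on $\sum_{j=1}^{\tau-1}e^{-2\pi^2\sigma^2 j/((j+1)\beta^2)}$, which measures how far each partial sum is from a discrete Gaussian. Buckets with more than $\tau$ contributors only get more noise, and by post-processing or monotonicity the same bound holds. This gives $\rho_{agg}$-CDP.

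Finally, secure aggregation reveals only the sum. Dividing by $\beta$ and $c^{(h)}$, and inverting, is post-processing. Convert CDP to $(\varepsilon_{agg},\delta_{agg})$-DP via the tight conversion of Canonne et al., which is exactly the infimum formula already used in Theorem~\ref{thm:main-gaussian}.

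The main obstacle is the rounding and sum-of-discrete-Gaussians step, in particular justifying that $\Delta_2$ is $O(\Delta_{\textit{dist}})$ on the conditioned event rather than in the worst case. I would handle it by citing Theorem~\ref{thm:main-dgg-formal} essentially verbatim, after checking that its hypotheses (a norm bound on inputs) are met by the conditioned difference vector.
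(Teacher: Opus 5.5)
Your proposal is correct and follows essentially the same route as the paper: condition on the high-probability sensitivity event of Lemma~\ref{lem:high-prob-sens}, treat the at least $\tau$ sampled contributors of a released bucket as the clients in the distributed discrete Gaussian theorem of \cite{DiscreteGaussian}, and convert the resulting CDP bound to $(\varepsilon_{agg},\delta_{agg})$-DP with the same infimum formula as in Theorem~\ref{thm:main-gaussian}; your post-processing remark for buckets with more than $\tau$ contributors makes explicit what the paper leaves implicit. One small correction: conditional randomized rounding bounds the norm of each individually rounded vector, not of a rounded same-bucket difference, so here it is cleaner to use the deterministic bound $\Delta_{\textit{dist}}+2\beta\sqrt{d}$ (one rounding error per vector), which still gives $\Delta_2^2=O(\Delta_{\textit{dist}}^2)$ for small $\beta$.
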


\vspace{0pt}
\noindent\textbf{Overall DP guarantee.} The overall DP guarantee comes from linearly composing the parameters from Lemma~\ref{lem:hh-all} and Theorem~\ref{thm:main-dgg}, and accounting for failure probability of sensitivity bound into the final $\delta$. Inverting centroids to texts costs no privacy. The only difference is that composition of parameters only occurs on the data points that appear in frequent buckets. Those data points that are not frequent do not participate in the release of DP sum of embedding vectors; hence, they compose in parallel~\cite{dpbook}. 

\begin{theorem}[Overall DP guarantee]\label{thm:dp-final} With the parameters specified as in Lemma~\ref{lem:s-and-t-dp}, Lemma~\ref{lem:high-prob-sens}, Lemma~\ref{lem:hh-all}, and Theorem~\ref{thm:main-dgg},  and $f$ defined as in Eq.~\ref{eq:def-f-centralized}, Algorithm~\ref{alg:all} satisfies $(\varepsilon,\delta)$-DP for any adversary with
\begin{align}
    \varepsilon &= \max(\varepsilon_{fre}+\varepsilon_{agg},\varepsilon_{unre})\,, \textrm{and}   \\
\delta&=\max\Big(\delta_{fre}+\delta_{agg}+\Big(f\big(\frac{\sqrt{k} L}{\Delta_{\textit{dist}}}\big)\Big)^k,\delta_{unre}\Big). 
\end{align}
\end{theorem}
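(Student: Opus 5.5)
The plan is to treat Algorithm~\ref{alg:all} as an adaptive composition of two mechanisms run on a common, data-independent public randomness (the Gaussian matrix $Z$, the offsets $\{o_j\}$, the VOPRF key, and hence the bucket map $x\mapsto h$). First, fix neighboring inputs $X'=X\cup\{x\}$ and condition on the shared public randomness, exactly as in Section~\ref{sec:privacy-dp}. Let $h$ be the tag of $x$. The adversary's view splits into three parts:
\begin{enumerate}[(a)]
\item the full multiset of tag counts seen by \Ssyn, including dummies;
\item the broadcast heavy-hitter set $\mathcal{H}$ with counts $\mathcal{C}$;
\item the reconstructed noisy sums for each released tag.
\end{enumerate}
The inverted and paraphrased texts are post-processing, so it suffices to bound the joint distribution of (a)--(c).

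Next, split on the fate of the added point's tag $h$. If $h$ ends up below threshold, the point contributes only to the sub-threshold anonymized histogram. It then plays no role in any centroid sum, and its contribution to (c) is identical under $X$ and $X'$ (parallel composition over disjoint user sets). Here Lemma~\ref{lem:hh-all}, via the $(\varepsilon_{unre},\delta_{unre})$ bound of Eq.~\ref{eq:dp-unre-delta}, controls the view. If instead $h$ is (or can become) heavy, the histogram part is handled by Lemma~\ref{lem:s-and-t-dp} with $(\varepsilon_{fre},\delta_{fre})$. The centroid release is then handled by Theorem~\ref{thm:main-dgg} with $(\varepsilon_{agg},\delta_{agg})$, applied with sensitivity $\Delta_{\textit{dist}}$. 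That sensitivity bound is valid only on the success event of Lemma~\ref{lem:high-prob-sens}.

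To compose the heavy case, I would use the standard conditioning argument. Let $G$ be the event that every same-bucket collision involving the coupled point has distance at most $\Delta_{\textit{dist}}$. For any output set $\mathcal{O}$,
\[
\Pr[\mathcal{M}(X)\in\mathcal{O}]\le \Pr[\mathcal{M}(X)\in\mathcal{O}\wedge G]+\Pr[\neg G].
\]
On $G$, basic sequential composition of the histogram mechanism and the DDG mechanism gives the factor $e^{\varepsilon_{fre}+\varepsilon_{agg}}$ and additive slack $\delta_{fre}+\delta_{agg}$. Since $G$ depends only on public randomness and not on the coins or noise, the same coupling works in both directions. Adding $\Pr[\neg G]\le (f(\sqrt{k}L/\Delta_{\textit{dist}}))^k$ gives the first branch of the max.

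The two cases are disjoint events determined by which regime the differing record's tag falls in. One can therefore argue that the privacy loss in each case is bounded by the respective pair, and taking the worst of the two gives the stated $\max$ in both $\varepsilon$ and $\delta$. This mirrors how Lemma~\ref{lem:hh-all} itself combines $\varepsilon_{fre}$ and $\varepsilon_{unre}$.

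I expect the main obstacle to be justifying the $\max$ rather than a sum between the released and unreleased regimes. Whether $h$ crosses $\tau$ is itself random and data-dependent, so the two cases are not cleanly separated by public information. My first attempt would be to view (a) and (b) as one histogram mechanism whose guarantee is already the $\max$ form of Lemma~\ref{lem:hh-all}. The centroid release is then a mechanism adaptively applied only to the released tags, so its privacy cost is incurred only on outcomes where $h\in\mathcal{H}$. On outcomes with $h\notin\mathcal{H}$, the conditional distributions of (c) coincide, and only the $\varepsilon_{unre}$-type loss remains. Carefully partitioning $\mathcal{O}$ by whether $h\in\mathcal{H}$, and bounding each piece with the matching pair, should yield the stated parameters.
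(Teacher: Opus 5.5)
Your route is the paper's. Its justification of Theorem~\ref{thm:dp-final} consists only of three claims: the heavy-hitter and centroid guarantees compose sequentially for points in frequent buckets, points in infrequent buckets ``compose in parallel'', and the sensitivity-failure probability is added to $\delta$. The gap is in the step you yourself flag. Split $\mathcal{O}$ into $\mathcal{O}_1=\mathcal{O}\cap\{h\in\mathcal{H}\}$ and $\mathcal{O}_2=\mathcal{O}\cap\{h\notin\mathcal{H}\}$, and bound each piece with its own pair $(\varepsilon_i,\delta_i)$. This gives
\[
\Pr[\mathcal{M}(X)\in\mathcal{O}]\le e^{\max(\varepsilon_1,\varepsilon_2)}\Pr[\mathcal{M}(X')\in\mathcal{O}]+\delta_1+\delta_2,
\]
so the $\varepsilon$'s combine by a max but the additive slacks add. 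What your argument yields is $\delta_{fre}+\delta_{agg}+\big(f(\sqrt{k}L/\Delta_{\textit{dist}})\big)^k+\delta_{unre}$, not the stated max.

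Parallel composition gives a max in $\delta$ only when the partition of records is fixed record-by-record, independently of the other records and of the mechanism's coins, so that a neighboring change touches a single branch. Here, whether $h$ is released depends on the sampled count $n_h$, and both slacks arise for the same neighboring pair. For example, for $N_h\approx\tau/q$ the sample-and-threshold slack is carried by $\{h\in\mathcal{H}\}$. Meanwhile almost all the mass sits in the complement, where the TSDLap truncation slack is incurred. Neither your partition nor the paper's composition sketch supplies an ingredient that collapses the sum to a max. As written, you prove the stated $\varepsilon$ with the summed $\delta$, which is at most twice the claimed value.

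A smaller point in the same step: $\mathcal{O}_2$ is not controlled by $(\varepsilon_{unre},\delta_{unre})$ alone. Write $n'_h$ for the sampled count of $h$ under $X'$. In the direction from $X'$ to $X$, the probability that $h$ has multiplicity $\tau-1$ changes from $\Pr[n'_h=\tau-1]$ to $(1-p_s)\Pr[n'_h=\tau-1]+p_s\Pr[n'_h=\tau-2]$. This happens because the sampled branch starting from $\tau-1$ leaves $\mathcal{O}_2$ for $\{h\in\mathcal{H}\}$. The resulting ratio can approach $1/(1-p_s)$, and this boundary crossing is outside the scope of the TSDLap bound. It is absorbed by the first branch of the max only because sample-and-threshold needs $\varepsilon_{fre}\ge\ln\frac{1}{1-p_s}$ (as with the paper's choice $\varepsilon_{fre}=v\ln\frac{1}{1-p_s}$, $v\ge 1$). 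Your write-up should say so explicitly.
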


\section{Experiments}\label{sec:exp}
We demonstrate the utility of our algorithm and compare it with the state of the art ~\cite{xie2024differentially}. Our goal is to show that our \name achieves utility comparable to the strong centralized baseline under the same privacy constraints while being more distributed-friendly and offering additional protection for infrequent texts.

\vspace{0pt}
\noindent \textbf{Datasets.}
Our experiments include four datasets containing user queries to search engines/LLMs: {TaylorAI}~\cite{tayloraiuserqueriesdataset}, {Instructions-2M}~\citep{morris2023language}, {LMSYS Chat}~\citep{zheng2023lmsyschat1m}, and {Yelp}~\cite{zhang2015character}. See Appendix~\ref{app:exp} for details.\looseness=-1

\vspace{0pt}
\noindent \textbf{Infrequent and frequent texts.} Recall Section~\ref{sec:semantic-support}. We enforce additional privacy protection for infrequent user texts that have fewer than $t$ neighbors in its $r$-neighborhood in the embedding space. Conversely, we say a text is frequent if it has more than $t$ neighbors in the $r$-neighborhood. We take $r=0.5$ and $t=100$, unless otherwise stated.

\vspace{0pt} \noindent \textbf{Baseline and implementation.} We focus on the state-of-the-art baseline for DP text synthesis from the centralized setting, i.e., Aug-PE~\cite{xie2024differentially}, and use it as the reference point for evaluating the utility of our distributed \name. We omit other baselines as they achieve similar performance as~\cite{xie2024differentially}, while some of them relying on additional gradient access to the LM~\cite{yueetal2023synthetic,flemingsetal2024differentially,hou2025private}. Since we assume no label information is available before hand and some datasets such as ~\cite{tayloraiuserqueriesdataset,morris2023language} do not have such information, we remove such information (if there is any) from the prompts for a fair comparison. We follow their setup and use Aug-PE's published
hyperparameters. We use GPT-2~\cite{radford2019language} as the generator API for~\cite{xie2024differentially} and \textsf{gtr-t5-base}~\citep{ni2022large} as the embedding model. \textsf{gtr-t5-base} maps each text input into a 768-dimensional vector. \looseness=-1

For our implementation, we simulate our algorithm on a single machine with $4\times$ A100 GPUs. We use the same embedding model. For the projected dimension, we focus on $k=20$ unless otherwise stated. We set the heavy hitter threshold $\tau=p_s t$, where $p_s$ is the subsampling rate controlling the privacy budget for the heavy hitter release. We vary $p_s\in\{0.3,0.5,0.6\}$. Next, the Gaussian noise parameter $\sigma$ for centroid computation is determined by the remaining privacy budget. Throughout we fix the overall $\delta=10^{-6}$ and vary the overall $\varepsilon\in\{4,8,16,\infty\}$. For $\varepsilon=\infty$, we run our Fre-E2T without any additional DP noise or client subsampling. We leverage the open-sourced \textsf{vec2text}~\cite{morris-etal-2023-text} to invert embeddings back to texts. This model contains 235M parameters. By default, the server uses \textsf{vec2text} to synthesize one text for each obtained DP centroid of embedding vectors. For our Fre-E2T, we also include a variant that uses GPT-2 to obtain more variations of each synthetic text generated from the released embedding vector, ensuring a fair comparison with Aug-PE. GPT-2 never touches the user data without affecting DP (see Section~\ref{sec:sol}). To ensure a fair comparison, both Aug-PE and our Fre-E2T obtain $3000$ texts in total.  

\vspace{0pt} 
\noindent\textbf{Utility metrics.}
Given a synthetic dataset $\widetilde V$ and the original private dataset $V$, and constants $r$ and $t$, we measure utility using three metrics that largely follow prior work~\cite{yueetal2023synthetic,xie2024differentially}: precision (higher is better), recall (higher is better), and average $\mathcal{L}_2$ distance (lower is better). We also report F1 score, the harmonic mean of precision and recall. The main here difference is that we compute these metrics on semantically frequent texts, i.e., $V\setminus V_{\text{infre}}(r,t)$, since Definition~\ref{def:semantic-protect} explicitly restricts outputs near semantically infrequent texts. 

\begin{align}\label{eq:prec}
\hspace{-0.5em}\text{Prec}:=\frac{\sum_{\widetilde v\in \widetilde V}\mathbf{1}\big(\min_{v\in V\setminus V_{\text{infre}}(r,t) }\|E(\widetilde v)-E(v)\|_2\leq \alpha r\big)}{|\widetilde V|}.
\end{align}

\begin{align}\label{eq:rec}
\hspace{-0.5em}\text{Rec}:&= \frac{\sum_{v\in V\setminus V_{\text{infre}}(r,t)}\mathbf{1}\big(\min_{\widetilde v\in \widetilde V}\|E(\widetilde v)-E(v)\|_2\leq \alpha r\big)}{|V\setminus V_{\text{infre}}(r,t)|}.
\end{align}

\begin{align}\label{eq:l2}
\hspace{-0.5em}\text{Dist}:&=    \frac{\sum_{\widetilde v\in \widetilde V}\min_{v\in V\setminus V_{\text{infre}}(r,t)}\|E(\widetilde v)-E(v)\|_2}{|\widetilde V|}.
\end{align}

Here, a larger $\alpha$ makes the neighborhood condition in Precision/Recall easier to satisfy. We fix $\alpha=1.5$ throughout unless otherwise stated, giving a threshold $\alpha r=0.75$.

\subsection{Main Results}

\begin{table*}[t]
    \centering
    \small
    \caption{Precision, recall, F1 score, and $\mathcal{L}_2$ distance (lower is better) across
    datasets for $\varepsilon\in\{4,8,16,\infty\}$. $\varepsilon=\infty$
    denotes the non-DP setting. GPT-2 is used for paraphrasing by Fre-E2T
    and for generation and variation by Aug-PE.}
    \label{tab:combined_dp_results}

    \setlength{\tabcolsep}{2pt}
    \renewcommand{\arraystretch}{0.95}

    \begin{tabular*}{\textwidth}{
        @{\extracolsep{\fill}}
        cc *{3}{cccc}
        @{}
    }
    \toprule
    \multirow[c]{2}{*}{Dataset}
    & \multirow[c]{2}{*}{$\varepsilon$}
    & \multicolumn{4}{c}{\shortstack{Fre-E2T\\(no GPT-2)}}
    & \multicolumn{4}{c}{\shortstack{Fre-E2T\\(GPT-2)}}
    & \multicolumn{4}{c}{\shortstack{Aug-PE\\(GPT-2)}} \\
    \cmidrule(lr){3-6}\cmidrule(lr){7-10}\cmidrule(lr){11-14}
    &
    & Precision & Recall & F1 & $\mathcal{L}_2$
    & Precision & Recall & F1 & $\mathcal{L}_2$
    & Precision & Recall & F1 & $\mathcal{L}_2$ \\
    \midrule

    \multirow[c]{4}{*}{TaylorAI}
    & 4
    & $0.31$ & $0.92$ & $0.47$ & $0.82$
    & \textbf{1.00} & $0.98$ & \textbf{0.99} & $0.41$
    & $0.99$ & \textbf{0.99} & \textbf{0.99} & \textbf{0.40} \\
    & 8
    & $0.62$ & $0.95$ & $0.75$ & $0.74$
    & \textbf{1.00} & $0.98$ & \textbf{0.99} & $0.41$
    & $0.99$ & \textbf{0.99} & \textbf{0.99} & \textbf{0.40} \\
    & 16
    & $0.82$ & $0.97$ & $0.88$ & $0.68$
    & \textbf{1.00} & $0.98$ & \textbf{0.99} & $0.41$
    & $0.98$ & \textbf{0.99} & $0.98$ & \textbf{0.40} \\
    & $\infty$
    & $0.83$ & $0.97$ & $0.89$ & $0.67$
    & \textbf{1.00} & $0.98$ & \textbf{0.99} & \textbf{0.40}
    & $0.98$ & \textbf{0.99} & $0.98$ & \textbf{0.40} \\

    \midrule

    \multirow[c]{4}{*}{\shortstack{Instructions-\\2M}}
    & 4
    & $0.25$ & $0.32$ & $0.28$ & $0.92$
    & $0.97$ & \textbf{0.71} & \textbf{0.82} & \textbf{0.61}
    & \textbf{0.98} & $0.60$ & $0.75$ & $0.84$ \\
    & 8
    & $0.38$ & $0.34$ & $0.36$ & $0.80$
    & $0.97$ & \textbf{0.74} & \textbf{0.84} & \textbf{0.57}
    & \textbf{0.98} & $0.60$ & $0.75$ & $0.83$ \\
    & 16
    & $0.56$ & $0.37$ & $0.45$ & $0.71$
    & $0.98$ & \textbf{0.76} & \textbf{0.86} & \textbf{0.54}
    & \textbf{0.99} & $0.59$ & $0.74$ & $0.84$ \\
    & $\infty$
    & $0.69$ & $0.39$ & $0.50$ & $0.62$
    & \textbf{0.99} & \textbf{0.79} & \textbf{0.88} & \textbf{0.54}
    & \textbf{0.99} & $0.60$ & $0.75$ & $0.83$ \\

    \midrule

    \multirow[c]{4}{*}{LMSYS Chat}
    & 4
    & $0.31$ & $0.74$ & $0.43$ & $0.92$
    & \textbf{0.96} & \textbf{0.82} & \textbf{0.89} & \textbf{0.73}
    & $0.92$ & $0.80$ & $0.86$ & $0.86$ \\
    & 8
    & $0.52$ & $0.83$ & $0.64$ & $0.85$
    &\textbf{0.96} & \textbf{0.87} & \textbf{0.91} & \textbf{0.56}
    & $0.92$ & $0.80$ & $0.86$ & $0.91$ \\
    & 16
    & $0.75$ & $0.86$ & $0.80$ & $0.69$
    & \textbf{0.97} & \textbf{0.88} & \textbf{0.92} & \textbf{0.53}
    & $0.90$ & $0.81$ & $0.85$ & $0.89$ \\
    & $\infty$
    & $0.83$ & $0.84$ & $0.83$ & $0.67$
    & \textbf{0.97} & \textbf{0.86} & \textbf{0.91} & \textbf{0.57}
    & $0.91$ & $0.80$ & $0.85$ & $0.91$ \\

    \midrule

    \multirow[c]{4}{*}{Yelp}
    & 4
    & $0.30$ & $0.96$ & $0.46$ & $0.82$
    & \textbf{1.00} & \textbf{1.00} & \textbf{1.00} & \textbf{0.37}
    & $0.98$ & \textbf{1.00} & $0.99$ & \textbf{0.37} \\
    & 8
    & $0.65$ & \textbf{1.00} & $0.76$ & $0.72$
    & \textbf{1.00} & \textbf{1.00} & \textbf{1.00} & \textbf{0.37}
    & $0.94$ & \textbf{1.00} & $0.97$ & \textbf{0.37} \\
    & 16
    & $0.87$ & \textbf{1.00} & $0.93$ & $0.67$
    & \textbf{1.00} & \textbf{1.00} & \textbf{1.00} & $0.38$
    & $0.98$ & \textbf{1.00} & $0.99$ & \textbf{0.36} \\
    & $\infty$
    & $0.97$ & \textbf{1.00} & $0.98$ & $0.64$
    & \textbf{1.00} & \textbf{1.00} & \textbf{1.00} & \textbf{0.37}
    & $0.95$ & \textbf{1.00} & $0.97$ & \textbf{0.37} \\

    \bottomrule
    \end{tabular*}
\end{table*}

\noindent\textbf{Precision and recall.} We report the utility of our \name under different privacy budgets in Table~\ref{tab:combined_dp_results}. Overall, the precision, recall, and F1 score increases as the privacy constraint $\epsilon$ increases from $4$ to $\infty$. Namely, as the privacy constraint loosens, the synthetic texts achieve higher utility. This trend is consistent across all four datasets. On Yelp and TaylorAI, both Aug-PE~\cite{xie2024differentially} and our method achieve almost perfect precision and recall. On Instructions-2M and LMSYS Chat, our \name achieves {notable improvements}. We attribute such differences to the datasets: TaylorAI and Yelp have long sequences, leading to embedding vectors of smaller norms in general, hence easier to synthesize due to a smaller space. Other datasets contain shorter sequences and are in general more difficult to obtain closer synthetic texts.

Comparing the utility of \name (no GPT-2) and \name (GPT-2), we see that paraphrasing leads to higher utility. Utilizing stronger embedding models and corresponding \textsf{vec2text} reconstruction models will likely lead to further improvements. We leave that as a future work direction (discussed in Section~\ref{sec:design-choices}). 

\vspace{0pt}
\noindent\textbf{$\mathcal{L}_2$ distance.} The distance metric
of Eq.~\ref{eq:l2} is an absolute distance in embedding space. Here, we include two references using the same embedding model. The \emph{same-dataset} reference replaces the synthetic set by $3000$ texts drawn uniformly from the frequent subset and scores them against the remaining frequent texts, setting the lower bound. The \emph{unrelated} reference instead draws $3000$ texts uniformly from the other three datasets and scores them against the target corpus: demonstrating what the distance metric looks like when the released text is irrelevant with the data. On TaylorAI, Instructions-2M, LMSYS Chat, and Yelp, the \emph{same-dataset} and \emph{unrelated} references are $0.20$ and $1.37$, $0.12$ and $1.42$, $0.06$ and $1.29$, and $0.26$ and $1.51$, respectively. Both methods fall into the range set up by the two references on all four datasets.

On Instructions-2M and LMSYS Chat, Fre-E2T is closer to the real input texts than Aug-PE under every privacy budget. On TaylorAI and Yelp, the two are within 0.02 of each other and quite close to the lower bound. Aug-PE and Fre-E2T also differ in how they use the privacy budget. The distance of \name falls monotonically with $\varepsilon$ on all four datasets shown, whereas Aug-PE barely moves. Similar trends are observed in Table~\ref{tab:combined_dp_results}. We explain this next.

To further inspect the closeness between a synthetic text and a input private text, in Figure~\ref{fig:hist-l2}, we report histograms of the smallest $\mathcal{L}_2$ distance from a frequent text (in blue) and an infrequent text (in red) to a synthetic text, with Instructions-2M dataset as the input. For private texts that are frequent, the $\mathcal{L}_2$ distances to the nearest synthetic texts are smaller, compared with infrequent texts, which is aligned with our utility objective and the privacy framework of \textit{semantic support protection}---texts that are infrequent in the input should not find close neighbors in the synthetic output. We observe similar trends on other datasets and omit them here. 

\begin{figure}[t]
\centering
    \centering
    % --- Row 1, Plot 1 ---
    \begin{tikzpicture}[baseline]
        \begin{axis}[
          width=0.58\linewidth, height=3.0cm, % Changed from 0.55 to 0.48
          xlabel=Distance, ylabel=Density,
          ylabel near ticks,         xlabel near ticks,
          label style={font=\scriptsize},
          tick label style={font=\scriptsize},
          xmin=0, xmax=2.5,
          grid=major, grid style={dashed, gray!30},
          bar width=2pt,
          title={$\epsilon=2$},
          title style={font=\scriptsize}
        ]
        \addplot[ybar, fill=red, fill opacity=0.3, draw=red!70!black]
          table[col sep=space, header=true, x=distance, y=uncommon]
          {data/instructions2m_distance_histogram_eps2.dat};
        \addplot[ybar, fill=blue, fill opacity=0.3, draw=blue!70!black, bar shift=2pt]
          table[col sep=space, header=true, x=distance, y=common]
          {data/instructions2m_distance_histogram_eps2.dat};
        \end{axis}
    \end{tikzpicture}% <--- 
    % --- Row 1, Plot 2 ---
    \begin{tikzpicture}[baseline]
        \begin{axis}[
          width=0.58\linewidth, height=3.0cm, % Changed from 0.55 to 0.48
          ymax=8,
          xlabel=Distance, 
          ylabel near ticks,         xlabel near ticks,
          label style={font=\scriptsize},
          tick label style={font=\scriptsize},
          xmin=0, xmax=2.5,
          grid=major, grid style={dashed, gray!30},
          bar width=2pt,
          title={$\epsilon=4$},
          title style={font=\scriptsize},
          legend style={
        font=\scriptsize,
            at={(0.98,0.98)},
            anchor=north east,
            inner xsep=1pt,
            inner ysep=1pt
          }
        ]
        \addplot[ybar, fill=red, fill opacity=0.3, draw=red!70!black]
          table[col sep=space, header=true, x=distance, y=uncommon]
          {data/instructions2m_distance_histogram_eps4.dat};
        \addplot[ybar, fill=blue, fill opacity=0.3, draw=blue!70!black, bar shift=2pt]
          table[col sep=space, header=true, x=distance, y=common]
          {data/instructions2m_distance_histogram_eps4.dat};
    \legend{Infrequent, Frequent}
    \end{axis}
    \end{tikzpicture}
    
    \vspace{0.1cm} 
    
    % --- Row 2, Plot 3 ---
    \begin{tikzpicture}[baseline]
        \begin{axis}[
          width=0.58\linewidth, height=3.0cm, 
          ymax=8,
          xlabel=Distance, ylabel=Density,
          ylabel near ticks,         xlabel near ticks,
          label style={font=\scriptsize},
          tick label style={font=\scriptsize},
          label style={font=\scriptsize},
          tick label style={font=\scriptsize},
          xmin=0, xmax=2.5,
          grid=major, grid style={dashed, gray!30},
          bar width=2pt,
          title={$\epsilon=8$},
          title style={font=\scriptsize}
        ]
        \addplot[ybar, fill=red, fill opacity=0.3, draw=red!70!black]
          table[col sep=space, header=true, x=distance, y=uncommon]
          {data/instructions2m_distance_histogram_eps8.dat};
        \addplot[ybar, fill=blue, fill opacity=0.3, draw=blue!70!black, bar shift=2pt]
          table[col sep=space, header=true, x=distance, y=common]
          {data/instructions2m_distance_histogram_eps8.dat};
        \end{axis}
    \end{tikzpicture}% <--- 
    % --- Row 2, Plot 4 ---
    \begin{tikzpicture}[baseline]
        \begin{axis}[
          width=0.58\linewidth, height=3.0cm, % Changed from 0.55 to 0.48
          ymax=8,
          xlabel=Distance, 
          ylabel near ticks,         xlabel near ticks,
          xmin=0, xmax=2.5,
          grid=major, grid style={dashed, gray!30},
          label style={font=\scriptsize},
          tick label style={font=\scriptsize},
          legend style={
            font=\scriptsize,
            at={(0.98,0.98)},
            anchor=north east,
            inner xsep=1pt,
            inner ysep=1pt
          },
          bar width=2pt,
          title={$\epsilon=\infty$ (Non-DP)},
          title style={font=\scriptsize}
        ]
        \addplot[ybar, fill=red, fill opacity=0.3, draw=red!70!black]
          table[col sep=space, header=true, x=distance, y=uncommon]
          {data/instructions2m_distance_histogram_non_dp.dat};
        \addplot[ybar, fill=blue, fill opacity=0.3, draw=blue!70!black, bar shift=2pt]
          table[col sep=space, header=true, x=distance, y=common]
          {data/instructions2m_distance_histogram_non_dp.dat};
        % \legend{Infrequent, Frequent}
        \end{axis}
    \end{tikzpicture}
\caption{The smallest distance from a frequent/infrequent text (blue/red) to a synthetic text on Instructions-2M. 
}
\label{fig:hist-l2}
\Description{Four histograms compare distances to the nearest synthetic text for frequent and infrequent Instructions-2M texts at privacy budgets 2, 4, 8, and the non-private setting. Frequent texts generally have smaller distances than infrequent texts.}
\end{figure}

\begin{figure}[t]
\centering
\begin{tikzpicture}
    \begin{axis}[
        xmode=log,
        width=5.8cm, height=3.0cm,
        xlabel={Number of neighbors},
        ylabel={Distance},
        ylabel near ticks,
        label style={font=\scriptsize},
        tick label style={font=\scriptsize},
        legend pos=outer north east,
        legend style={font=\scriptsize},
        grid=major, grid style={dashed, gray!30},
        mark size=2pt,
        every axis plot/.append style={thick}
    ]
    % DP with eps=2.0
    \addplot[color=green, mark=o] table[x=x, y=y, col sep=comma] 
        {data/instructions2m_distance_vs_neighbors_eps2.csv};
    \addlegendentry{$\varepsilon=2.0$}
    % DP with eps=4.0
    \addplot[color=orange, mark=o] table[x=x, y=y, col sep=comma] 
        {data/instructions2m_distance_vs_neighbors_eps4.csv};
    \addlegendentry{$\varepsilon=4.0$}
    % DP with eps=8.0
    \addplot[color=purple, mark=o] table[x=x, y=y, col sep=comma] 
        {data/instructions2m_distance_vs_neighbors_eps8.csv};
    \addlegendentry{$\varepsilon=8.0$}
    % DP with eps=16.0
   \addplot[color=brown, mark=o] table[x=x, y=y, col sep=comma] 
       {data/instructions2m_distance_vs_neighbors_eps16.csv};
    \addlegendentry{$\varepsilon=16.0$}
    \end{axis}
    \end{tikzpicture}
\caption{Average distances to synthetic texts for frequent texts as a function of neighbor number on Instructions-2M. }
\label{fig:sim_vs_nn}
\Description{Line plots of distance to synthetic texts against the number of neighbors, with a logarithmic horizontal axis. Curves compare privacy budgets 2, 4, 8, and 16 on Instructions-2M. Distances generally decrease as the number of neighbors increases.}
\end{figure}

We also group the frequent texts from Instructions-2M dataset by their number of neighbors within distance $r=0.5$, and report the average $\mathcal{L}_2$ distance to the closest synthetic text within each frequent-text group in Figure~\ref{fig:sim_vs_nn}. The trend is similar: as the number of neighbors of a frequent text increases, the $\mathcal{L}_2$ distance to its closest synthetic text decreases. This aligns with our algorithm design: for each heavy hitting bucket, the count of items in general increases as the number of neighbors for a frequent text therein increases. As we inject DP noises to the sum of embedding vectors, the DP noise averaged by the bucket count decreases as the count increases, leading to smaller $\mathcal{L}_2$ distances. 

\vspace{0pt}
\noindent\textbf{Comparison with Aug-PE~\cite{xie2024differentially}}. Aug-PE refines a population of candidate texts by having each private text vote for its nearest candidate, releasing that vote histogram under DP, and asking a language model for variations of the candidates that receive more votes. When no label or category information is available, the histogram is the only signal connecting the candidate synthetic texts with the real inputs. It reports how many private texts lie near a candidate, but not which region of the input distribution that candidate should evolve towards. The variation step is therefore undirected and the population never separates into clusters matching the structure of the input data, so reducing the noise on the histogram only sharpens a signal that carries little directional information. That is what the flat trend of Aug-PE shows. The same insensitivity of text utility to the privacy budget is also observed in the original paper~\cite{xie2024differentially}. Fre-E2T supplies this missing structure: it clusters the frequent texts and generates each released text from one cluster centroid. This centroid is a soft label that tells what the texts therein is meant to represent in the embedding space. The privacy budget influences the accuracy of those centroids; hence, less noise translates directly into shorter distances and higher precision and recall. \looseness=-1

\begin{figure}[t!]
\centering
% Required in the preamble:
% \usepackage{pgfplots}

    \begin{tikzpicture}
    \begin{groupplot}[
        group style={
            group size=2 by 2,
            horizontal sep=0.1\columnwidth,
            vertical sep=0.15\columnwidth,
        },
        width=0.55\columnwidth,
        height=0.33\columnwidth,
        xmin=0,
        xmax=1.5,
        ymin=1e-7,
        ymax=1,
        ymode=log,
        xlabel={$\psi$},
        xlabel near ticks,
        ylabel near ticks,
        enlarge x limits=0.1,
        enlarge y limits=0.1,
        label style={font=\scriptsize},
        title style={font=\scriptsize,yshift=-3pt},
        tick label style={font=\scriptsize},
        unbounded coords=jump,
    ]

    % ----------------------------------------------------------
    % Top-left: TaylorAI
    % ----------------------------------------------------------
    \nextgroupplot[
        title={TaylorAI},
        ylabel={Prob @$\psi r$},
    ]

    \addplot+[
        color=blue!60,
        mark=none,
        thick
    ] table[
        col sep=comma,
        skip first n=1,
        x index=0,
        y index=2
    ]{data/semantic_support_taylor.csv};

    \addplot+[
        color=red!60,
        mark=none,
        dashed,
        thick
    ] table[
        col sep=comma,
        skip first n=1,
        x index=0,
        y index=12
    ]{data/semantic_support_taylor.csv};

    \addplot+[
        blue,
        mark=none,
        thick
    ] table[
        col sep=comma,
        skip first n=1,
        x index=0,
        y index=10
    ]{data/semantic_support_taylor.csv};

    \addplot+[
        red,
        mark=none,
        dashed,
        thick
    ] table[
        col sep=comma,
        skip first n=1,
        x index=0,
        y index=20
    ]{data/semantic_support_taylor.csv};

    % ----------------------------------------------------------
    % Top-right: Instructions-2M
    % ----------------------------------------------------------
    \nextgroupplot[
        title={Instructions-2M},
        yticklabels={},
    ]

    \addplot+[
        color=blue!60,
        mark=none,
        thick
    ] table[
        col sep=comma,
        skip first n=1,
        x index=0,
        y index=2
    ]{data/semantic_support_instructions2m.csv};

    \addplot+[
        color=red!60,
        mark=none,
        dashed,
        thick
    ] table[
        col sep=comma,
        skip first n=1,
        x index=0,
        y index=12
    ]{data/semantic_support_instructions2m.csv};

    \addplot+[
        blue,
        mark=none,
        thick
    ] table[
        col sep=comma,
        skip first n=1,
        x index=0,
        y index=10
    ]{data/semantic_support_instructions2m.csv};

    \addplot+[
        red,
        mark=none,
        dashed,
        thick
    ] table[
        col sep=comma,
        skip first n=1,
        x index=0,
        y index=20
    ]{data/semantic_support_instructions2m.csv};

    % ----------------------------------------------------------
    % Bottom-left: LMSYS Chat
    % Includes title, y-axis label, and y-axis tick labels
    % ----------------------------------------------------------
    \nextgroupplot[
        title={LMSYS Chat},
        ylabel={Prob @$\psi r$},
    ]

    \addplot+[
        color=blue!60,
        mark=none,
        thick
    ] table[
        col sep=comma,
        skip first n=1,
        x index=0,
        y index=2
    ]{data/semantic_support_lmsys_chat.csv};

    \addplot+[
        color=red!60,
        mark=none,
        dashed,
        thick
    ] table[
        col sep=comma,
        skip first n=1,
        x index=0,
        y index=12
    ]{data/semantic_support_lmsys_chat.csv};

    \addplot+[
        blue,
        mark=none,
        thick
    ] table[
        col sep=comma,
        skip first n=1,
        x index=0,
        y index=10
    ]{data/semantic_support_lmsys_chat.csv};

    \addplot+[
        red,
        mark=none,
        dashed,
        thick
    ] table[
        col sep=comma,
        skip first n=1,
        x index=0,
        y index=20
    ]{data/semantic_support_lmsys_chat.csv};

    % ----------------------------------------------------------
    % Bottom-right: Yelp
    % ----------------------------------------------------------
    \nextgroupplot[
        title={Yelp},
        yticklabels={},
        legend style={
            font=\tiny,
            at={(0.03,0.97)},
            anchor=north west,
            draw=black,
            fill=white,
            legend cell align=left,
        },
    ]

    \addplot+[
        color=blue!60,
        mark=none,
        thick
    ] table[
        col sep=comma,
        skip first n=1,
        x index=0,
        y index=2
    ]{data/semantic_support_yelp.csv};

    \addplot+[
        color=red!60,
        mark=none,
        dashed,
        thick
    ] table[
        col sep=comma,
        skip first n=1,
        x index=0,
        y index=12
    ]{data/semantic_support_yelp.csv};

    \addplot+[
        blue,
        mark=none,
        thick
    ] table[
        col sep=comma,
        skip first n=1,
        x index=0,
        y index=10
    ]{data/semantic_support_yelp.csv};

    \addplot+[
        red,
        mark=none,
        dashed,
        thick
    ] table[
        col sep=comma,
        skip first n=1,
        x index=0,
        y index=20
    ]{data/semantic_support_yelp.csv};

    \legend{Fre-E2T, Aug-PE}

    \end{groupplot}
    \end{tikzpicture}
\caption{\textbf{Empirical semantic support protection.}
    Probability (lower is better) that an infrequent text has a neighbor
    among the synthetic texts within distance $\psi r$ where $\psi$ sweeps in $[0.1, 1.5]$. Lighter and
    darker colors show results on $(r,t)$-infrequent texts with $t=10$
    and $t=100$, respectively.}
    \label{fig:semantic}
    \Description{Plots across four datasets show the probability that infrequent input texts have nearby synthetic neighbors as the distance threshold increases. Curves distinguish privacy budgets and infrequency thresholds of 10 and 100.}
\end{figure}

\begin{table}[t]
\setlength{\tabcolsep}{3.5pt}
\caption{Downstream classification accuracy on Yelp.}
\label{tab:downstream}
\begin{tabular}{lcccc}
\toprule
 & \multicolumn{4}{c}{$\varepsilon$} \\
\cmidrule(lr){2-5}
Method & $4$ & $8$ & $16$ & $\infty$ \\
\midrule
\name: nearest centroid, $k=1$ & 0.35 & 0.45 & 0.46 & 0.47 \\
\name: nearest centroid, best $k$ & 0.35 & 0.46 & 0.48 & 0.49 \\
\name: RoBERTa (paraphrase $T=0.7$) & 0.32 & 0.44& 0.46 & 0.48\\
\name: RoBERTa (paraphrase $T=1.0$) & 0.32 & 0.42& 0.46 & 0.47\\
\name: RoBERTa (paraphrase $T=1.4$) & 0.30 & 0.45& 0.47 & 0.51\\
\midrule
Aug-PE: RoBERTa (default $T=1.4$) & 0.49 & 0.51\, & 0.52\, & 0.52 \\
\midrule
Non-DP: RoBERTa & \multicolumn{4}{c}{0.63} \\
\bottomrule
\end{tabular}
\end{table}

\vspace{0pt}
\noindent\textbf{Semantic support protection.} Recall Definition~\ref{def:semantic-protect},
the parameters $r^*$ and $\delta^*$ control the strength of semantic support protection: larger $r^*$ and smaller $\delta^*$ impose stronger restrictions on the probability of outputs appearing near---i.e., within distance $r^*$---semantically infrequent texts, which are defined by $r$ and $t$ (recall Definition~\ref{def:euclidean-secret}). We empirically evaluate the level of $\delta^*$ from $(r,t,r^*,\delta^*)$-Semantic Support Protection in Figure~\ref{fig:semantic}. We fix $\varepsilon=4$ and vary $r^*=\psi r$ for $\psi\in [0.1 , 1.5]$ and $t$ from $10$ to $100$. We observe highly similar results on different $t$'s and report only for $t=10$ and $100$. In TaylorAI and Yelp, we achieve perfect protections of infrequent texts (i.e., $\delta^*=0$) that vanish in the log-scale plot. Also, fixing $r^*=r$ (namely, $\psi=1$), \name achieves a $\delta^*$ near $10^{-5}$ under the simplified Definition~\ref{def:semantic-protect-simplified} across all datasets. In general, the semantic support protection of our \name is much stronger than Aug-PE~\cite{xie2024differentially}, and sometimes \textit{by orders of magnitude}. \looseness=-1

In practice, we advice to set $r^*=r$ following Definition~\ref{def:semantic-protect-simplified}, with $\delta^*$ set to $10^{-5}$---a small failure probability that is also widely used in DP~\cite{abadi,DiscreteGaussian}. Regarding $r$, any value smaller than the reference distance from unrelated datasets makes sense. Here, we have chosen $r=0.5$ for an illustration. For $t$, we recommend a negligible number compared with the population. Here, each dataset contains around $1$ million records and we have chosen $t=10$ and $100$. 

\subsection{More Results on Text Quality}\label{sec:text-quality}

\vspace{0pt}
\noindent\textbf{Downstream utility.} For dataset Yelp with labels, we train a classifier to rate the released
texts, and evaluate
on the official $50{,}000$ test split, which is never used during
synthesis. The task is five-way star prediction. We consider two ways of using the output of \name. The first is directly operated on the released noisy centroids: a test review is embedded using the same model and classified by a majority vote over the $k$ nearest centroids released by Fre-E2T, across all $5$ labels. We sweep $k\in\{1,5,10,20\}$ and report the best result. The second uses the inverted texts from Fre-E2T to fine-tune the RoBERTa-base classifier, following the setup in~\cite{xie2024differentially}. For Fre-E2T, we paraphrase each text $5$ times, obtaining around $3000$ texts in total for each temperature. We also make Aug-PE output the same number of texts, to make it a fair comparison. \looseness=-1

Table~\ref{tab:downstream} reports the classification accuracy. The non-DP baseline fine-tuned on RoBERTa achieves accuracy of $0.63$ while the random guessing baseline is $0.20$. With $\varepsilon\ge 8$, our method achieves competitive performance as the competitor Aug-PE. Notably, the difference between using the inverted texts and the centroids directly is not large, which confirms the utility of the DP centroids obtained by \name. Still, there remains some gap to fill, and we conjecture that a better vector-to-text inverter could bridge this gap. We defer further discussions on the limitations and future work to Section~\ref{sec:design-choices}. \looseness=-1

\vspace{0pt}
\noindent\textbf{Diversity.} Each cluster of \name emits one text from the inversion model. However, diversity can be improved by paraphrasing and tuning the temperature of the paraphraser model (larger $T$ means more randomness and therefore higher diversity). In Table~\ref{tab:downstream}, we can see that the classification utility is largely unaffected by the paraphrasing temperature, which is aligned with~\cite{xie2024differentially}. Table~\ref{tab:bleu} shows that higher temperature leads to better diversity, measured as the Self-BLEU (smaller is more diverse) per cluster, averaged over all clusters. Still, paraphrasing does not fully close the gap to real texts, since each cluster ultimately emits only one centroid based on the users' contributions. We acknowledge this as a limitation.

\begin{table}[t]
\centering
\caption{Diversity in BLEU score ($\downarrow$) of Fre-E2T under different temperatures for paraphrasing, evaluated on Yelp.}
\label{tab:bleu}

\begin{tabular}{lccccc}
\toprule
\cmidrule(lr){2-5}
Temperature 
& $\varepsilon=4$
& $\varepsilon=8$
& $\varepsilon=16$
& $\varepsilon=\infty$
& Real \\
\midrule

$T=0.7$
& 52.3
& 50.8
& 48.4
& 46.4
&   \\

$T=1.0$
& 49.6
& 47.3
& 46.3
& 45.4
& 27.7 \\

$T=1.4$
& 47.5
& 44.1
& 42.1
& 41.7
& \\

\bottomrule
\end{tabular}
\end{table}

\vspace{0pt}
\noindent\textbf{Perpelxity.} We also evaluate the released texts directly against matched-size random samples from
the frequent subset of the original input, based on the perplexity score (smaller is more natural) computed from GPT-2. As we can see from Table~\ref{tab:ppl-main}, the average perplexity score of Fre-E2T improves monotonically as $\varepsilon$ grows and
approaches the real text samples. Note that the average metric can be affected by outliers. We show that most of the texts are deemed natural---namely, they have a low perplexity---in Appendix~\ref{app:diversity} (see also the formal definition of perplexity).

\begin{table}[t]
\centering
\caption{Average perplexity ($\downarrow$) of \name evaluated with GPT-2. }
\label{tab:ppl-main}

\begin{tabular}{lccccc}
\toprule
\cmidrule(lr){2-5}
Dataset
& $\varepsilon=4$
& $\varepsilon=8$
& $\varepsilon=16$
& $\varepsilon=\infty$
& Real \\
\midrule

TaylorAI
& 196
& 125
& 74
& 51
& 22 \\

Instructions-2M
& 95
& 42
& 26
& 26
& 57 \\

LMSYS Chat
& 135
& 87
& 52
& 31
& 30 \\

Yelp
& 184
& 90
& 51
& 34
& 27 \\

\bottomrule
\end{tabular}
\end{table}

\newcommand{\ps}{p_{s}}
\newcommand{\epsfre}{\varepsilon_{\mathrm{fre}}}
\newcommand{\dfre}{\delta_{\mathrm{fre}}}
\newcommand{\dsens}{\delta_{\mathrm{sens}}}
\newcommand{\dtot}{\delta_{\mathrm{tot}}}
\newcommand{\Inst}{Instructions-2M}
\newcommand{\Lmsys}{LMSYS Chat}
\newcommand{\best}[1]{\textbf{\boldmath #1}}

\subsection{Ablation Studies}\label{app:ablation}
We ablate each parameter of Fre-E2T on \Inst{} and \Lmsys{}. Unless otherwise stated, parameters take their default values $k=20$, $t=100$, bucket edge
$L=2r/\sqrt{k}$, $r=0.5$ and we conduct experiments mostly on $\varepsilon=8$ to observe a stable trend. Results are evaluated on text level, without GPT-2 paraphrasing, unless otherwise stated. Finally, we recall the couplings of the parameters: the heavy-hitter threshold enters the accounting only through
$\tau=\ps\,t$, and the sensitivity bound $\Delta=u\,r$ is paid for with
$\dsens=\big(f(2/u)\big)^{k}$, so $\Delta$ must be calibrated together with $L$ and $k$.

\vspace{0pt}\noindent\textbf{Subsampling rate $\ps$.} At fixed $t$, $\ps$ moves both terms of the budget: $\epsfre=v\ln(1/(1-\ps))$
grows with $\ps$, while $\dfre$ shrinks with $\tau=\ps t$. Here, $v$ controls the budget split between the heavy hitter histogram and the centroid computation. The feasible region at
overall $\delta=10^{-6}$ is therefore bounded on {both} sides: $\ps\le0.1$ exhausts
the $\delta$ budget for every $v\le5$ ($\dfre\ge4\cdot10^{-5}$). Inside the feasible region, the optimal $\ps$ is near $0.5$, as Table~\ref{tab:ps} shows. For each $\ps$ we have chosen the budget split $v$ that minimizes the
calibrated $\sigma$. The $v=\{5,5,4,4,3\}$ used for the main experiments is not that choice: at
$\eps=8$, $v=2$ raises F1 from
$0.36$ to $0.42$ on \Inst{} and from $0.64$ to $0.71$ on \Lmsys{}. We advise to choose {$\ps$} as large as the $\eps$ budget allows and confirm that we did not specifically fine-tune the parameters of \name to gain advantage over Aug-PE.

\begin{table}[t]
\centering
\caption{Subsampling rate $\ps$ at $t=100$ and $\eps=8$. Parameter $v$ is chosen to minimize the noise scale $\sigma$.}
\label{tab:ps}

\begin{tabular}{cccc}
\toprule
$\ps$ & $(v,\sigma)$ & F1 \Inst{} & F1 \Lmsys{} \\
\midrule

$0.15$ & $(5,\,0.914)$ & $0.35$          & $0.45$ \\
$0.20$ & $(4,\,0.873)$ & $0.33$          & $0.51$ \\
$0.30$ & $(3,\,0.886)$ & $0.37$          & $0.61$ \\
$0.50$ & $(2,\,0.923)$ & $0.42$   & $0.71$ \\
$0.70$ & $(2,\,1.067)$ & $0.35$          & $0.74$ \\

\bottomrule
\end{tabular}
\end{table}

\vspace{0pt}\noindent\textbf{Bucket edge $L$.} We scale $L=2r/\sqrt{k}$ by $m\in\{0.5,\dots,2\}$ with $\Delta=2.4\,rm$, which
keeps the failure probability of sensitivity bound $\dsens=8.9\cdot10^{-11}$ invariant. Accordingly, we re-calibrate $\sigma$, which is exactly proportional to $m$. We list the values of $\sigma$ in Table~\ref{tab:L-sigma}. The results are shown in Table~\ref{tab:L}. With differential privacy ($\varepsilon<\infty$), smaller buckets win consistently: halving $L$ halves the
noise, and still there are enough buckets that reach the threshold $\tau$. At $\eps=\infty$, F1 score first increases and then decreases as the $L$ increases. This is because without noise, coarser buckets improves the recall while reducing the precision in general. The default $m=1$ in our main setup is not fine-tuned to optimize our Fre-E2T, ensuring a fair comparison with Aug-PE. 

\begin{table}[t]
\caption{F1 score under different bucket-edge $L$ scaled by multiplier $m$.}
\label{tab:L}
\begin{tabular}{llccccc}
\toprule
& & \multicolumn{5}{c}{$m$}\\
\cmidrule(l){3-7}
$\eps$ & Dataset & $0.5$ & $0.75$ & $1$ & $1.5$ & $2$\\
\midrule
$2$ & \Inst{}      & $0.18$ & {$0.18$} & $0.16$ & $0.14$ & $0.12$\\
    & \Lmsys{}     & {$0.34$} & $0.29$ & $0.24$ & $0.24$ & $0.25$\\
$8$ & \Inst{}      & {$0.41$} & $0.37$ & $0.35$ & $0.29$ & $0.28$\\
    & \Lmsys{}     & {$0.71$} & $0.69$ & $0.66$ & $0.64$ & $0.60$\\
$\infty$ & \Inst{} & $0.50$ & $0.50$ & $0.50$ & {$0.76$} & $0.72$\\
    & \Lmsys{}     & $0.78$ & $0.79$ & $0.83$ & {$0.86$} & $0.86$\\
\bottomrule
\end{tabular}
\end{table}

\vspace{0pt}\noindent\textbf{Embedding model.} We repeat the pipeline with another open-sourced embedding model \texttt{gte-modernbert-base}~\cite{gte}. Since embedding norm scales are
not comparable across different embeddings---gte has median norms $\approx38$ while gtr has norms around $0.4$--$2.6$, so we transfer the radius by quantile matching. We choose $r'$ that matches the quantile of nearest-neighbour distance distribution that $r=0.5$ produces under the gtr embedding. This
gives $r'=14.86$ on \Inst{} and $r'=15.84$ on \Lmsys{}. Since no public vector-to-text inverter exists for this embedding family, the results are evaluated at the centroid level, shown in Table~\ref{tab:emb}.

\begin{table}[t]
\caption{Utility with different embedding models, measured at the centroid level.}
\label{tab:emb}
\centering
\begin{tabular}{@{}llcc@{\hspace{0.75em}}cc@{}}
\toprule
& & \multicolumn{2}{c}{gtr-t5-base} & \multicolumn{2}{c}{gte-modernbert-base}\\
\cmidrule(lr){3-4}\cmidrule(l){5-6}
Dataset & $\eps$ & Precision & Recall & Precision & Recall \\
\midrule
\Inst{}
 & $2$   & $0.29$ & $0.39$ &  $0.30$ & $0.65$ \\
 & $8$        & $1.00$ & $0.47$ &  $1.00$ & $0.81$ \\
 & $\infty$   & $1.00$ & $0.48$ &  $1.00$ & $0.82$ \\
\midrule
\Lmsys{}
 & $2$        & $0.03$ & $0.82$ &  $0.09$ & $0.40$ \\
 & $8$        & $0.99$ & $0.95$ & $1.00$ & $0.66$ \\
 & $\infty$   & $0.99$ & $0.95$ & $0.99$ & $0.67$\\
\bottomrule
\end{tabular}
\end{table}

\section{Design Choices and Limitations}\label{sec:design-choices} 

% \subsection{Justification of Our Design}\label{sec:justify}

Fre-E2T essentially performs two 
tasks: identifying heavy hitting regions in the embedding space and 
inverting from the heavy hitters to synthetic texts. We first analyze the utility on the first task---identifying heavy hitting regions
in the embedding space with DP. The first task is supported
by two major technical components: random projection and partitioning  and DP heavy hitter estimation,
followed by the DP centroid computation. 

\vspace{0pt}
\noindent\textbf{Comparison with LDP mechanisms.} For a clear reference, we evaluate two locally differentially private
(LDP) baselines and a centralized-DP baseline. The first LDP baseline, named \textit{LDP (embedding)}, is to let each user release its text embedding vector with
Gaussian noise injected~\cite{balle2018improving}. Any post-processing on the
noisy embeddings does not incur additional privacy cost~\cite{dpbook}. The second LDP baseline, named \textit{LDP (OUE)}, is to replace the sample-and-threshold
subroutine (lines 7--15 in Algorithm~\ref{alg:centralized}) with the OUE frequency
estimation algorithm~\cite{oue}. The rest follows: surviving buckets compute the
noisy centroid with additive Gaussian noise with the remaining privacy budget. Here, we report
the privacy budget split that maximizes utility.\looseness=-1

The centralized DP baseline removes every distributed constraint while keeping
the rest of the pipeline unchanged: a trusted curator observes all embeddings, releases
each occupied bucket's count with Gaussian noise 
% (sensitivity $1$) 
and its embedding sum with Gaussian noise under the same
probabilistic sensitivity bound $\Delta_{dist}$ as ours, then keeps the buckets whose noisy count exceeds $\tau=100$. This baseline upper-bounds what any distributed
realization of our pipeline can achieve. \looseness=-1

Table~\ref{tab:ldp-identification} reports how well each mechanism identifies
the heavy clusters in terms of precision and recall. The first LDP baseline
identifies nothing at any $\varepsilon\le 16$: its local noise is $O(\sqrt{d})$, which is the same scale as the diameter of the whole embedding space, so no two users ever share a bucket and no
cluster survives the threshold. The second fails for the
same structural reason: the OUE count estimator's noise is in $O(\sqrt{n})$, where $n$ is the number of users, which makes the heavy hitter histogram very noisy. As a result, the recall is only 
$3\%$ and $14\%$ for TaylorAI and Instructions-2M, respectively at $\varepsilon=4$. It reaches to about half at $\varepsilon=8$, and catches
up with Fre-E2T only at $\varepsilon=16$. Fre-E2T, in
contrast, identifies the heavy clusters with precision and recall above $0.90$ at \emph{every} $\varepsilon$. 
% Results on other $\tau$'s are similar and omitted.

These results justify our design. First,
\emph{sample-and-threshold} solves the identification task at almost no cost
relative to a trusted curator. The LDP alternatives show the utility loss. Second,
\emph{probabilistic sensitivity} is what makes the centroid sums achieve high utility at
all $\varepsilon\in\{4,8,16\}$. A local randomizer must be indistinguishable over the entire input space, introducing large noises that are in the same scale of the space's diameter, overwhelming useful information. Our release instead adds noise proportional to the probabilistic sensitivity. \looseness=-1

\begin{table}[t!]
\centering
\setlength{\tabcolsep}{5pt}
\caption{Precision and recall measured at cluster level for heavy hitting clusters with $\tau\ge 100$.}
\label{tab:ldp-identification}
\begin{tabular}{@{}llrrrr@{}}
\toprule
& & \multicolumn{2}{c}{\textbf{TaylorAI}} & \multicolumn{2}{c@{}}{\textbf{Instructions-2M}}\\
\cmidrule(lr){3-4}\cmidrule(l){5-6}
Method & $\varepsilon$ & Prec. & Rec. & Prec. & Rec. \\
\midrule
LDP (Embedding)  & 4  & 0.00  & 0.00 & 0.00  & 0.00 \\
LDP (OUE)   & 4  & 0.03 & 0.03 & 0.14 & 0.14 \\
Fre-E2T  & 4  & 0.93 & 0.91 & 0.92 & 0.92 \\
Central-DP  & 4  & 0.97 & 0.96 & 1.00 & 1.00 \\
\cmidrule(l){1-6}
LDP (Embedding)  & 8  & 0.00  & 0.00 & 0.00  & 0.00 \\
LDP (OUE)   & 8  & 0.51 & 0.52 & 0.56 & 0.57 \\
Fre-E2T  & 8  & 0.94 & 0.95 & 0.95 & 0.98 \\
Central-DP  & 8  & 0.98 & 0.98 & 1.00 & 1.00 \\
\cmidrule(l){1-6}
LDP (Embedding)  & 16 & 0.00  & 0.00 & 0.00  & 0.00 \\
LDP (OUE)   & 16 & 0.98 & 0.95 & 0.98 & 0.98 \\
Fre-E2T  & 16 & 0.98 & 0.96 & 0.99 & 0.99 \\
Central-DP  & 16 & 1.00 & 0.99 & 1.00 & 1.00 \\
\bottomrule
\end{tabular}
\end{table}

\begin{table}[t!]
\centering
\setlength{\tabcolsep}{5pt}
\caption{Precision and recall of the synthetic texts.}
\label{tab:inversion-bottleneck}
\begin{tabular}{@{}llrrrr@{}}
\toprule
& & \multicolumn{2}{c}{\textbf{TaylorAI}} & \multicolumn{2}{c@{}}{\textbf{Instructions-2M}}\\
\cmidrule(lr){3-4}\cmidrule(l){5-6}
Method & $\varepsilon$ & Prec. & Rec. & Prec. & Rec. \\
\midrule
Fre-E2T  & 4  & 0.31 & 0.92 & 0.25 & 0.32 \\
Central-DP  & 4  & 0.72 & 0.97 & 0.48 & 0.37 \\
\cmidrule(l){1-6}
Fre-E2T  & 8  & 0.62 & 0.95 & 0.38 & 0.34 \\
Central-DP  & 8  & 0.83 & 0.97 & 0.62 & 0.38 \\
\cmidrule(l){1-6}
Fre-E2T  & 16 & 0.82 & 0.97 & 0.56 & 0.37 \\
Central-DP  & 16 & 0.86 & 0.97 & 0.62 & 0.38 \\
\bottomrule
\end{tabular}
\end{table}

\vspace{0pt}
\noindent\textbf{Representativeness.} Representativeness concerns which population the output speaks for. The released
texts hold only $41\%$ and $3.9\%$ of the \emph{entire} corpus, but $97\%$ and $39\%$ of
the frequent subset of TaylorAI and $4.5\%$ Instructions-2M, respectively, matching the numbers in Table~\ref{tab:combined_dp_results}. This is by our design, we filter out infrequent buckets to achieve SSP while achieving high recall and precision for the frequent user texts.

\vspace{0pt}
\noindent\textbf{Limitation and future work.} 
We also identify a key limitation of our Fre-E2T. Table~\ref{tab:inversion-bottleneck} scores the synthetic texts obtained after inversion for Fre-E2T. Comparing Tables~\ref{tab:inversion-bottleneck} and~\ref{tab:ldp-identification}, we can see that the embedding-to-text inversion, rather than the DP
machinery, is the current bottleneck of Fre-E2T, as we observe a utility drop on the inverted text. In particular, for Instructions-2M dataset, when $\varepsilon=8$. the precision and recall dropped from $0.95$ and $0.98$ to $0.38$ and $0.34$. For the centralized DP method that uses the same inversion model, this drop is still substantial. At $\varepsilon=16$, Fre-E2T achieves almost the same utility as its centralized-DP version. This confirms that the utility drop is from the inverter~\cite{morris-etal-2023-text} that we used as a black box, not the distributed constraints. It was also noted in~\cite{morris2023language} that even inverting from noiseless embeddings may not recover the perfect text. Overall, domain-adapted inverters and designs that capture the objectives of diversity and downstream task utility (recall Section~\ref{sec:text-quality}) are
therefore the promising directions for improving the end-to-end quality of Fre-E2T.

\balance
\section{Conclusion}
\label{sec:conclusion}
We propose a lightweight distributed algorithm for synthesizing texts from
distributed users. Our
algorithm design does not rely on trusted servers, gradient/parameter access
to LMs, iterative and tight user synchronization, or publicly available
labels, addressing major limitations of the existing literature. Our method can also be implemented efficiently based on existing distributed protocols. Compared
with the state of the art, our method achieves comparable and sometimes
better utility across real-world datasets, with an extra layer of
privacy protection for infrequent user data and better practicality in
distributed settings. 

\looseness=-1

\clearpage
\nobalance

%==============================================================================
% Everything below the main body — including appendices, the three mandatory
% PoPETs sections, and the bibliography — does NOT count toward the page
% limit.
%==============================================================================
\appendix

% ----- Mandatory PoPETs disclosure sections (placed between \appendix and the
%        bibliography; produced by the popets.sty environments) -------------
%==============================================================================
% Mandatory PoPETs 2027 sections.
% Do not count toward the 13-page camera-ready main-body limit. Use the popets.sty
% environments (ethics, openscience, ai) and be placed between \appendix /
% acks and the bibliography, per https://petsymposium.org/authors-2027.php.
%==============================================================================
\begin{acks}
This work is supported by the Ministry of Education, Singapore, under Tier-2 Grant MOE-000761-01 and Tier-1 Grant T1 251RES2307, and a Google Gift Grant.

We thank the reviewers and editors for their constructive feedback during the review process.
% TODO(author): Add the required funding acknowledgment covering all authors.
% If there was no specific funding, use the exact no-funding statement in the
% PoPETs camera-ready instructions after all authors have confirmed this.
\end{acks}

\begin{ethics}
We have read and adhered to the PoPETs ethics guidelines and the Menlo Report
and believe the work was conducted ethically. The work does not involve human
subjects, sensitive user data, or system vulnerability disclosure; all
experiments use publicly available benchmark datasets and openly released
language models.

\paragraph{Stakeholders.}
The primary stakeholders are:
\begin{itemize}
\item \textbf{End users.} Our protocol is designed so that distributed users
  can collectively contribute to a synthetic text corpus without any party,
  including the server that runs the protocol, observing their raw texts.
  Users benefit from formal $(\varepsilon,\delta)$-DP guarantees that protect
  individual records and additional protection for users whose texts are
  semantically infrequent.
\item \textbf{Researchers.} Our work enables the research community to
  collaboratively generate synthetic texts from distributed private corpora,
  facilitating research collaboration while preserving the privacy of the
  underlying data.
\item \textbf{System developers.} The algorithm can be deployed in
  privacy-preserving data collection and analysis pipelines, enabling service
  providers to improve models and user experience without ingesting raw
  user text.
\end{itemize}

\paragraph{Ethical principles.}
Following the Menlo Report, we adhere to four principles:
\begin{itemize}
    \item \textbf{Beneficence.} The purpose of this research is to improve the
    practicality and the theory of synthesizing texts from distributed users
    without degrading privacy or utility. Experiments are restricted to public
    benchmarks.
    \item \textbf{Respect for persons.} No personal or private user data was
    used. All datasets and models were either synthetic or publicly available.
    \item \textbf{Justice.} We highlight the scalability issue of existing DP
    text-synthesis methods and provide a more practical distributed solution.
    We hope our work can be used by system developers to facilitate
    privacy-preserving data collection and analysis.
    \item \textbf{Respect for law and public interest.} All experiments comply
    with the applicable open-source software licenses.
\end{itemize}

\paragraph{Potential harms and mitigations.}
\begin{itemize}
    \item \textbf{Risk of over-interpretation.} Our contribution is an
    algorithm for DP text synthesis; it is not intended to be used to generate
    fake content for misuse.
    \item \textbf{Controlled scope.} Our work is restricted to the differential
    privacy framework under the semi-honest threat model. The algorithm should
    not be deployed in arbitrary environments outside this assumption set, as
    that may lead to data exploits not covered by the analysis.
\end{itemize}

\paragraph{Decision to conduct and publish.}
We believe that proposing a
practical algorithm for distributed differentially private text synthesis
is valuable to the community, since our algorithm enables researchers and
system developers to share, collect, and analyse sensitive textual data
without exposing individual records or violating user privacy.

\paragraph{External review.}
This work was not submitted to an external ethics panel (e.g., IRB or the Tor
Research Safety Board) because the study uses only public benchmarks and does
not involve human subjects or sensitive user data.
\end{ethics}

\begin{openscience}
All datasets used in this work are
publicly available. Upon acceptance, the artifacts will be released, including: (1) 
implementations of the 
algorithm; (2) all dataset preprocessing scripts; and (3) any scripts essential to reproduce the results. 
\end{openscience}

\begin{ai}
We disclose all use of
generative AI in the preparation of this manuscript and the underlying
research. The authors used ChatGPT to revise the text to improve flow, correct typographical and grammatical issues, and improve the presentation of figures, based on experimental results. We verified that AI did not introduce new claims, and did not fabricate references. Generative AI was \emph{not} used to design the algorithm or derive the privacy
analysis. All references were collected from the original publications.
\end{ai}

% ----- Bibliography --------------------------------------------------------
\bibliographystyle{ACM-Reference-Format}
\bibliography{main}

\clearpage

\section{Additional Experiment Details and Results}\label{app:exp}

We have used the following datasets.

\vspace{0pt}
\noindent \textbf{1. TaylorAI}~\cite{tayloraiuserqueriesdataset} includes $1.5$ million user queries.

\vspace{0pt}
\noindent\textbf{2. Instructions-2M}~\citep{morris2023language} is a meta-dataset containing approximately $2.33$ million instructions, including user and system prompts spanning different problem domains.

\vspace{0pt}
\noindent \textbf{3. LMSYS Chat-1M}~\citep{zheng2023lmsyschat1m} contains one million real-world conversations with 25 state-of-the-art LLMs. The dataset includes interactions collected from 210K unique IP addresses. We consider each user query as an individual.

\vspace{0pt}
\noindent\textbf{4. Yelp}~\cite{zhang2015character} consists of reviews from Yelp. It is extracted from the Yelp Dataset Challenge 2015 data.

We apply a preprocessing step to filter out samples where the language is non-English, minimizing the potential impact of multilingual limitations in the embedding models. The resulting statistics are summarized in Table~\ref{tab:dataset_stats}.

\begin{table}[h]
\centering
\caption{Dataset statistics for the overall number of samples and number of frequent items \textbf{in millions} and the average token length per sample.}
\label{tab:dataset_stats}
\begin{tabular}{lcccc}
\toprule
\textbf{Dataset} & \textbf{\#samples} & \textbf{\#texts} & \textbf{Average length} \\
\midrule
TaylorAI & 1.29$\times 10^{6}$ & 0.442$\times 10^{6}$ & 88.1   \\
Instructions-2M & 2.33$\times 10^{6}$  & 0.229$\times 10^{6}$ & 30.6 \\
LMSYS-Chat-1M & 1.50$\times 10^{6}$ & 0.446$\times 10^{6}$ & 72.0   \\
Yelp & 0.65$\times 10^{6}$  & 0.518$\times 10^{6}$ & 192 \\
\bottomrule
\end{tabular}
\end{table}

\subsection{Hyperparameters}\label{app:parameter-detail}
\begin{algorithm}[t!]
\caption{\textsc{CalibrateNoise}}
\DontPrintSemicolon
\label{alg:calibratenoise}
\KwIn{target $(\varepsilon, \delta)$; radius $r$; frequency threshold $t$;
projection dimension $k$; subsampling rate $p_s$; budget factor $v \ge 1$;
sensitivity ratio $u$ (so $\Delta_{\mathrm{dist}} = u\,r$)}
\KwOut{Gaussian noise scale $\sigma$ for the centroid sums}
$\tau \gets p_s\, t$ \tcp*{threshold on sampled counts}
$\varepsilon_{fre} \gets v \ln\frac{1}{1-p_s}$;\;
$q \gets 1 - (1-p_s)^{v+1}$\;
$\delta_{fre} \gets \exp\!\big(-\tfrac{\tau}{q}\, D(q \,\|\, p_s)\big)$
  \tcp*{Lemma~\ref{lem:s-and-t-dp}}
$\delta_{sens} \gets f(2/u)^{k}$ \tcp*{Lemma~\ref{lem:high-prob-sens} }
$\varepsilon_{agg} \gets \varepsilon - \varepsilon_{fre}$\;
$\delta_{agg} \gets \delta - \delta_{fre} - \delta_{sens}$\tcp*{budget for Gaussian}
\If{$\varepsilon_{agg} \le 0$ \textbf{\emph{or}} $\delta_{agg} \le 0$}{
  abort \tcp*{lower $v$ (or $p_s$) if $\varepsilon_{agg} \le 0$;
              raise $\tau$, $v$, or $u$ if $\delta_{agg} \le 0$}}
$\rho_{agg} \gets \max\Big\{\, \rho > 0 :
  \inf_{\alpha \in (1,\infty)}
    \frac{\exp\!\big((\alpha-1)(\alpha \rho - \varepsilon_{agg})\big)}{\alpha - 1}
    \Big(1 - \frac{1}{\alpha}\Big)^{\alpha}
  \;\le\; \delta_{agg} \,\Big\}$
  \tcp*{Thm.~\ref{thm:main-gaussian}}
\Return{$\sigma \gets u\,r / \sqrt{2 \rho_{agg}}$}\;
\end{algorithm}

Our algorithm rely on the following hyperparameters. 
\begin{itemize}[leftmargin=*]
\setlength{\itemsep}{0ex}
\item \textbf{Overall privacy parameters.} We fix $\delta=10^{-6}$ and vary $\varepsilon\in\{4,8,16\}$ and $\infty$ (i.e., non-DP setting).
\item Recall that for any two data points in the embedding space $\mathbb{R}^d$, we see them as $r$-neighbors if their $\mathcal{L}_2$ distance is not larger than $r$. We say a text is frequent if it has at least $t$ neighbors in its $r$-neighborhood. We have taken $r=0.5$ and $t=100$ unless otherwise stated in our experiments (see Section~\ref{sec:exp}).

\item \textbf{Data pre-processing for embedding, projection, and partitioning.}
Each client calls the public, black-box embedding function to compute a $d$-dimensional embedding vector for
its text, with $d = 768$. Next, the $d$-dimensional vector is randomly projected
onto the $k$-dimensional space using a Gaussian projector
$\in \mathbb{R}^{d \times k}$ whose entries are sampled independently from
$\mathcal{N}(0, 1)$, scaled by $1/\sqrt{k}$. We fix $k = 20$ in our experiments
and provide ablation studies on the impact of $k$ in Section~\ref{sec:exp}. The space
$\mathbb{R}^k$ is then partitioned into buckets of edge length
$L = 2r/\sqrt{k}$ on each dimension, with an independent uniform random offset
per dimension.

\item \textbf{Component for identifying heavy hitters.}
Each client contributes its bucket index to the synthesis server with
probability $p_s$ (the subsampling rate). The server keeps the bucket indices
whose sampled count $n_h$ reaches the threshold $\tau = p_s \cdot t$, i.e.\
$\tau = 30, 50, 60, 100$ for
$\varepsilon =  4, 8, 16, \infty$ with
$p_s = 0.3, 0.5, 0.6, 1$. Following Lemma~\ref{lem:s-and-t-dp}, we spend
$\varepsilon_{fre} = v \cdot \log\frac{1}{1-p_s}$ on this release, with the
budget factor $v = 4, 4, 3$ for $\varepsilon = 4, 8, 16$; the
corresponding
$\delta_{fre} = \exp\!\big(-\tfrac{\tau}{q}\, D(q \,\|\, p_s)\big)$,
$q = 1 - (1-p_s)^{v+1}$, is listed in Table~\ref{tab:delta-split}. Here $v$ was chosen per target so that the remaining $\delta$ budget
suffices, see Algorithm~\ref{alg:calibratenoise}.

\item \textbf{Component for centroid computation.}
For each heavy bucket, the clients that submitted to it aggregate their
$d$-dimensional embedding vectors, and Gaussian noise
$\mathcal{N}(0, \sigma^2 \mathbf{I}_d)$ is added to the sum. We set the target
sensitivity to $\Delta_{\mathrm{dist}} = u \cdot r$; the ratio $u$ is fixed per
projection dimension. We set $u = 2.4$ at our default $k = 20$, and its
failure probability $\delta_{sens} = f(2/u)^k$ is accounted in the final
$\delta$ (Theorem~\ref{thm:dp-final-fre-e2t}). Given the split
$(\varepsilon_{fre}, \delta_{fre}, \delta_{sens})$, the noise scale is
calibrated with the Gaussian mechanism
(Theorem~\ref{thm:main-gaussian}) via procedure \textsc{CalibrateNoise}
(Algorithm~\ref{alg:calibratenoise}); for $k = 20$ this gives
$\sigma = 2.136, 1.14, 0.516$ for $\varepsilon = 4, 8, 16$
(noise multipliers $m = \sigma/\Delta_{\mathrm{dist}}$ of
$1.78, 0.95, 0.43$). When $\varepsilon = \infty$ (non-DP),
$\sigma = 0$ and no subsampling is applied
($p_s = 1$).
\end{itemize}

\vspace{0pt}
\noindent\textbf{Split of $\delta$.} Table~\ref{tab:delta-split} shows the split of the overall $\delta=10^{-6}$ (at $k = 20$) among the 
heavy-hitter release ($\delta_{fre}$, sensitivity-bound failure $\delta_{sens}$, and the remainder available to the centroid computation. We can see that $\delta_{sens}$ consumes a negligible portion of the overall budget---only around $0.1\%$. 

\begin{table}[h]
\centering
\caption{Split of the overall budget $\delta = 10^{-6}$ at $k = 20$.}
\label{tab:delta-split}
\begin{tabular}{cccc}
\toprule
$\varepsilon$ & $\delta_{fre}$ & $\delta_{sens}$ &
$\delta_{agg}$ budget \\
\midrule
4  & $2.93\times10^{-10}$ & $8.95\times10^{-11}$ & $1.00\times10^{-6}$ \\
8  & $3.80\times10^{-13}$ & $8.95\times10^{-11}$ & $1.00\times10^{-6}$ \\
16 & $1.76\times10^{-11}$ & $8.95\times10^{-11}$ & $1.00\times10^{-6}$ \\
\bottomrule
\end{tabular}
\end{table}

\vspace{0pt}
\noindent\textbf{Numerical procedure.} One can use Algorithm~\ref{alg:calibratenoise} to obtain the Gaussian noise scale given target parivacy parameters, radius $r>0$, frequency threshold $t\in \mathbb{N}^*$, projection dimension $k\in \mathbb{N}^*$, subsampling rate $p_s\in(0,1)$, the budget factor $v>0$, and the sensitivity ratio/noise multiplier $u>0$. Note that we use the continuous Gaussian mechanism (Theorem~\ref{thm:main-gaussian}) instead of the discrete distributed Gaussian mechanism (Theorem~\ref{thm:main-dgg-formal}) for simulation purpose, since~\cite{DiscreteGaussian} have pointed out that when the quantization granularity is high enough, both mechanisms are indistinguishable in utility under the same privacy budget.

\begin{table}[t!]
\centering\small
\caption{Noise scale $\sigma$ under different bucket-edge $L$ scaled by multiplier $m$.}
\label{tab:L-sigma}
\begin{tabular}{llccccc}
\toprule
& \multicolumn{5}{c}{$m$}\\
\cmidrule(l){2-6}
$\eps$  & $0.5$ & $0.75$ & $1$ & $1.5$ & $2$\\
\midrule
$2$         & $2.84$ & $4.26$ & $5.69$ & $8.53$ & $11.4$\\
$8$         & $0.566$ & $0.848$ & $1.13$ & $1.70$ & $2.26$\\
$\infty$        & 0 & 0 & 0 & 0 & 0\\
\bottomrule
\end{tabular}
\end{table}

\begin{table}[t!]
\caption{F1 score under varying projection dimension $k$.}
\label{tab:k}
\begin{tabular}{llccccc}
\toprule
& & \multicolumn{5}{c}{$k$}\\
\cmidrule(l){3-7}
dataset & $\eps$ & $10$ & $15$ & $20$ & $25$ & $30$\\
\midrule
\Inst{}
 & $2$           & $0.11$ & $0.15$ & $0.17$ & $0.17$ & {$0.18$}\\
 & $4$           & $0.13$ & $0.22$ & $0.28$ & {$0.31$} & $0.30$\\
 & $8$           & $0.26$ & $0.34$ & $0.36$ & {$0.44$} & $0.41$\\
 & $16$          & {$0.49$} & $0.47$ & $0.45$ & $0.40$ & $0.47$\\
 & $\infty$      & {$0.69$} & $0.54$ & $0.50$ & $0.50$ & $0.49$\\
\midrule
\Lmsys{}
 & $2$           & $0.25$ & $0.26$ & $0.25$ & {$0.29$} & $0.28$\\
 & $4$           & $0.38$ & $0.419$ & $0.43$ & $0.50$ & {$0.52$}\\
 & $8$           & $0.54$ & $0.637$ & $0.64$ & {$0.71$} & $0.70$\\
 & $16$          & $0.75$ & $0.77$ & {$0.80$} & $0.79$ & $0.76$\\
 & $\infty$      & {$0.85$} & $0.84$ & $0.83$ & $0.83$ & $0.79$\\
\bottomrule
\end{tabular}
\end{table}

\begin{table}[t]
\centering
\caption{Influence of threshold $t$ ($\tau=\ps t$) on the resulting F1 score at $\varepsilon=8$. \# Buc. is the
number of released buckets.}
\label{tab:t}

\begin{tabular}{lccc}
\toprule
 & \multicolumn{3}{c}{$t$} \\
\cmidrule(l){2-4}
 Quantity & $100$ & $200$ & $400$ \\
\midrule
 F1 \Inst{}
& $0.33$
& $0.43$
& $0.52$ \\

F1 \Lmsys{}
& $0.69$
& $0.78$
& $0.73$ \\

\midrule

 \#Buc. \Inst{}
& $89$
& $57$
& $39$ \\

\#Buc. \Lmsys{}
& $424$
& $175$
& $59$ \\

\bottomrule
\end{tabular}
\end{table}

\subsection{More Ablation Studies}
For the ablation study on edge length $L$ shown in Section~\ref{app:ablation}, the noise scale is listed in Table~\ref{tab:L-sigma}.

\vspace{3pt}
\noindent\textbf{Embedding model.} The comparison in Table~\ref{tab:emb} is about transferability of observations, not about
which encoder is ``better''. We can observe that the overall trend of utility-privacy trade-off is similar, precision and recall almost reach the non-DP baseline when $\eps\ge8$. The concrete numbers differ, which is expected, since a different embedding model also changes which infrequent texts the mechanism protects and which frequent texts the mechanism reconstructs. More ablation studies and the details of hyperparameters are in Appendix~\ref{app:parameter-detail}.\looseness=-1

\vspace{3pt}\noindent\textbf{Projection dimension $k$.} Table~\ref{tab:k} presents the F1 score for $k=\{10,15,20,25,30\}$. The best $k$ depends on $\varepsilon$, while in general,  larger $k$ is better for $\varepsilon<\infty$. This is because larger $k$ reduces the failure probability of the sensitivity bound (recall Lemma~\ref{lem:high-prob-sens}), allowing more freedom to choose the noise scale for centroid perturbation. 

\vspace{3pt}\noindent\textbf{Frequency threshold $t$.} Together with $r$, $t$ defines the frequent and infrequent texts from the input (recall Definition~\ref{def:euclidean-secret}) and therefore the evaluation metrics. Hence, the results for different $t$'s are not directly comparable. We also note that $\dfre$ decays exponentially in $\tau=\ps t$,
so at the desired overall $\delta=10^{-6}$, our Fre-E2T is feasible only for
$t\gtrsim 100$ at our subsampling rates. From Table~\ref{tab:t}, as we increase $t$ from $100$ to $200$, the released number of buckets decreases since fewer buckets would satisfy the frequent-text requirement. On the other hand, there is not clear utility trend at different $t$'s. Our main experiment setup selected $t=100$, which is a small number compared with the overall population.

\subsection{More on Text Utility}\label{app:diversity}

Let $p_{\theta}$ be an autoregressive language model and let a text
$w$ tokenize into $(w_1, \dots, w_T)$ under $p_{\theta}$'s tokenizer. The
\emph{per-token negative log-likelihood} (NLL) of $w$ is its average
next-token surprisal in nats,
\begin{equation}
\operatorname{NLL}(w) \;=\; -\frac{1}{T-1} \sum_{t=2}^{T}
\log p_{\theta}\!\left(w_t \mid w_1, \dots, w_{t-1}\right).
\label{eq:nll}
\end{equation}
The perplexity score of $w$ is the exponential of the per-token NLL,
$\operatorname{PPL}(w) = \exp\bigl(\operatorname{NLL}(w)\bigr)$. Lower NLL (equivalently,
lower PPL) means the text is closer to natural language as
modeled by $p_{\theta}$. We use GPT-2 as $p_{\theta}$. 

Here, we report the median of the per-text NLL over each
text set, together with the fraction of texts with
$\operatorname{NLL}(w) \le 6$, on TaylorAI. From Table~\ref{tab:textquality-nll-base}, we can see that for the outputs from Fre-E2T, their NLL approaches that of the real corpus as the privacy
budget increases, matching the results from Section~\ref{sec:text-quality}.\looseness=-1

\begin{table}[h!]
\centering
\caption{NLL of Fre-E2T outputs vs.\ real text of TaylorAI evaluated using under GPT-2. The real texts are a random
1000-text sample of the frequent texts.}
\label{tab:textquality-nll-base}
\begin{tabular}{lrr}
\toprule
Text set & Median NLL & \% texts NLL $\le 6$ \\
\midrule
Fre-E2T ($\varepsilon = 4$)  & 5.28 & 77.9\% \\
Fre-E2T ($\varepsilon = 16$) & 4.29 & 95.4\% \\
Real texts & 3.43 & 99.5\% \\
\bottomrule
\end{tabular}
\end{table}

\subsection{Evaluation on All Input Texts}\label{sec:app-all-text-evalutaion}

\begin{table*}[t]
\centering
\normalsize
\setlength{\tabcolsep}{4pt}
\caption{Utility measured against \emph{all} input texts $V$. Compared with
Table~\ref{tab:combined_dp_results} that reports the same runs against the
frequent subset $V\setminus V_{\mathrm{infre}}$, precision is unchanged
or higher here (the nearest-text condition is evaluated over a larger
set); recall falls by the infrequent mass, which our Fre-E2T mechanism
deliberately does not support.}
\label{tab:alltexts}

\begin{tabular}{l*{12}{c}}
\toprule
& \multicolumn{3}{c}{TaylorAI}
& \multicolumn{3}{c}{Instructions-2M}
& \multicolumn{3}{c}{LMSYS Chat}
& \multicolumn{3}{c}{Yelp} \\
\cmidrule(lr){2-4}
\cmidrule(lr){5-7}
\cmidrule(lr){8-10}
\cmidrule(lr){11-13}

$|V_{\mathrm{infre}}|/|V|$
& \multicolumn{3}{c}{66\%}
& \multicolumn{3}{c}{90\%}
& \multicolumn{3}{c}{70\%}
& \multicolumn{3}{c}{20\%} \\
\midrule

$\varepsilon$
& Prec. & Rec. & F1
& Prec. & Rec. & F1
& Prec. & Rec. & F1
& Prec. & Rec. & F1 \\
\midrule

$4$
&0.31 &0.33 &0.32
&0.27 &0.03 &0.06
&0.32 &0.22 &0.26
&0.30 &0.77 &0.44 \\

$8$
&0.62 &0.35 &0.45
&0.53 &0.03 &0.06
&0.54 &0.26 &0.35
&0.61 &0.83 &0.70 \\

$16$
&0.82 &0.39 &0.53
&0.60 &0.04 &0.07
&0.77 &0.28 &0.41
&0.87 &0.85 &0.86 \\

$\infty$
&0.83 &0.41 &0.55
&0.71 &0.04 &0.07
&0.85 &0.28 &0.42
&0.97 &0.86 &0.91 \\

\bottomrule
\end{tabular}
\end{table*}

Recall Eq.~\ref{eq:prec} to Eq.~\ref{eq:l2}. We restrict the utility metrics to semantically frequent texts. This is a design consequence, not a favorable selection: semantic support protection (Definition~\ref{def:semantic-protect}) requires that infrequent texts remain far from the generated texts. Table~\ref{tab:alltexts} repeats every run of the main Table~\ref{tab:combined_dp_results} and evaluate against the full input set $V$. Precision is unchanged or higher, since a synthetic text now only has to be close to \emph{some} input text. Recall falls by the infrequent mass. For example, at $\varepsilon=\infty$ it is $0.35$ for frequent AOL texts (recall Table~\ref{tab:combined_dp_results}) and $0.03$ over all of them, since in AOL $92\%$ of the corpus is infrequent. On TaylorAI, the recall decreases from $0.97$ to $0.41$ with $66\%$ infrequent input texts. The gap between the two tables exactly that proves that our protection for infrequent texts works.

\section{Details of Distributed \name}\label{app:alg-detail}

We refer to Figure~\ref{fig:illustration} for the distributed implementation of Fre-E2T and Figure~\ref{fig:grid-illustration} for the illustration of the partitioning procedure on the $k$-dimensional projection space. 

We refer to Algorithm~\ref{alg:hash} for the verifiable oblivious pseudorandom function for randomly permuting user indices. Truncated Shited Laplace noise is introduced to enforce DP for non-frequent user texts during the heavy hitter identification process. We refer to Definition~\ref{def:tsd} and Algorithm~\ref{alg:create-dummy} for the definition and the algorithmic process, respectively.

\begin{defn}[Truncated Shifted Discrete Laplace (TSDLap)]\label{def:tsd}
The distribution of TSDLap on $\{0, \ldots, 2\gamma\}$ is defined as
\begin{align*}
f_{\text{TSDLap}(\lambda, \gamma)}(u) =
\begin{cases}
\dfrac{\exp\!\left(-\dfrac{|u - \gamma|}{\lambda}\right)}{A}, & \text{if } u \in \{0, \ldots, 2\gamma\},\nonumber\\
0, & \text{otherwise,}
\end{cases}
\end{align*}
where $\lambda \in (0,1)$ is the scale, $\gamma$ is the shift from $0$, and
\[
A = \sum_{u=0}^{2\gamma} \exp\!\left(-\dfrac{|u - \gamma|}{\lambda}\right)
   = 1 + 2\sum_{u=1}^{\gamma} \exp\!\left(-\dfrac{u}{\lambda}\right).
\]
\end{defn}

We note that \Ssyn{} sees the multiset of sub-threshold tags with their exact multiplicities; the tags themselves are
pseudorandom to any party without the OPRF key, and it is the resulting
count-of-counts---how many tags were seen with each multiplicity---that Algorithm~\ref{alg:create-dummy}'s TSDLap padding makes differentially private~\cite{nebula}. 

We also provide the full description of the distributed discrete Gaussian mechanism to obtain the aggregated vector sum of distributed users~\cite{DiscreteGaussian} (other mechanisms such as~\cite{sk,pbinomial} lead to similar results). See Algorithm~\ref{alg:aclient} and Algorithm~\ref{alg:aserver}. In short, each user locally quantizes its vector, perturbs it with discrete Gaussian noise, and then takes the modulo in order to prepare for secure aggregation~\cite{secagg,samplablefa,SecureML}. After secure aggregation, the untrusted reconstructs an estimate for the sum by unwinding the modulo operation and the quantization. When the quantization granularity is fine, e.g., $\beta=2^{-16}$, the resulting data utility (error compared with the original, non-quantized, non-DP vector sum) is almost the same as perturbing the sum directly with continuous Gaussian noise in the centralized setting~\cite{DiscreteGaussian,abadi}. In our simulation of \name, we adopt the continuous Gaussian without loss of generality. 

\begin{theorem}[Distributed discrete Gaussian (complete version)~\cite{DiscreteGaussian}]\label{thm:main-dgg-formal}
Let $\phi$ be the parameter controlling the rounding bias, $\beta>0$ be the quantization parameter, $\sigma$ be the parameter of the local discrete Gaussian noise. Under the same assumption as Theorem~\ref{thm:main-gaussian}, define
\begin{align}
\Delta_2^2 &:= \min \Big(\big(\Delta_{\textit{dist}} + \beta \sqrt{d}\big)^2\,,\\
&\Delta_{\textit{dist}}^2 + \frac{\beta^2 d}{4} + \sqrt{2 \log \left(\frac{1}{\phi}\right)} \cdot \beta \cdot \big(\Delta_{\textit{dist}} + \frac{\beta}{2} \sqrt{d}\big)\Big)\,,\nonumber
\end{align}
\begin{align}
\kappa &= 10 \cdot \sum_{j=1}^{\tau-1} \exp\big(- \frac{2\pi^2 \sigma^2}{\beta^2} \cdot \frac{j}{j+1}\big)\,,\\
\rho_{agg} &:= \min \Big(
  \frac{\Delta_2^2}{2\tau \sigma^2} +  \kappa d, \;
  \frac{1}{2}\big(\frac{\Delta_2}{\sqrt{\tau} \sigma} + \kappa\sqrt{d}\big)^2
\Big).
\end{align}
The outcome of the perturbed sum of embedding vectors from Alg.~\ref{alg:all} satisfies $\rho_{agg}$-CDP, and $(\varepsilon_{agg},\delta_{agg})$-DP with \begin{align}
    \delta_{agg} = \inf_{\alpha \in (1,\infty)} \frac{\exp((\alpha-1)(\alpha\rho_{agg}-\varepsilon_{agg}))}{\alpha - 1} \cdot \left( 1 - \frac{1}{\alpha} \right)^{\alpha}.
\end{align}
\end{theorem}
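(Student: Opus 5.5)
The statement is Theorem~\ref{thm:main-dgg-formal}, the complete version of the distributed discrete Gaussian guarantee. The plan is to reduce it to the analysis of the distributed discrete Gaussian mechanism of~\cite{DiscreteGaussian}, applied bucket by bucket and conditioned on the good event of Lemma~\ref{lem:high-prob-sens}.

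\textbf{Step 1: reduction to one bucket with bounded sensitivity.} Fix the public randomness $(Z,\{o_j\},H)$ and a heavy bucket $h$. Consider the neighbouring inputs $X$ and $X'=X\setminus\{x\}$, where the removed point lands in $h$ and has $\text{coin}=1$; if it does not, the two output distributions for $h$ coincide.

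Use the coupling from the proof sketch of Lemma~\ref{lem:high-prob-sens}. On the complement of the failure event, the two bucket sums of real vectors differ by a vector of norm at most $\Delta_{\textit{dist}}$. All other buckets are unaffected, so by parallel composition it suffices to analyse bucket $h$ alone.

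\textbf{Step 2: sensitivity after quantization.} Each user maps $x_i/\beta$ to $\mathbb{Z}^d$ by unbiased randomized rounding. Following~\cite{DiscreteGaussian}, this gives two bounds on the $\ell_2$ norm of the difference of the quantized contributions.
\begin{itemize}
\item The deterministic bound: each coordinate moves by less than $1$ after scaling, so the norm is at most $\Delta_{\textit{dist}}+\beta\sqrt d$.
\item The high-probability bound: bound $\mathbb{E}\|\text{round}(x)-x\|^2\le \beta^2 d/4$, then apply a sub-Gaussian concentration to the cross term. This yields $\Delta_{\textit{dist}}^2+\beta^2d/4+\sqrt{2\log(1/\phi)}\,\beta(\Delta_{\textit{dist}}+\beta\sqrt d/2)$ with probability $1-\phi$.
\end{itemize}
If needed, the rounding is conditioned (rejected and redrawn) on this norm bound holding. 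Taking the minimum of the two bounds gives $\Delta_2$, the $\ell_2$ sensitivity of the integer sum.

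\textbf{Step 3: aggregated noise.} Since at least $\tau$ users contribute to a released bucket, the aggregate noise is a sum of at least $\tau$ independent $\mathcal{N}_{\mathbb{Z}}(0,\sigma^2)$ per coordinate. This sum is not exactly a discrete Gaussian. I would invoke the result of~\cite{DiscreteGaussian} that such a sum is close in Rényi divergence to $\mathcal{N}_{\mathbb{Z}}(0,\tau\sigma^2)$, with a per-coordinate slack governed by the smoothing term $\kappa=10\sum_{j=1}^{\tau-1}\exp(-2\pi^2\frac{\sigma^2}{\beta^2}\frac{j}{j+1})$. This combines with the discrete Gaussian CDP bound $\Delta_2^2/(2\tau\sigma^2)$ in two ways.
\begin{itemize}
\item Additively over the $d$ coordinates, giving $\frac{\Delta_2^2}{2\tau\sigma^2}+\kappa d$.
\item Through a triangle-type inequality on $\sqrt{\rho}$, giving $\frac12(\frac{\Delta_2}{\sqrt\tau\sigma}+\kappa\sqrt d)^2$.
\end{itemize}
Taking the minimum gives $\rho_{agg}$-CDP. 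The modular reduction used for secure aggregation and the final rescaling by $\beta$ are post-processing, as is reconstruction by the two servers. Thus the released centroids are $\rho_{agg}$-CDP.

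\textbf{Step 4: conversion to approximate DP.} Apply the standard CDP-to-$(\varepsilon,\delta)$ conversion via Rényi DP of all orders $\alpha$. This gives exactly the stated $\delta_{agg}=\inf_\alpha\frac{e^{(\alpha-1)(\alpha\rho_{agg}-\varepsilon_{agg})}}{\alpha-1}(1-\frac1\alpha)^\alpha$, as in Theorem~\ref{thm:main-gaussian}.

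\textbf{Main obstacle.} I expect Step 3 to be the hardest. The number of contributors to a bucket is random and data-dependent: it is at least $\tau$, and differs by one between $X$ and $X'$. So the bound must be argued for the worse of the two noise counts, i.e.\ the smaller one, which is still at least $\tau$ on the released event, since more noise only helps. I would also verify that $\kappa$, which is monotone in the number of summands, is correctly taken at $\tau$.
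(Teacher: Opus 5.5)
Your overall route is the paper's route. The paper gives no argument of its own for this theorem: it transcribes the privacy theorem of~\cite{DiscreteGaussian} with clipping norm $c\mapsto\Delta_{\textit{dist}}$, granularity $\gamma\mapsto\beta$ and number of clients $n\mapsto\tau$, then applies the CDP-to-DP conversion of Theorem~\ref{thm:main-gaussian}. The problem is in the details you add, chiefly Step~2.

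In~\cite{DiscreteGaussian}, both branches of $\Delta_2$ bound the norm of a \emph{single} rounded vector $\tilde x_i$ whose pre-rounding norm is at most $c$. The conditional (rejection) rounding enforces exactly that bound on the user's own vector, and the same single-vector condition appears in Algorithm~\ref{alg:aclient}. In your Step~1, however, $\Delta_{\textit{dist}}$ bounds $\|x-x'\|_2$ for two \emph{different} users whose vectors are rounded independently. Each coordinate of $\tilde x-\tilde x'$ can then deviate from the scaled difference by almost $2\beta$, not $\beta$. So the deterministic branch becomes $\Delta_{\textit{dist}}+2\beta\sqrt d$, and the $\beta^2 d/4$ variance term doubles.

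Rejecting and redrawing cannot repair this. No user can condition its rounding on the norm of a difference with another user's vector in the other world. Hence the stated $\Delta_2$ does not follow from your Step~2, and the $\phi$-branch becomes a genuine failure event that would have to be added to $\delta_{agg}$, which the statement does not do.

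One way to close the step is to implement rounding as $\lfloor z+U\rfloor$, with $U$ uniform on $[0,1)$ per coordinate, and share $U$ between the two coupled slots. Then $\lfloor x/\beta+U\rfloor-\lfloor x'/\beta+U\rfloor$ is distributed exactly as an unbiased randomized rounding of $(x-x')/\beta$. The rounding bounds of~\cite{DiscreteGaussian} then apply verbatim with $c=\Delta_{\textit{dist}}$, and joint quasi-convexity of R\'enyi divergence lifts the conditional bound to the mixture. This costs you the rejection step, which breaks the coupling, and you must charge $\phi$ to $\delta$ if you use the second branch. The alternative is to work in the setting the transcribed formula actually presupposes: $\|x_i\|_2\le\Delta_{\textit{dist}}$ for every user, with zero-out neighbours.

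Your ``main obstacle'' paragraph also contradicts Step~1. The coupling you borrow from Lemma~\ref{lem:high-prob-sens} replaces one sampled point by another in the same bucket. So both worlds contain the \emph{same} number $c^{(h)}\ge\tau$ of local noise terms, which is what makes a shift-only R\'enyi bound applicable at all. If the counts really differed by one, the two outputs would differ in variance as well as mean. The shift would also be $\tilde x$ itself, whose norm $\Delta_{\textit{dist}}$ does not control. In that case, arguing for the smaller count on both sides is not a valid step.

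With equal counts, ``more noise only helps'' is correct by data processing: adding one more independent local noise vector to both outputs is a common Markov kernel. This is also how the $\kappa$ question must be settled. $\kappa$ \emph{increases} with the number of summands, so you cannot plug $c^{(h)}$ into the bound. Instead, bound the $c^{(h)}$-fold sum by the $\tau$-fold one via data processing, and evaluate the bound of~\cite{DiscreteGaussian} at $\tau$.
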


\vspace{1pt}
\noindent\textbf{Connection with~\cite{pmlrv139chang21a}.} Our clustering process of projecting data into a lower-dimension space and then partitioning the data into buckets is in spirit similar to the DP clustering work~\cite{pmlrv139chang21a}. Our difference is in the design of the distributed algorithm. While ~\citet{pmlrv139chang21a} focuses on the local DP model, we employ the more recent sample-and-threshold privacy framework~\cite{sandtdp} which has an efficient distributed implementation~\cite{nebula} with the help of an untrusted randomness server. In addition, we also exploit the structure of $\mathcal{L}_2$ distance-based clustering to reduce the sensitivity when computing the cluster centroid.

\vspace{3pt}
\noindent\textbf{On Sample-and-threshold Histogram.} Algorithm~\ref{alg:centralized} releases two things for each heavy bucket $h$: the pair $(h, n_h)$ in
the heavy-hitter step (line~15), and the noisy centroid $\hat{\mu}_h$ in the
aggregation step (line~19). Here $n_h$ is \emph{not} the number of users in
bucket $h$: it is the number of users that \emph{sampled themselves into} the
release, i.e.\
$n_h \sim \mathrm{Bin}(N_h, p_s)$ where $N_h$ is the true bucket size. The pair $(h, n_h)$ is precisely the output
that the sample-and-threshold guarantee of Lemma~\ref{lem:s-and-t-dp} certifies with
$(\varepsilon_{fre}, \delta_{fre})$. The raw count $N_h$ itself is never released. The centroid step then divides the noise-perturbed sum by the
\emph{already released} $n_h$ (line~18); since both the numerator's DP guarantee
(Theorem~\ref{thm:main-gaussian}) and the divisor's (Lemma~\ref{lem:s-and-t-dp}) are already
accounted in Theorem~\ref{thm:dp-final-fre-e2t} by linear composition, this division is post-processing and
incurs no additional privacy cost~\cite{dpbook}.

\subsection{Deployment Costs}
\label{app:costs}

\begin{table}[h]\centering\small
\caption{Per-user cost of distributed Fre-E2T versus a distributed Aug-PE.}
\label{tab:vsaugpe}
\begin{tabular}{@{}lrrr@{}}
\toprule
Per user & Fre-E2T & Aug-PE & ratio\\
\midrule
Computation          & $0.40$\,ms (once) & $55.9$\,ms ($20$ rounds) & $140\times$\\
Upload               & $8.3$\,kB  & $1.68$\,MB & $2.0\times10^{2}$\\
Download             & $6.1$\,kB  & $1.29$\,GB & $2.1\times10^{5}$\\
\bottomrule
\end{tabular}
\end{table}

We complement the cost analysis in Section~\ref{sec:agg} with concrete numbers. The costs are measured on an AMD EPYC~9554 (124 cores) with one
NVIDIA H100 80\,GB without real test on distributed environments. Parameters are $\varepsilon=2$, $k=20$, $\delta=10^{-6}$, $d=768$,
$d'=1024=O(d)$, and the reported dataset is TaylorAI.

The per-operation costs in the distributed Fre-E2T are: one VOPRF evaluation $89.5\,\mu$s; tag counting $1.05\,\mu$s per tag; share
accumulation $10.7\,\mu$s per share; bucket reconstruction $60.5\,\mu$s per bucket; and text embedding $0.29$\,ms per text on the GPU amortized at batch size $256$. A user spends around
$0.4$\,ms online in total, uploads $8.3$\,kB and downloads $6.1$\,kB. Both servers finish in a couple of minutes for $N$ VOPRF evaluations and the computation for the centroids. There is no need for
GPU for the online phases. The offline vec2text inversion takes $133$ms per text. The dummy traffic that protects sub-threshold buckets
is around $1710$ messages and $113$\,kB for the entire run.

\vspace{0pt}
\noindent\textbf{Comparison with Aug-PE.}
Aug-PE~\cite{xie2024differentially} keeps a population of $N_{syn}$ synthetic texts and, in each
of $T$ iterations, generates $L{-}1$ rephrasings of every kept sample, embeds the
resulting pool of $N_{cand}=L\,N_{syn}$ candidates, has \emph{every} private
record vote for its nearest candidate, perturbs that histogram and keeps the top
$N_{syn}$. Its per-iteration cost is thus $(L{-}1)N_{syn}$ generations,
$N_{cand}$ embeddings and $\Theta(N_{pri}N_{cand}d)$ for computing the embedding distance, repeated $T$
times. In addition, because the vote is over private data, the distance computation must be done
on the users' devices, which must download all candidate text embeddings first. With $N_{syn}=3000$, $L=7$, $T=20$ (i.e.\ $21{,}000$ candidates scored per
iteration), one iteration of Aug-PE takes $64.6$\,s on an H100 GPU. A full run takes $21.7$ minutes that require users' synchronous updates and continuous participation, much longer than Fre-E2T.

\begin{figure*}[t!]
\centering
\includegraphics[width=0.99\textwidth]{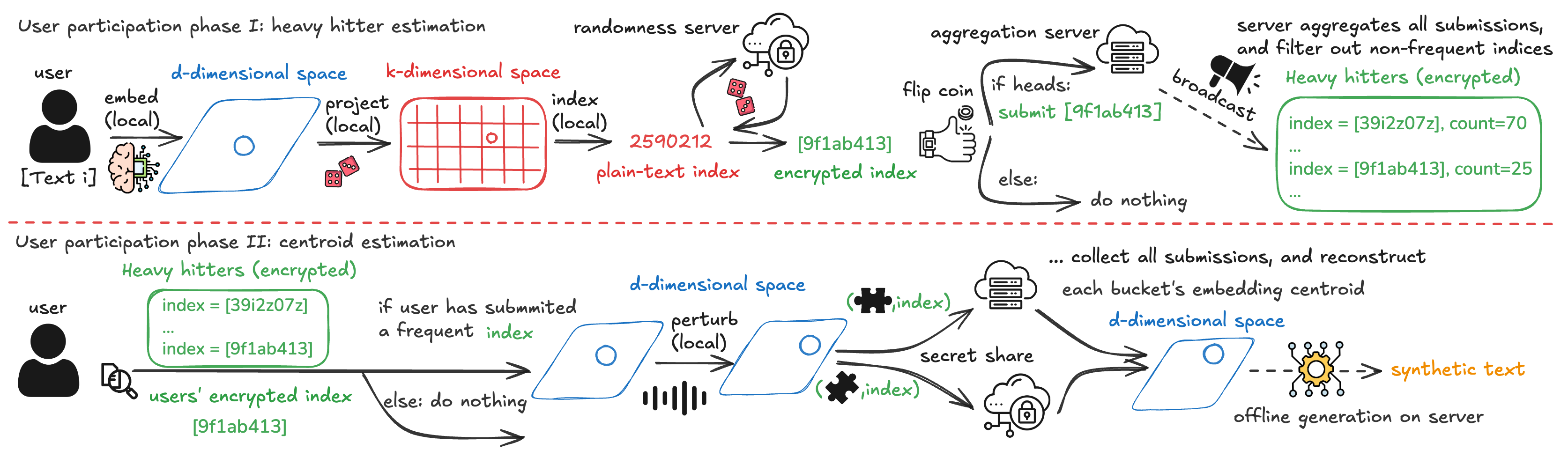} 
\caption{Overall idea of our algorithm in the distributed setting.}
\Description{A two-phase distributed protocol. Users first embed, project, and tag local texts and probabilistically submit encrypted indices for heavy-hitter estimation. Users associated with frequent indices then perturb and secret-share their embeddings. Servers aggregate shares to reconstruct private centroids for offline text synthesis.}
\label{fig:illustration}
\end{figure*}

\begin{figure}[t]
\centering
\includegraphics[width=\linewidth]{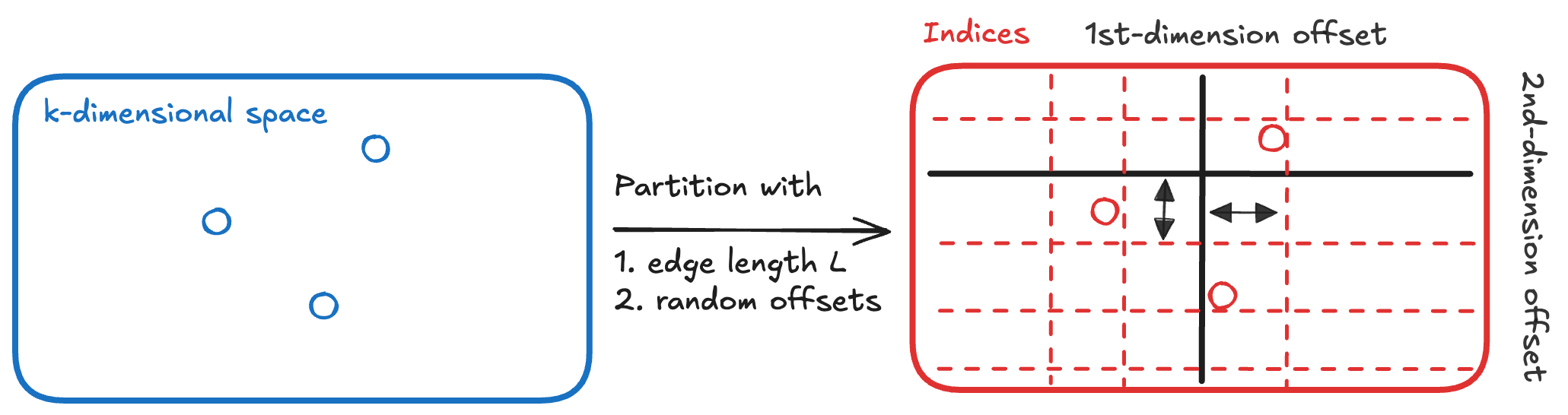} 
\caption{Partitioning the $k$-dimensional space with a fixed edge length $L$ and random offsets.}
\Description{Points in an embedding projection are assigned to a regular grid with edge length L. Independent random offsets shift the grid boundaries along each dimension; a two-dimensional example illustrates the construction.}
\label{fig:grid-illustration}
\end{figure}

On the server side, neither method is expensive in computation or communication. The
difference is on the user side: Fre-E2T's
user cost is paid once and is independent of $T$ and of $N_{cand}$, whereas
Aug-PE's recurs every iteration and scales with the candidate pool. We highlight this difference in Table~\ref{tab:vsaugpe}. 

\begin{algorithm}[t!]
\DontPrintSemicolon
\SetAlgoLined
\KwIn{user $i$ holding $b_i\in\mathbb{Z}^{k}$; tagging server; VOPRF secret key $\mathsf{msk}$; hash function $g(\cdot)$.}
\KwOut{Random string $h_i$}
\textbf{user $i$:} Hash the index $u \leftarrow g(b_i)$ \;
\textbf{user $i$:} Sample blinding $r' \gets_R \mathbb{Z}_q^*$\;
\textbf{user $i$:} Compute $\textsf{blind} \leftarrow u^{\,r'}$ \;
\textbf{user $i$:} Send blinded hash $\textrm{blind}$ to the server\;
\textbf{Tagging server:} Compute $z \leftarrow \textsf{blind}^{\,\mathsf{msk}}$ \;
\textbf{user $i$:} Unblind the response $w \leftarrow z^{\,1/r'}$\;
\textbf{user $i$:} Obtain  $h_i \leftarrow g \big(w, b_i\big)$ \;
\textbf{Return} $h_i$ \;

\caption{\textit{Client-TaggingServer}}
\label{alg:hash}
\end{algorithm}

\begin{algorithm}[t!]
\DontPrintSemicolon
\SetAlgoLined
\KwIn{A public thresholding value $\tau$; Truncated Shifted Discrete Laplace distribution $\textsf{TSDLap}(\cdot)$ with scale $\lambda$ and shift $\gamma$.}
\KwOut{A list of dummy indices}
Dummy $\leftarrow \{\}$\;
\For{$n = 1, \ldots, \tau - 1$}{
   Sample $Q \leftarrow \textsf{TSDLap}(\lambda,\gamma)$\;
    $\{\textsf{tag}_j\}_{j=1}^{Q} \leftarrow \textsf{UniqueIndexGenerator}(\alpha)$ \;
    \For{$j = 1, \ldots, Q$}{
        Append $n$ copies of $\textsf{tag}_j$ to Dummy\;
    }
}
\Return Dummy\;
\caption{\textit{Dummy-Data-Creation}}
\label{alg:create-dummy}
\end{algorithm}

\begin{algorithm*}[t]
\DontPrintSemicolon
\caption{User Procedure $\mathcal{A}_{\text{client}}$ for distributed discrete Gaussian mechanism~\cite{DiscreteGaussian}}
\label{alg:aclient}
\SetKwInput{Input}{Input}
\SetKwInput{Params}{Parameters}
\SetKwInput{Output}{Output}

\Input{Vector $x_i \in \mathbb{R}^d$.}
\Params{Dimension $d$; $\mathcal{L}_2$ sensitivity $\Delta_{dist}$; quantization granularity $\beta>0$; modulus $m\in\mathbb{N}$; noise scale $\sigma>0$; bias $\phi\in[0,1)$; a uniform random sign vector $\xi \in \{-1,+1\}^d$ and a Walsh--Hadamard matrix $H\in\{-1/\sqrt{d},+1/\sqrt{d}\}^{d\times d}$ that are shared among all parties.}
\Output{$z_i \in \mathbb{Z}_m^d$ for the secure aggregation protocol.}

\BlankLine
$x'_i \gets \dfrac{1}{\beta}\cdot x_i \in \mathbb{R}^d$.\;

$x''_i \gets H D_\xi x'_i \in \mathbb{R}^d$.\;

\Repeat{}{
  Draw $\tilde{x}_i \in \mathbb{Z}^d$ from a product distribution such that
  $\mathbb{E}[\tilde{x}_i]=x''_i$ and $\lVert \tilde{x}_i - x''_i \rVert_\infty < 1$.\;

}{
  $\lVert \tilde{x}_i \rVert_2 \le
  \min\!\left\{\, \dfrac{\Delta_{\textit{dist}}}{\beta}+\sqrt{d},\;
  \sqrt{\,\dfrac{\Delta_{\textit{dist}}^2}{\beta^2} + \dfrac{1}{4}d + \sqrt{2\log(1/\phi)}\cdot\!\left(\dfrac{\Delta_{\textit{dist}}}{\beta} + \dfrac{1}{2}\sqrt{d}\right)}\;\right\}$\;
}

Sample $a_i \in \mathbb{Z}^d$ with independent coordinates
$a_{i,j}\sim \mathcal{N}_{\mathbb{Z}}\!\left(0,\sigma^2/\beta^2\right)$ for $j=1,\dots,d$.\;
$z_i \gets (\tilde{x}_i + y_i) \bmod m$.\;
\end{algorithm*}

\begin{algorithm*}[t!]
\DontPrintSemicolon
\caption{Server Procedure $\mathcal{A}_{\text{server}}$~\cite{DiscreteGaussian}}
\label{alg:aserver}

\SetKwInput{Input}{Input}
\SetKwInput{Params}{Parameters}
\SetKwInput{Output}{Output}

\Input{Aggregated modulo sum $\bar z \in \mathbb{Z}_m^{d}$ obtained via secure aggregation.}
\Params{Same as Algorithm.~\ref{alg:aclient}.}
\Output{$y\in \mathbb{R}^{d}$}

Map $\mathbb{Z}_m \to \{-\lfloor m/2\rfloor,\ldots,0,\ldots,\lfloor m/2\rfloor\}$ coordinate-wise so that
\(
\bar z' \in [-m/2,m/2]^d \cap \mathbb{Z}^d
\quad\text{and}\quad
\bar z' \bmod m = \bar z .
\)\;
Compute $y \gets \beta\, D_{\xi}\, H^\top \bar z'$, where $D_{\xi}$ is diagonal with $\xi$ on the diagonal.\;
\end{algorithm*}

\section{Proof for Semantic Support Protection}\label{app:proof-of-semantic-support}

We prove Theorem~\ref{thm:proof-semantic-all} in three steps. First, we argue that for any infrequent text $x^*$ that has at most $n^*$ data points that is $r$-close to itself, after subsampling, not too many of its close neighbors in $X$ will end up in the same cluster, using the classic concentration inequality, and that the cluster will have a small diameter (using the results established in Sec.~\ref{sec:privacy-ddg}). Next, we focus on a single cluster, and argue that conditioned on there are not too many close neighbors to $x^*$ in it, the cluster centroid will be far from $x^*$. Finally, we show that the DP noise will push the released centroid further away $x^*$, with a high probability. 

\vspace{1pt}
\noindent\textbf{Good event $\mathcal{B}$.} We focus on any cluster $B$ with index $b$ in the $k$-dimensional space that has accumulated at least $h$ data points. Note that $h$ was released in the heavy hitter estimation step with DP. So here, using $h$ incur no additional privacy cost. 

We denote the corresponding $d$-dimensional embedding vectors as $X_{B}=\{x_i\,\mid\, h_i=b\}$. We define event $\mathcal{B}$ as ``Within bucket $B$, there are at most $2p_sn^*$ data points that are $r$ close to $x^*$ \textit{and} every two data points have distance at most $r$''. Lemma~\ref{lem:few-in-bucket} follows from the multiplicative Chernoff bound~\cite{MitzenmacherUpfal2005} and Lemma~\ref{lem:high-prob-sens}.

\begin{lem}[Event $\mathcal{B}$ almost always happens]\label{lem:few-in-bucket}
Within bucket $B$ of count $h$, there are at most $2p_sn^*$ data points that are $r$ close to $x^*$ and every pair of data points are within distance at most $r$. This happens with probability at least 
\begin{align}\label{eq:few-in-bucket}
1-\exp(-0.38 p_sn^*)-\binom{h}{2}\Big(f\Big(\frac{\sqrt{k} L}{r}\Big)\Big)^k,
\end{align}
with $f$ defined as in Eq.~\ref{eq:def-f-centralized}.   
\end{lem}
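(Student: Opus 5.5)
The plan is to split the bad event into two failure modes, bound each separately, and combine them with a union bound. This produces exactly the two subtracted terms in Eq.~\ref{eq:few-in-bucket}.

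\textbf{First failure mode: too many sampled neighbours of $x^*$ in $B$.} Let $M$ be the number of data points of $X$ within distance $r$ of $x^*$; by assumption $M\le n^*$. Each such point enters bucket $B$ only if its coin $\text{coin}_i$ equals $1$. The coins are independent $\text{Bernoulli}(p_s)$ and independent of the public projection and offsets. Hence, conditioned on $Z$ and $\{o_j\}$, the number $W$ of $r$-close points counted in $B$ is stochastically dominated by $\mathrm{Bin}(n^*,p_s)$, with mean $\mu\le p_sn^*$.

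By monotonicity we may pad to exactly $n^*$ trials, so $\mu=p_sn^*$. The multiplicative Chernoff bound $\Pr[W\ge(1+\eta)\mu]\le\big(e^{\eta}/(1+\eta)^{1+\eta}\big)^{\mu}$ with $\eta=1$ gives
\[
\Pr[W>2p_sn^*]\le (e/4)^{p_sn^*}=\exp\big(-(2\ln 2-1)p_sn^*\big)\le \exp(-0.38\,p_sn^*),
\]
since $2\ln2-1\approx0.386$.

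\textbf{Second failure mode: some pair in $B$ at distance more than $r$.} I reuse the per-pair computation behind Lemma~\ref{lem:high-prob-sens}. For any fixed pair $x,x'$ with $D=\|x-x'\|_2>r$, the coordinate gaps of the projections are i.i.d.\ $\mathcal{N}(0,D^2/k)$. With independent uniform offsets, the pair collides in coordinate $j$ with probability $\mathbb{E}[\max(0,1-|g_j|/L)]$. Integrating gives $f(\sqrt{k}L/D)$, and by independence across coordinates the full collision probability is $f(\sqrt{k}L/D)^k$.

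I then need $f$ to be nondecreasing in its argument. Collision is more likely for closer points, so $D>r$ implies $f(\sqrt{k}L/D)\le f(\sqrt{k}L/r)$. I would verify this monotonicity, either by differentiating $f$ or directly, since $\max(0,1-|g|/L)$ is decreasing in $|g|$ and $|g|$ scales with $D$.

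A union bound over the at most $\binom{h}{2}$ pairs of points in the bucket then yields the term $\binom{h}{2}f(\sqrt{k}L/r)^k$.

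\textbf{Main obstacle.} The delicate part is this union bound, because the bucket's membership and its count $h$ are themselves random and data-dependent. I would handle it by union-bounding over all pairs of points that end up in $B$, treating the bucket as defined by its released count $h$ and conditioning on that count. If rigor demands, a fallback is to bound over all pairs that could collide, which gives a cruder factor. I would follow the paper's framing and state the bound with $\binom{h}{2}$.

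A final union bound over the two failure modes gives the claimed probability. The bucket-size cap $2p_sn^*$ and the diameter cap $r$ then both hold on the complement, which is event $\mathcal{B}$.
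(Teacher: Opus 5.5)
Your decomposition matches the paper's, whose proof only cites the multiplicative Chernoff bound and Lemma~\ref{lem:high-prob-sens}. Your constant ($(e/4)^{p_sn^*}=\exp(-(2\ln 2-1)p_sn^*)$ with $2\ln 2-1\approx 0.386$) is correct, and so is the monotonicity of $f$, since $f(\nu)=\mathbb{E}[(1-|G|/\nu)_+]$ for $G\sim\mathcal{N}(0,1)$. However, the step you flag as the main obstacle is a genuine gap, and the paper's one-line proof does not close it either.

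Conditioning on which points landed in $B$, or on its count $h$, does not let you apply the per-pair estimate. The quantity $f(\sqrt{k}L/D)^k$ bounds the probability that a \emph{fixed} pair at distance $D$ collides. For pairs selected because they collided, you would instead need the probability that they are far given that they collided, and that depends on how many far versus near pairs $X$ contains. For instance, suppose all pairwise distances in $X$ are slightly above $r$. Then every bucket with $h\ge 2$ points violates the diameter condition, so conditioned on such a bucket existing the failure probability is $1$, while $\binom{h}{2}f(\sqrt{k}L/r)^k$ can be far below $1$. The same conditioning also breaks your first step: given the count $h$, the sampled occupants of $B$ form a uniformly random $h$-subset rather than independent $\mathrm{Bernoulli}(p_s)$ draws, so the binomial Chernoff model no longer applies.

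What the argument does prove is your fallback.
\begin{itemize}
\item \emph{First failure mode.} Bound it globally. The total number of sampled $r$-neighbours of $x^*$ in all of $X$ is $\mathrm{Bin}(M,p_s)$ with $M\le n^*$. It depends only on the coins, and it dominates the corresponding count in every bucket simultaneously.
\item \emph{Second failure mode.} The bad event is contained in the union, over all pairs $\{x,x'\}\subseteq X$ with $\|x-x'\|_2>r$, of the events ``both coins equal $1$ and the pair collides in all $k$ coordinates''. Each of these has probability at most $p_s^2 f(\sqrt{k}L/r)^k$.
\end{itemize}
This yields the lemma with $\binom{h}{2}$ replaced by $p_s^2$ times the number of far pairs in $X$, which is at most $p_s^2\binom{N}{2}$, and the bound holds for all buckets at once. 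The version with $\binom{h}{2}$ does not follow from this argument, either in your write-up or in the paper's.
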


\vspace{1pt}
\noindent\textbf{Distance between a centroid and $x^*$.} Conditioned event $\mathcal{B}$, we show that no bucket centroid is close to $x^*$, without DP noise. Recall that each bucket contains $h$ data points (with $h\ge\tau$), so at least $h-2p_sn^*$ data points in the bucket is at least $r$-away from $x^*$. We take any such $x_{\textit{far}}$, then $\|x_{\textit{far}}-x^*\|_2\geq r$. We define unit vector \(
    u:= \frac{x_{\textit{far}}-x^*}{\|x_{\textit{far}}-x^*\|_2}.\)
Now consider any other data point that is far from $x^*$, written as $x_{\textit{far}}'$. Let $\theta$ be the angle between $x_{\textit{far}}-x^*$ and $x'_{\textit{far}}-x^*$. Then we can obtain 
\begin{align*}
\cos \theta=\frac{\|x_{\textit{far}}-x^*\|^2_2+\|x'_{\textit{far}}-x^*\|^2_2-\|x_{\textit{far}}-x'_{\textit{far}}\|^2_2}{2\|x_{\textit{far}}-x^*\|_2\|x'_{\textit{far}}-x^*\|_2}\geq \frac{1}{2},    \end{align*} since and $\|x_{\textit{far}}-x'_{\textit{far}}\|_2\le r\le \min\{\|x_{\textit{far}}-x^*\|_2,\|x'_{\textit{far}}-x^*\|_2\}$ by construction. Therefore for any $x'_{\textit{far}}$, we have \(
u^T(x'_{\textit{far}}-x^*)= r\cos\theta  \geq\frac{r}{2}\). For any $x_{near}$ such that $\|x_{near}-x^*\|\leq r$, \(u^T(x_{near}-x^*)\ge -r\) holds automatically. 
Since there are at least $h-2p_sn^*$ far data points, the centroid of $B$, written as $\mu(B)$, must satisfy
\begin{align}
   u^T(\mu(B)-x^*)\ge \frac{h-6p_sn^*}{2h} r. 
\end{align}
Since $h\ge\tau$, \( \|\mu(B)-x^*\|_2\geq \frac{\tau-6p_sn^*}{2\tau}r.\) This means that the distance from $x^*$ to $\mu(B)$ is dominated by the majority far points.

\vspace{1pt}
\noindent\textbf{Effect of DP noise.} We consider injecting $G\sim\mathcal N(0,(\sigma^*)^2)$ into the centroid $\mu(B)$. We define $\widetilde x \;=\; \mu(B) + G$, and $R=\widetilde x-x^*$. 

\begin{align}
&\Pr[\|\widetilde x -x^*\|_2\leq r] = (2\pi(\sigma^*)^2)^{-d/2}\cdot\nonumber\\
&\qquad \int_{\|R\|_2\leq r}  \exp\big(-\frac{\|R-(\mu(B)- x^*)\|_2^2}{\sigma^2}\big)dR\label{eq:integral}
\end{align}
Let $R'$ follow the distributed Gaussian distribution of mean $\|\mu(B)- x^*\|_2 \mathbf{e}_1$ with covariance $(\sigma^*)^2 \mathbf{I}_d$. By rotation invariance,~\eqref{eq:integral} can be computed as \begin{align}\label{eq:prob-dist}
\hspace{-2mm}\textstyle\Pr\big[\|R'\|_2\leq r\big]=\Pr\Big[\|\frac{R'}{\sigma^*}\|_2\leq \frac{r^2}{(\sigma^*)^2}\Big]=F_{\chi^2_d(\Lambda)}\Big(\frac{r^2}{(\sigma^*)^2}\Big),
\end{align} 
where $F$ is the cumulative distribution function of non-central chi-squared distribution and $\Lambda=\frac{\tau-6p_sn^*}{2\tau}r$. Eq.~\ref{eq:prob-dist} is decreasing in $\Lambda$; hence, it is decreasing in $\|\mu(B)-x^*\|_2$. This in turn motivates Lemma~\ref{lem:few-in-bucket}, where have we bounded $\|\mu(B)-x^*\|_2$ from below. 

The overall failure probability follows from a union bound over Eq.~\ref{eq:few-in-bucket} and Eq.~\ref{eq:prob-dist}. Plugging in $\sigma^*=\frac{\sigma}{h}$, where $h$ is the subsampled count of items landed in the heavy hitter cluster of $x^*$ (namely, the denominator for the centroid computation), we obtain Theorem~\ref{thm:proof-semantic-all}. We note that $h$ is already released through the differentially private heavy hitter histogram perturbed with the random subsampling noises; hence, using $h$ incurs no additional privacy cost.

To convert Theorem~\ref{thm:proof-semantic-all} to a meaningful bound on the text-level semantic protection, one also needs to quantify the error during inverting the vector to a text, namely, the $\mathcal{L}_2$ distance $\|\mu(B)-E(\tilde v)\|$, where $\tilde v$ is the text inverted from centroid $\mu(B)$. This can be difficult in general. However, based on our empirical observations from Figure~\ref{fig:semantic}, the error is not large and text-level SSP is preserved.

\vspace{1pt}
\noindent\textbf{Remark on distributed implementation.} Real-world distributed implementation of securely aggregating DP noises relies on integer-valued implementations, such as
\emph{discrete Gaussian}~\cite{DiscreteGaussian,dgcanonne} (used in this paper),  or \emph{symmetric Skellam}~\cite{sk} (difference of two Poisson variables).
In both cases, for moderate noise levels or in moderately high dimension, as is the case in our problem, the distribution of the squared distance $\|\widehat \mu+\text{discrete noise}-x^*\|_2^2$
can still be approximated by a noncentral chi-squared distribution with the same degrees of freedom $d$
and the almost the same noncentrality $\lambda=r^2/v$, where $v$ is the per-coordinate variance
of the mechanism. 

Now, we consider Algorithm~\ref{alg:all}, where each user independently contributes discrete Gaussian noise $\mathcal{N}_{\mathbb{Z}}(0,\sigma_{disc}^2)$ to the discretized local vector. The sum of $h$ independent discrete Gaussian noises of variance $\sigma_{disc}^2$ is close to a single discrete Gaussian noise of variance $h\sigma_{disc}^2$. Normalizing by $\beta/h$ (here $\beta$ is the discretization parameter with $\beta\ll 1$), the overall variance in the resulting centroid is $\frac{\beta^2\sigma_{disc}^2}{h}$. Hence, the conclusion of semantic support protection for the distributed implementation of \name, Algorithm~\ref{alg:all}, is similar to what we have derived above for a continuous Gaussian, with an algebraic substituion, presented as follows.

For any $x^*\in X$ such that there are at most $t$ data points from $X$ that has distance at most $r$ to $x^*$ in $X$ and any DP centroid $\widehat \mu$ with $h$ items subsampled into the cluster, Algorithm~\ref{alg:all} outputs the centroid $\widehat \mu$ such that  $\|\widehat \mu - x^*\|_2 < \psi r$ with probability 
\begin{align}
\delta^* \lesssim
\binom{h}{2}\Big(f\big(\frac{\sqrt{k}L}{r}\big)\Big)^k + e^{-0.38p_s t} + F_{\chi^2_d(\Lambda)}\Big(\frac{\psi^2 r^2h}{\beta^2\sigma_{disc}^2}\Big).
\end{align}

\section{Extensions}\label{app:ext}

\subsubsection{Each User Has Multiple Data Points}\label{app:ext-more-datapoints}
Previously, we have considered the setting in which each user has exactly $1$ data point. Now we discuss the extension when each user has more than $1$ data points. 

\vspace{0pt}
\noindent\textbf{Naive solution via composition.} During the heavy-hitter estimation, we can introduce a public threshold to restrict the number of buckets that each user can contribute to, say $c$. For each bucket, we make each user contribute at most one of its embedding vectors. The overall privacy guarantee then follows from the composition argument, we can directly multiply the overall $\varepsilon$ and $\delta$ by $c$. Finding the optimal threshold $c$ could be an interesting future work direction~\cite{truncate1,truncate2}

\vspace{0pt}
\noindent\textbf{Naive solution via black-box group DP.} The above naive solution forbids one user to submit to the same bucket index multiple times. This may hurt utility considering that some queries (associated with certain bucket indices are more common). Let's say each user has at most $m$ data points. A straight-forward group DP argument gives $(m\cdot\varepsilon,\frac{e^{m\varepsilon}-1}{e^\varepsilon-1}\cdot\delta)$-DP, where $\varepsilon$ and $\delta$ are the original privacy parameters of the algorithm when each user has exactly one data point. When $\varepsilon$ is not close to $0$, the factor of $\frac{e^{m\varepsilon}-1}{e^\varepsilon-1}$ can be very large. 

\vspace{0pt}
\noindent\textbf{A bespoke  analysis.} Next, we discuss a more bespoke analysis. Let the group size be $l$ (originally, we have considered $l=1$). We first focus on the heavy hitter estimation phase. For now, we consider two databases consisting of copies of the same index. Dataset $X$ has $m$ copies while dataset $X'$ has $m-l$ copies. The sampled count $V$ and $V'$ for $X$ and $X'$ then follows the binomial distribution $Bin(m,p_s)$ and $Bin(m-l,p_s)$, respectively. 
\begin{align}
    \frac{\Pr[V'=v]}{\Pr[V=v]}=\frac{\binom{m}{v}p_s^v(1-p_s)^{m-v}}{\binom{m-l}{v}p_s^v(1-p_s)^{m-l-v}},
\end{align}
for any $v\in\mathbb{N}\,, v\le m-l$. We can obtain
\begin{align}\label{eq:group-dp-first-step}
    \frac{\Pr[V'=v]}{\Pr[V=v]}=(1-p_s)^l\Pi_{j=0}^{l-1}\frac{m-j}{m-v-j},
\end{align}
which collapses to $(1-p_s)^l \frac{m}{m-v}$ when $l=1$, i.e., the expression that appears in the original~\cite{sandtdp}. We want Eq.~\ref{eq:group-dp-first-step} to be bounded within $[e^{-l\cdot\varepsilon_{HH}},e^{l\cdot\varepsilon_{HH}}]$, with a high probability. The lower bound is easy, it suffices to make $1-p_s\ge e^{-\varepsilon_{HH}}$. For the upper bound, it suffices to make sure that 
\begin{align}
    (1-p_s) \frac{m-j}{m-v-j} \le e^{\varepsilon_{HH}},
\end{align}
for each $j=0,1,...,l-1$. It suffices to ensure that 
\begin{align}
    (1-p_s) \frac{m-l+1}{m-v-l+1} \le e^{\varepsilon_{HH}}
\end{align}
holds, which is equivalent to
\begin{align}
    v\le q(m-l+1),
\end{align} 
where we have set $q:=1-(1-p_s)e^{-\varepsilon_{HH}}$.
Since $v$ could take value in $0,1,...,m$, the $\delta_{l}$ can be written as
\begin{align}
    \delta_{l}=\Pr[Bin(m,p_s)\ge \max\Big(\tau, q(m-l+1)\Big)]
\end{align}
When $l=1$, $
\delta_{1}=\Pr[Bin(m,p_s)\ge \max(\tau, qm)],$ which is the same as in~\cite{sandtdp}. As a sanity check, we note that $\delta_l$ is increasing in $l$---i.e., it is more likely for the binomial variable to be lower bounded by a smaller value.

There are two cases to consider. $q(m-l+1)\le\tau$ or $q(m-l+1)\ge\tau$; equivalently, $m\le \frac{\tau}{q}+l-1$ or $m\ge \frac{\tau}{q}+l-1$

\vspace{0pt}
\noindent\textbf{Case I: $m\le \frac{\tau}{q}+l-1$.} We can upper bound $\Pr[Bin(m,p_s)\ge \tau]$ by $\Pr[Bin(\frac{\tau}{q}+l-1,p_s)\ge \tau]$.

\vspace{0pt}
\noindent\textbf{Case II: $m\ge \frac{\tau}{q}+l-1$.} In this case, we upper bound $\Pr[Bin(m,p_s)\ge qm]$. We upper bound this probability by the Chernoff bound, that is, the probability that at least $qm$ out of $m$ independent trials are successful, where each trial succeeds with probability $p_s<q$. In expectation, the number of success is $pm<qm$, therefore, the Chernoff bound error decreases as $m$ increases---when $m$ is larger, it becomes more difficult for the off-mean event to happen. Therefore, it suffices to consider the boundary case $m=\frac{\tau}{q}+l-1$, which also happens to be the worst case above. We obtain 
\begin{align}
\delta_{l}&=\Pr[Bin(\frac{\tau}{q}+l-1,p_s)\ge q\cdot (\frac{\tau}{q}+l-1)]\\
&\le \Pr[Bin(\frac{\tau}{q}+l-1,p_s)\ge \tau]\\
&\le \exp\!\left(-m_l\,D\!\left(\frac{\tau}{m_l}\Big\|p_s\right)\right)\label{eq:delta_l},
\end{align}
with $m_l=\frac{\tau}{q}+l-1$, and $D$ stands for the KL divergence between two Bernoulli trials. The first inequality holds since $q(l-1)\ge 0$ (this formalization unifies \textbf{Case II} and \textbf{Case I}); the second inequality is obtained from the Markov's inequality and then solving for the optimal order, similar to~\cite{sandtdp}. 

In summary, when two datasets differ by $l$ copies of the same index, the overall $\varepsilon_{HH}$ for heavy hitter estimation multiplies by $l$ while the overall $\delta_{HH}$ can be computed via the union bound:
\begin{align}
 \delta_{HH} =   \max\sum_{\{\Delta_j\}_{j=1}^l} \delta_j,
\end{align}
where $\delta_j$ is computed via Eq.~\ref{eq:delta_l}, and the maximum is taken over all possible sets of $\{\Delta_j\}_{j=1}^l$ such that each $\Delta_j\in\mathbb{N}$ and that $\sum_{j=1}^l \Delta_j=l$. In particular, if all data points of a user are the same, then exactly one $\Delta_j=l$, and the rest are $0$.

Next, we discuss the privacy guarantee for the DP centroid computation. When all $l$ data points are in one cluster, the sensitivity multiplies by $l$, so does the failure probability (since each data in the bucket point may fail to stay close to the different data point in the same bucket). We can apply Lemma~\ref{lem:high-prob-sens} and Theorem~\ref{thm:main-gaussian} with the sensitivity multiplied by $l$ to obtain $\varepsilon_{agg,l}$ and $\delta_{agg,l}$ for any group size $l$.

The overall parameters are expressed as 
\begin{align}
 \varepsilon_{agg} &=   \max\sum_{\{\Delta_j\}_{j=1}^l} \varepsilon_{agg,j}\,,\\
  \delta_{agg} &=   \max\sum_{\{\Delta_j\}_{j=1}^l} \delta_{agg,j}
\end{align}
where the maximum is taken over all possible sets of $\{\Delta_j\}_{j=1}^l$ such that each $\Delta_j\in\mathbb{N}$ and that $\sum_{j=1}^l \Delta_j=l$. In particular, if all data points of a user are the same, then exactly one $\Delta_j=l$, and the rest are $0$.

\subsubsection{Bounded Differential Privacy}\label{app:ext-bounded-dp}
Our analysis in Section~\ref{sec:privacy} focuses on the unbounded DP model~\cite{icalpdp}, where two datasets are called neighboring if one can be obtained from the other by adding or removing exactly one record in the dataset. Now we discuss the privacy guarantee under the bounded DP model~\cite{dmns}, where two datasets are called neighboring if one can be obtained from the other by altering exactly one record in the dataset (this also captures the scenario when one user's record is zeroed out). We provide the full argument for completeness. 

We say two datasets $D,D'\in\mathcal{D}$ are \emph{unbounded-adjacent}, denoted $D \sim_u D'$, if $D'$ can be obtained from $D$ by adding or removing exactly one record; i.e., there exists $x\in\mathcal{X}$ such that either
$D' = D \cup \{x\}$ or $D = D' \cup \{x\}$. Accordingly, we say two datasets $D,D'\in\mathcal{D}$ are \emph{bounded-adjacent}, denoted $D \sim_b D'$, if they have the same size and differ in exactly one record.
Equivalently, there exist a dataset $C$ and records $x,x'\in\mathcal{X}$ such that
\(
D = C \cup \{x\}, D' = C \cup \{x'\}.\)

\begin{lem}[Unbounded DP to bounded DP parameter conversion~\cite{dpbook}]
If a mechanism $M$ is $(\varepsilon,\delta)$-DP with respect to unbounded adjacency $\sim_u$,
then $M$ is $\bigl(2\varepsilon,\,(1+e^{\varepsilon})\delta\bigr)$-DP with respect to bounded adjacency $\sim_b$.
\end{lem}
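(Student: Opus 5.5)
The plan is to pass through the unbounded neighborhood by building a chain of datasets. Take bounded-adjacent $D=C\cup\{x\}$ and $D'=C\cup\{x'\}$. The intermediate dataset $C$ is unbounded-adjacent to both: $D\sim_u C$ by removing $x$, and $C\sim_u D'$ by adding $x'$. The whole argument is therefore a two-step group-privacy chain along $D\to C\to D'$, using the unbounded $(\varepsilon,\delta)$-DP guarantee of Definition~\ref{def:dp} at each step.

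Fix any measurable $\mathcal{O}\subseteq\mathrm{range}(M)$.
\begin{itemize}
\item \textbf{Step $D\to C$.} Since $D\sim_u C$, we have $\Pr[M(D)\in\mathcal{O}]\le e^{\varepsilon}\Pr[M(C)\in\mathcal{O}]+\delta$.
\item \textbf{Step $C\to D'$.} Since $C\sim_u D'$ (the definition quantifies over both directions of adding or deleting), we have $\Pr[M(C)\in\mathcal{O}]\le e^{\varepsilon}\Pr[M(D')\in\mathcal{O}]+\delta$.
\end{itemize}
Substituting the second bound into the first gives
\begin{align*}
\Pr[M(D)\in\mathcal{O}]\le e^{2\varepsilon}\Pr[M(D')\in\mathcal{O}]+e^{\varepsilon}\delta+\delta .
\end{align*}
This is exactly $\bigl(2\varepsilon,(1+e^{\varepsilon})\delta\bigr)$-DP. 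The reverse inequality, with the roles of $D$ and $D'$ swapped, follows from the symmetric chain $D'\to C\to D$.

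No step is genuinely hard. The only points needing care are two checks:
\begin{itemize}
\item the inequality in Definition~\ref{def:dp} must hold in both directions for unbounded neighbors, which the paper's phrasing ``adding or deleting'' supplies;
\item the intermediate dataset $C$ must be a legitimate input to $M$. For the algorithms in this paper, any finite multiset of texts is a valid input, so this holds.
\end{itemize}
If one-sidedness were an issue, I would apply the definition separately to each ordered pair.
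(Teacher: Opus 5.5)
Your proposal is correct and follows essentially the same route as the paper: pass through the intermediate dataset $C$, apply the unbounded $(\varepsilon,\delta)$-DP guarantee to the pairs $(D,C)$ and $(C,D')$, and substitute one bound into the other to obtain $e^{2\varepsilon}$ and $(1+e^{\varepsilon})\delta$. Your remarks on the reverse direction and on $C$ being a valid input are sound additions that the paper leaves implicit.
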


\begin{proof}
Assume $M$ is $(\varepsilon,\delta)$-DP w.r.t.\ $\sim_u$.
Fix any pair $D \sim_b D'$. By the definition of bounded adjacency, there exist a dataset $C$ and records $x,x'\in\mathcal{X}$
such that $D = C\cup\{x\}$ and $D' = C\cup\{x'\}$. Note that $D \sim_u C$ (remove $x$) and $C \sim_u D'$ (add $x'$).
Let $S\subseteq \mathcal{Y}$ be any measurable set. Applying $(\varepsilon,\delta)$-DP w.r.t.\ $\sim_u$ to the pair $(D,C)$ gives
\begin{equation}
\Pr[M(D)\in S] \;\le\; e^{\varepsilon}\Pr[M(C)\in S] + \delta. \label{eq:step1}
\end{equation}
Applying $(\varepsilon,\delta)$-DP w.r.t.\ $\sim_u$ to the pair $(C,D')$ gives
\begin{equation}
\Pr[M(C)\in S] \;\le\; e^{\varepsilon}\Pr[M(D')\in S] + \delta. \label{eq:step2}
\end{equation}
Substituting Eq.~\ref{eq:step2} into Eq.~\ref{eq:step1} yields
\begin{align}
\Pr[M(D)\in S]
\;&\le\;
e^{\varepsilon}\bigl(e^{\varepsilon}\Pr[M(D')\in S] + \delta\bigr) + \delta\\
&=\;
e^{2\varepsilon}\Pr[M(D')\in S] + (e^{\varepsilon}+1)\delta.
\end{align}
This is exactly $(2\varepsilon,(1+e^{\varepsilon})\delta)$-DP w.r.t.\ bounded adjacency.
\end{proof}
Note that the above conversion is completely black box. As a result, we can also plug in the privacy parameters obtained from Section~\ref{app:ext-more-datapoints} and obtain the corresponding DP guarantees under bounded adjacency.

\subsubsection{Other Threat Models}\label{app:other-threat-models}
\noindent\textbf{When the two servers are malicious during heavy hitter estimation}, the tagging server needs to provide a zero-knowledge proof on not altering the protocol, as is done in~\cite{nebula}. \textbf{When users who submit the vectors are malicious in the aggregation phase~\cite{secagg}}, or equivalently, malicious users (say, there are $n_\text{mal}$ of them) may submit indices computed from certain embedding vectors of interest, attempting to learn whether there exist other users with similar text embeddings (even such embeddings may not be frequent among the user population), we can make a similar argument. From the perspective of the targeted users, this act weakens their privacy protection by enforcing a smaller cut-off threshold of heavy-hitter identification---decreasing from $\tau$ to $\tau-n_\text{mal}$ (see~\eqref{eq:dp-fre-delta}). This can be resolved by using a MPC protocol to ensure that the random submission (determined by a random coin) is strictly executed. In this case, the malicious users' data serve as dummies to enhance the privacy protection. We leave further measures to defend against such malicious acts to future work. MPC protocols specifically designed for DP~\cite{fu2024benchmarking,olive2025,pine,finitedpmpc} or relaxed privacy frameworks~\cite{talwar2022differentialsecrecydistributeddata} may also be useful.

\section{Simulation-Based Security Sketch}
\label{app:semi-honest-security}

We provide a simulation-based security sketch for the distributed
implementation in Algorithm~\ref{alg:all}. This argument is not intended as a new
cryptographic primitive; rather, it shows that, under the same assumptions
as Section~\ref{sec:prob} and using the cited primitives as black boxes, the distributed
protocol reveals no information beyond the explicitly released differentially
private summary and the unavoidable metadata described below.

\vspace{0pt}\noindent\textbf{Adversarial model.}
Let \Ssyn denote the synthesis server and \Stag denote the tagging
server. We consider static probabilistic poly\-nomial-time semi-honest
adversaries: corrupted parties follow the protocol but may try to infer
additional information from their views. We assume \Ssyn and \Stag do
not collude. User-to-server submissions are sent through an anonymous
channel, modeled as an ideal shuffler that reveals only the multiset of
messages delivered to the recipient and not the sender identities or links
between a user's messages in different phases. We consider corruption of
\Ssyn, corruption of \Stag, or corruption of a subset of users. If one also
allows corrupted users to collude with \Ssyn, then the ideal leakage must
include the multiplicity of the VOPRF tags known to those corrupted users;
our main statement below excludes such user--server collusion.

\vspace{0pt}\noindent\textbf{Ideal functionality.}
Define an ideal functionality
\(\mathcal F_{\mathsf{D\text{-}FreE2T}}\) parameterized by the public
parameters
\[
(E,Z,L,\{o_j\}_{j=1}^k,p_s,\tau,\lambda,\gamma,\sigma,\beta).
\]
On inputs \(v_1,\ldots,v_N\), the functionality computes
\(x_i=E(v_i)\), the projected bucket index \(b_i\), the sampled participation
coin \(\mathsf{coin}_i\sim\mathrm{Bernoulli}(p_s)\), and an opaque tag
\(h_i\) for each bucket. Equal buckets receive equal tags, and distinct
buckets receive computationally independent tags. The functionality then
runs the same sample-and-threshold heavy-hitter step, including the dummy
submissions drawn from \(\mathsf{TSDLap}(\lambda,\gamma)\), and outputs the
heavy-hitter tag set \(H\) and counts \(C=\{c_h\}_{h\in H}\). For each
\(h\in H\), it computes the quantized, locally discrete-Gaussian-perturbed
sum
\[
T_h=\sum_{i:\,h_i=h,\,\mathsf{coin}_i=1}
       \bigl(Q_\beta(x_i)+\xi_i\bigr),
\]
where the coordinates of \(\xi_i\) are sampled independently from the
discrete Gaussian distribution used in Algorithm~2. It outputs the released
centroid
\[
\widehat \mu_h = \frac{\beta}{c_h} T_h .
\]

The public output is
\[
\mathsf{Out}=(H,C,\{\widehat\mu_h\}_{h\in H}).
\]
The synthesis server's ideal leakage additionally contains the
subthreshold count-of-counts
\[
\mathcal L_A^{<\tau}=\{m_q\}_{q=1}^{\tau-1},
\]
where \(m_q\) is the number of received non-heavy opaque tags with
multiplicity \(q\), after dummy insertion. This captures exactly the
information observed by \Ssyn from non-released pseudonymous tags; the
individual tag names themselves carry no semantic information. The
tagging server's ideal leakage consists of the number of VOPRF
invocations, the public output, and the public heavy-tag oracle induced by
its VOPRF key. This oracle captures the fact that a tagging server holding
the VOPRF key can test guessed bucket indices against the public released
heavy tags, but it still does not learn which users produced those tags.

\vspace{0pt}\noindent\textbf{Security statement.}
Assume that (i) the VOPRF realizes its standard oblivious and verifiable PRF
functionality, (ii) the anonymous channel realizes an ideal shuffler, and
(iii) the secure aggregation protocol realizes an ideal tagged-sum
functionality that reveals only the aggregate sum for each released tag.
Then, for every PPT semi-honest adversary \(\mathcal A\) corrupting either
\Ssyn, \Stag, or an arbitrary subset of users, there exists a PPT
simulator \(\mathsf{Sim}\) such that
\[
\mathsf{View}^{\Pi}_{\mathcal A}(v_1,\ldots,v_N)
\;\approx_c\;
\mathsf{Sim}\bigl(\mathsf{aux}_{\mathcal A},
                  \mathsf{Out},
                  \mathcal L_{\mathcal A}\bigr),
\]
where \(\Pi\) is Algorithm~2, \(\mathsf{aux}_{\mathcal A}\) contains the
inputs and randomness of corrupted users, and
\(\mathcal L_{\mathcal A}\) is the ideal leakage defined above for the
corrupted party type. The indistinguishability is computational, with
negligible distinguishing advantage in the cryptographic security parameter.

\vspace{0pt}\noindent\textbf{Proof sketch.}
We prove the statement by a sequence of standard hybrids.

\emph{Hybrid 0: the real execution.}
This is the execution of Algorithm~2. Each user locally computes
\(x_i=E(v_i)\), projects and buckets it to obtain \(b_i\), obtains a VOPRF
tag \(h_i\), submits \(h_i\) with probability \(p_s\), and, if \(h_i\in H\),
secret-shares a locally perturbed vector tagged by \(h_i\).

\emph{Hybrid 1: replace VOPRF by its ideal functionality.}
Replace the real VOPRF interaction between each user and \Stag with an
ideal VOPRF call. By the VOPRF security guarantee, \Stag's view of each
call can be simulated from the number of invocations: the blinded inputs are
indistinguishable from random protocol messages and reveal neither \(b_i\)
nor the output \(h_i\). Users with the same bucket obtain the same PRF tag,
whereas the tags for distinct buckets are computationally indistinguishable
from independent random strings to any party not holding the VOPRF key.
Verifiability guarantees that honest users accept only tags generated under
the committed VOPRF key. Therefore, this replacement changes the
adversary's view only negligibly.

\emph{Hybrid 2: replace user submissions by an ideal shuffler.}
Replace each user-to-server communication channel by an ideal anonymous
channel. The recipient learns only the multiset of submitted tags or tagged
shares, not the identities of the senders and not whether two messages in
different phases came from the same user. Thus, the synthesis server's
heavy-hitter-phase view is completely determined by the multiset
\[
M_{\mathsf{HH}}
=
\{h_i:\mathsf{coin}_i=1\}\cup \mathsf{Dummy},
\]
and the centroid-phase view is determined by the multiset of tagged shares
for tags in \(H\). This is exactly the metadata included in
\(\mathcal L_A^{<\tau}\), together with the released heavy tags \(H\) and
counts \(C\).

\emph{Hybrid 3: simulate non-heavy pseudonymous tags.}
For tags whose multiplicity is below \(\tau\), \Ssyn sees only opaque
pseudorandom strings and their multiplicities. Since the VOPRF tags for
unknown bucket indices are pseudorandom, the simulator samples fresh random
strings and assigns exactly \(m_q\) of them multiplicity \(q\), for each
\(q\in\{1,\ldots,\tau-1\}\). This produces the same distribution as the real
view up to the PRF distinguishing advantage. The dummy submissions are
accounted for in the same count-of-counts leakage. The DP analysis in
Section~\ref{sec:privacy-ddg} then bounds the information contained in
\(\mathcal L_A^{<\tau}\).

\emph{Hybrid 4: replace secure aggregation by ideal tagged sums.}
For each heavy tag \(h\in H\), let \(G_h\) be the set of sampled users with
tag \(h\). Each such user contributes an integer vector
\[
\widetilde x_i = Q_\beta(x_i)+\xi_i
\]
and sends additive secret shares of \(\widetilde x_i\) to the two
non-colluding servers. Consider a corrupted \Ssyn. The share of every
honest user's vector received by \Ssyn is uniformly random over the
aggregation group, independent of \(\widetilde x_i\). Given the ideal output
\(T_h=\sum_{i\in G_h}\widetilde x_i\), the simulator samples all shares
received by \Ssyn uniformly at random and programs the aggregate share sent
from \Stag to make the reconstructed sum equal \(T_h\). This is identically
distributed to the real execution because additive secret sharing is
perfectly private against either single server. The case of a corrupted
\Stag is symmetric. If an off-the-shelf SecAgg protocol is used instead of
two-server additive sharing, the same step follows from the simulation
security of the SecAgg functionality.

\emph{Hybrid 5: simulate users' views.}
A corrupted user sees only its own input, local randomness, VOPRF response,
participation coin, local noise, generated shares, the broadcast
\((H,C)\), and the public output. The simulator can reproduce this view
directly from the corrupted user's input and randomness plus
\(\mathsf{Out}\). Because users receive no other users' messages, no
additional simulation is required.

After these replacements, the transcript is generated entirely from the
corrupted parties' inputs and randomness, the public output
\(\mathsf{Out}\), and the leakage \(\mathcal L_{\mathcal A}\). Hence the
real and ideal executions are computationally indistinguishable.

The simulation argument shows that the cryptographic implementation does not
leak raw texts, bucket indices, or individual embeddings beyond the ideal
released objects above. Therefore the end-to-end privacy of the distributed
protocol reduces to the privacy of the ideal release. By Lemma~\ref{lem:hh-all} and
Theorem~\ref{thm:main-dgg}, the distributed heavy-hitter step and the distributed discrete
Gaussian centroid step satisfy the DP parameters analyzed in Section~\ref{sec:privacy-ddg}.
Composing these components and adding the high-probability sensitivity term yields Theorem~\ref{thm:dp-final}. 
\balance

\end{document}